\title{Analysis of logics with arithmetic} 
\author{Michael Benedikt}{University of Oxford, UK }{}{}{}
\author{Chia-Hsuan Lu} {University of Oxford, UK}{}{}{}
\author{Tony Tan} {University of Liverpool, UK}{}{}{}
\newcommand\notsotiny{\@setfontsize\notsotiny{8}{7}}
\newcommand{\transs}[1] {\xrightarrow{ {#1} }}
\newcommand{\myparagraph}[1]{\medskip\noindent \textbf{#1.}}
\newcommand{\cA}{\mathcal{A}}
\newcommand{\cB}{\mathcal{B}}
\newcommand{\cC}{\mathcal{C}}
\newcommand{\cCtilde}{\tilde{\mathcal{C}}}
\newcommand{\cD}{\mathcal{D}}
\newcommand{\cG}{\mathcal{G}}
\newcommand{\cH}{\mathcal{H}}
\newcommand{\cK}{\mathsf{TwoTps}}
\newcommand{\cM}{\mathcal{M}}
\newcommand{\cO}{\mathcal{O}}
\newcommand{\cQ}{\mathcal{Q}}
\newcommand{\cQtilde}{\tilde{\mathcal{Q}}}
\newcommand{\cS}{\mathcal{S}}
\newcommand{\bbN}{\mathbb{N}}
\newcommand{\bbB}{\mathbb{B}}
\newcommand{\bbZ}{\mathbb{Z}}
\newcommand{\bfA}{\mathbf{A}}
\newcommand{\bfAtilde}{\tilde{\mathbf{A}}}
\newcommand{\bfB}{\mathbf{B}}
\newcommand{\bfI}{\mathbf{I}}
\newcommand{\bfa}{\mathbf{a}}
\newcommand{\bfb}{\mathbf{b}}
\newcommand{\bfc}{\mathbf{c}}
\newcommand{\bfe}{\mathbf{e}}
\newcommand{\bff}{\mathbf{f}}
\newcommand{\bfg}{\mathbf{g}}
\newcommand{\bfk}{\mathbf{k}}
\newcommand{\bfu}{\mathbf{u}}
\newcommand{\bfv}{\mathbf{v}}
\newcommand{\bfx}{\mathbf{x}}
\newcommand{\bfy}{\mathbf{y}}
\newcommand{\bfzero}{\mathbf{0}}
\newcommand{\bfone}{\mathbf{1}}
\newcommand{\vocab}{\sigma}
\newcommand{\nexp}  {\mathsf{NEXP}}
\newcommand{\mexp}  {\mathsf{EXP}}
\newcommand{\expp}  {\mathsf{EXP}}
\newcommand{\twoexpp}  {\mathsf{2EXP}}
\newcommand{\exptime}{\mexp}
\newcommand{\nexptime}  {\mathsf{NEXP}}
\newcommand{\threenexp}  {\mathsf{3NEXP}}
\newcommand{\Ctwo}    {\mathsf{C^2}}
\newcommand{\Ctwoplusglobal} { \Ctwo^{+g}}
\newcommand{\onetypeof}{\mathsf{OneTp}}
\newcommand{\twotypeof}{\mathsf{TwoTp}}
\newcommand{\twotypes}{\mathsf{TwoTps}}
\newcommand{\onetypes}{\mathsf{OneTps}}
\newcommand{\Onetypes}{\mathsf{OneTps}}
\newcommand{\Ptwo}    {\mathsf{P^2}}
\newcommand{\GCtwo}   {\mathsf{GC^2}}
\newcommand{\GPtwo}   {\mathsf{GP^2}}
\newcommand{\presby}[1]  {{\#_y\left[ #1 \right]}}
\newcommand{\set}[1]         {\left\{ #1 \right\}}
\newcommand{\setc}[2]        {\left\{ #1 \middle|\  #2 \right\}}
\newcommand{\tuple}[1]       {\left\langle #1 \right\rangle}
\newcommand{\intsinterval}[1]{{\left[ #1 \right]}} 
\newcommand{\intinterval}[2] {{\left[ #1 ,\ #2 \right]}}
\newcommand{\abs}[1]         {{\left| #1 \right|}}
\newcommand{\norm}[1]        {\left\Vert #1 \right\Vert}
\newcommand{\projp}[2]     {\mathsf{proj}_{ #1 }\left( #2 \right)}
\newcommand{\dual}[1] {\widetilde{ #1 }}
\newcommand{\etanull} {\eta_0}
\newcommand{\bh}[1] {\mathsf{bh}\left( #1 \right)}
\newcommand{\kstar}[1] {\mathsf{KleeneStar}\left( #1 \right)}
\newcommand{\parikh}[1] {\mathsf{Parikh}\left( #1 \right)}
\newcommand{\mpath}[1] {\mathsf{Path}\left( #1 \right)}
\newcommand{\sol}[1] {\left\llbracket #1 \right\rrbracket}
\newcommand{\solmin}[1] {\left\llbracket #1 \right\rrbracket_\text{min}}
\newcommand{\vle} {\preccurlyeq}
\newcommand{\unit}[1] {\bfe_{ #1 }}
\newcommand{\bigO}[1] {{\cO\left( #1 \right)}}
\newcommand{\maxlambda} {\widetilde{\lambda}}
\newcommand{\trif} {\triangleright}
\newcommand{\trib} {\triangleleft}
\newcommand{\cKvpnull}  {\cK^{\varphi, \mathsf{silent}}}
\newcommand{\cKvpf}     {\cK^{\varphi, \trif}}
\newcommand{\cKvpb}     {\cK^{\varphi, \trib}}
\newcommand{\cKvpfb}    {\cK^{\varphi, \trif\trib}}
\newcommand{\spec} {\mathsf{SPEC}}
\newcommand{\pispec} {\Pi\text{-}\mathsf{SPEC}}
\newcommand{\summ}  {\mathsf{SUM}}
\newcommand{\comp}  {\mathsf{COMP}}
\newcommand{\silent}  {\mathsf{SILENT}}
\newcommand{\bige}  {\mathsf{BIG}}
\newcommand{\match} {\mathsf{MATCH}}
\authorrunning{Benedikt, Lu, and Tan} 
\keywords{Logic} 
\begin{document}

\maketitle
\begin{abstract}
We present new results on finite satisfiability of logics with counting and arithmetic.
One result is a tight bound on the complexity of satisfiability of logics with so-called local Presburger quantifiers, which sum over neighbors of a node in a graph. A second contribution concerns computing a semilinear representation of the cardinalities associated with a formula in two variable logic extended with counting quantifiers. Such a representation allows you to get bounds not only on satisfiability for these logics, but for satisfiability in the presence of additional ``global cardinality constraints'': restrictions on cardinalities of unary formulas, expressed using arbitrary decidability logics over arithmetic.
In the process, we provide simpler proofs of some key prior  results on finite satisfiability and semi-linearity of the spectrum for these  logics.
\end{abstract}

\section{Introduction}\label{sec:intro}
This paper concerns the  problem of determining whether a formula in some logic $\mathcal{L}$ is satisfied in some finite structure: the \emph{finite satisfiability problem for} $\mathcal{L}$. For the fragments of interest to us, the decidability of this problem will imply decidability of many other static analysis problems for the logic, such as validity of a sentence over finite structures and equivalence of two sentences over finite structures.
The problem is known
to be undecidable for first order logic~\cite{traktenbrot}. 
In the last decades two paradigms for decidability emerged, one based on restricting to \emph{guarded quantification}, and another restricting to \emph{two variable fragments}.
The two variable fragment of first order logic was shown to have decidable finite satisfiability problem and this was extended to allow ``counting quantifiers'',  of the form $\exists^{\ge k} y ~ \phi(x,y)$, where $k$ is fixed. The complexity of the satisfiability problem for this logic, normally denoted $\Ctwo$, over finite structures, as well as the complexity of satisfiability when considering arbitrary structures, was isolated in~\cite{ctwocomplexity}. The same problem for the sublogic where quantification is guarded was shown 
to have slightly lower complexity~\cite{gctwocomplexity}.

In recent years there has been interest in adding more powerful arithmetic capabilities to these decidable logics. Instead of just asking whether the number of elements satisfying a property is above or below a constant, 
we can ask whether the cardinality is even, or whether 
the cardinality of several properties satisfies a linear inequality. 
\emph{Presburger logic} focuses on linear inequalities between properties, 
and \emph{local Presburger logic}, $\GPtwo$, restricts to properties that involve neighbors of a given element. 
These logics are most naturally applied to finite models, to avoid dealing with addition or parity of infinity: 
thus in this work, \emph{finite satisfiability will always be our default}, and 
we just refer to ``satisfiability'' henceforward. 
In~\cite{usicalp} it was shown that local Presburger logics are in some sense equally expressive as 
certain variations of Graph Neural Networks (GNNs), and 
GNN verification problems can be reduced to satisfiability problems for these logics.
It has been shown that satisfiability of Presburger logic is undecidable, while satisfiability of local Presburger logic is decidable~\cite{percentage}: but the exact complexity of local Presburger logic had not been determined. Another way of enhancing logic with arithmetic involves \emph{global cardinality constraints}. 
In such an enhancement, we can write a constraint that defines a constraint (e.g., a linear inequality) 
between one variable formulas. But we cannot have such a constraint parameterized by a second free variable.  For example, we can write that unary predicate $U$ has cardinality below unary predicate $V$. But if $E$ and $F$ are two binary predicates, we cannot express that the elements that are not $E$ adjacent to $x$ have greater cardinality than the elements that are not $F$ adjacent to $x$.

As with local Presburger logic, it follows from prior results  that $\Ctwo$ enhanced with global cardinality constraints has decidable (finite) satisfiability problem.
The exact complexity is not explicitly studied in the literature,  although in the case of global cardinality constraints consisting of linear inequalities, tight bounds follow from results of Rudolph in~\cite{rudolphcardinality}.    A natural approach to proving decidability for logics
enhanced with global constraints follows is to show that the \emph{the spectrum of each tuple of unary formulas of the logic is effectively semilinear}. The spectrum of a tuple of unary formulas $\phi_1(x) \ldots \phi_k(x)$ is the set of possible values that the tuple of cardinalities of the satisfiers of the formula can have within a finite model. For example, for the pair of formulas $\phi_1(x)=U(x), \phi_2(x)= U(x) \wedge V(x)$ the spectrum consists of all pairs $(m, n)$ with $n \leq m$. The spectrum being effectively semilinear means that we can find a quantifier-free formula of arithmetic that defines it. From effective semilinearity of the spectrum, we easily obtain decidability with additional global constraints.
Here we provide  complexity bounds for several of these problems, one featuring  arithmetic over counts of neighbors, another concerning computing the spectrum of formulas. 
See Table~\ref{tab:results} for the summary of our new upper bounds.
At the same time, we provide simpler proofs of two fundamental results in the area: decidability of (finite) satisfiability for $\GPtwo$ (see Section~\ref{sec:gptwo}), and effective semi-linearity of the spectrum for $\Ctwo$ (Section~\ref{sec:ctwo}). We complement these positive results with a counterexample to semilinearity of spectra for $\GPtwo$ unary formulas (Section~\ref{sec:gptwocounter}).

\section{Preliminaries} \label{sec:prelims}

\myparagraph{Basic notions} A \emph{graph} will always mean a finite directed graph without self-loops.
A \emph{colored graph} is a graph where nodes have a unique node label, and every pair of distinct nodes is connected by a unique edge (in one direction or the other), with node and edge vocabularies distinct.
A \emph{multigraph} is a finite directed graph without self-loops that allows multiple edges (including no edge) between a pair of distinct nodes.
A \emph{colored multigraph} is defined similarly, but requiring each node to have a unique label.
The multiplicity of a multigraph is defined as the maximum edge multiplicity within it.

The $i^{th}$ unit vector, denoted by $\unit{i}$,
has $i^{th}$ entry $1$, and other entries are $0$.
We define $\bfzero$ as the vector where all entries are $0$,
and $\bfone$ as the vector that all entries are $1$.
For  $\bfv, \bfu \in \bbN^n$,
$\bfv \vle \bfu$ if, for $1 \le i \le n$, $\bfv_i \le \bfu_i$
Let $\projp{n}{\bfv}$ denote the projection operator that maps $\bfv$ to its first $n$ entries.
Let $\norm{\bfA}$ and $\norm{\bfv}$ denote the maximal absolute values of the entries in $\bfA$ and $\bfv$, respectively.

\myparagraph{Linear algebra and Kleene star}
An \emph{integer linear constraint} is an inequality $\sum_{i} a_i \cdot x_i \leq c$
involving integer coefficients $a_i$, non-negative integer variables $x_i$, and an integer constant $c$.
The set of solutions of such a constraint is called a \emph{modulus-free semilinear set}, while
if we include equations $x = c \mod p$ for integers $c, p>0$ it is a \emph{semilinear set}.
A constraint is \emph{homogeneous} if $c = 0$.

For an integer linear system $\cQ(\bfx): \bfA \bfx = \bfc$,
we denote by $\sol{\cQ(\bfx)}$ the set of solutions in $\bbN$,
i.e., $\sol{\cQ(\bfx)} := \setc{\bfv \in \bbN^n}{\bfA\bfv = \bfc}$, and similarly for a Boolean combination of linear systems.
Note that if $\cQ(\bfx)$ is feasible,
then it has a solution in which the number of non-zero assignments is bounded, and the values of the assignments are also bounded.

\begin{lemma}[\cite{papa}, \cite{caratheodory-integer}, also see Corollary 2.2 in~\cite{twovarpres}]\label{lem:ilp}
    There are constants $c_1, c_2 \in \bbN$ such that for every integer linear system $\cQ(\bfx)$,
    if $\cQ(\bfx)$ admits a solution in $\bbN$, then it admits a solution in $\bbN$ in which
    the number of variables assigned non-zero values is at most $c_1t\log\left(c_2tM\right)$ and
    every variable is assigned a value bounded by $c_1t(tM)^{c_2t}$,
    where $t$ is the number of constraints in $\cQ(\bfx)$ and $M$ is the maximal absolute value of the coefficients in $\cQ(\bfx)$.
\end{lemma}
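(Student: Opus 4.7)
The plan is to combine two classical ingredients: an integer Carathéodory-type argument to bound the support, and a basic-feasible-solution argument (in the style of Papadimitriou) to bound the entry values.

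\emph{Step 1 (support reduction).} Start from an arbitrary solution $\bfv \in \bbN^n$ of $\cQ(\bfx): \bfA \bfx = \bfc$, and iteratively reduce its support. Suppose the support $S$ has size $s$ larger than $c_1 t \log(c_2 t M)$. Consider the $2^{s}$ column-sums $\sum_{i \in I} \bfA_i$ for $I \subseteq S$. Each such sum lies in a box in $\bbZ^t$ of side at most $2 M s + 1$, and this box has size polynomial in $tMs$. For the right choice of the constants $c_1, c_2$, pigeonhole forces two distinct subsets $I \ne J$ of $S$ to yield the same column sum. After removing $I \cap J$, we obtain disjoint $I', J' \subseteq S$ with $\sum_{i \in I'} \bfA_i = \sum_{j \in J'} \bfA_j$. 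Then $\bfv' := \bfv - \mathbf{1}_{I'} + \mathbf{1}_{J'}$ is again a nonnegative integer solution: nonnegativity holds because $I' \subseteq S = \mathrm{supp}(\bfv)$, and $\bfA \bfv' = \bfA \bfv = \bfc$ by construction. Since $I'$ is nonempty, the support strictly shrinks (some coordinate in $I'$ may drop to zero). Iterating brings the support down to at most $c_1 t \log(c_2 t M)$.

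\emph{Step 2 (value bound).} Restrict $\cQ$ to the columns indexed by the reduced support, obtaining an integer system $\cQ'$ with $t$ equations and at most $s \le c_1 t \log(c_2 t M)$ variables, still with coefficient bound $M$, and known to be feasible over $\bbN$. Now invoke Papadimitriou's classical bound: any feasible integer linear system with $t$ constraints and integer coefficients of absolute value at most $M$ admits an integer solution whose entries are bounded by $t (tM)^{c_2 t}$. This is proved by passing to a basic feasible solution of a rational relaxation, applying Cramer's rule to bound the denominators and numerators of its vertex entries by $t \times t$ determinants, and then rounding / shifting within the recession cone while preserving the constraints.

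\emph{Expected difficulty.} The two ingredients are individually standard, and the cited references already carry them out in essentially this form; the only delicate point is making sure they compose. Concretely, one must choose $I'$ inside the current support so that the replacement step preserves nonnegativity, and one must check that restricting to the supported columns does not change $t$ or $M$, so that Papadimitriou's bound applies with the same parameters. A minor subtlety is tracking how the absolute constants $c_1, c_2$ from the two steps merge into a single pair of constants in the final statement; this is cosmetic rather than substantive.
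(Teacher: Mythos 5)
The paper does not prove Lemma~\ref{lem:ilp}; it is imported verbatim from the cited references (\cite{papa}, \cite{caratheodory-integer}, Corollary~2.2 of~\cite{twovarpres}). So there is no ``paper's own proof'' to compare against, and the question is only whether your argument is sound.

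Your two-ingredient plan (an Eisenbrand--Shmonin-style integer Carathéodory bound for the support, followed by Papadimitriou's vertex bound for the values) is exactly the standard route and, modulo constant bookkeeping, does compose correctly: Step~2 is applied to the restricted system on the reduced support, which still has $t$ constraints and coefficient bound $M$, so its conclusion transfers unchanged.

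However, Step~1 as written has a real gap. You claim that $\bfv' := \bfv - \mathbf{1}_{I'} + \mathbf{1}_{J'}$ strictly shrinks the support, but a single unit subtraction does not do that unless some $v_i = 1$ for $i \in I'$; in general all entries on $I'$ stay positive and the support is unchanged, so the iteration need not terminate. Worse, adding $\mathbf{1}_{J'}$ at each pass lets the entries grow, so the loop could run forever without progress. The standard fix is to subtract (and add) the maximal admissible multiplier: set $k := \min_{i \in I'} v_i$ and take $\bfv' := \bfv - k\,\mathbf{1}_{I'} + k\,\mathbf{1}_{J'}$, which forces some coordinate in $I'$ to zero and hence strictly shrinks the support while preserving feasibility and nonnegativity. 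You also silently assume $I' \neq \emptyset$; it may happen that $I' = \emptyset$ and $J' \neq \emptyset$, in which case $\sum_{j \in J'} \bfA_j = \bfzero$ and one should instead set $\bfv' := \bfv - k\,\mathbf{1}_{J'}$ with $k := \min_{j \in J'} v_j$, again zeroing out a coordinate. With these two corrections your Step~1 is airtight, and the remaining issue you flag (merging the constants from the two steps) is indeed only cosmetic.
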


\begin{definition}[Kleene star]
    For a set $\cS \subseteq \bbN^n$,
    the \emph{Kleene star} of $\cS$ is defined as
    \begin{equation*}
        \kstar{\cS}\ :=\ \setc{\sum_{s \in \cS'} s}{\text{$\cS'$ is a finite multisubset of $\cS$}}.
    \end{equation*}
    Note that $\bfzero \in \kstar{\cS}$ for any $\cS$.
\end{definition}

If $\cS$ is semilinear then its Kleene star is semilinear: this is closely-related to Parikh's theorem which gives a correspondence between semi-linear sets and regular languages: see \cite{parikh,kleenestar}.
We will be interested in bounding the complexity of the Kleene star of a semilinear set using numerical data associated to the set.
Note that for an integer linear system $\cQ(\bfx): \bfA \bfx = \bfc$,
if $\bfc = \bfzero$, then $\kstar{\sol{\cQ(\bfx)}} = \sol{\cQ(\bfx)}$.
The following result gives a bound when $\bfc \neq \bfzero$.

\begin{lemma}\label{lem:key}
    For every integer linear system $\cQ(\bfx): \bfA \bfx = \bfc$ with $\bfc \neq \bfzero$,
    where $\bfA \in \bbZ^{m \times n}$ and $\bfc \in \bbZ^{m}$,
    there exists $\bfAtilde \in \bbZ^{(n+t) \times (n+k)}$ with $\norm{\bfAtilde} = 1$
    such that
    $\kstar{\sol{\cQ(\bfx)}} = \sol{\cQtilde(\bfx)}$,
    where $\cQtilde(\bfx)$ is the  Boolean combination of integer linear systems
    \begin{equation*}    
        \exists \bfy_1, \bfy_2\ 
        (\bfx\ =\ \projp{n}{\bfy_1} + \bfy_2)\ \land\ 
        (\bfAtilde\bfy_1\ =\ \bfzero)\ \land\ 
        (\bfA\bfy_2\ =\ \bfzero)\ \land\ 
        ((\bfy_1=\bfzero) \to (\bfy_2=\bfzero)),
    \end{equation*}
    $k = n\cdot(2D+1)^{m} + 1$,
    $t \le 2nk$, and
    $D = n \cdot \norm{\bfA} \cdot \left((n+1)\cdot \norm{\bfA} + \norm{\bfc} + 1\right)^{m}$.
    Moreover, $\bfAtilde$ can be computed in time $2^\bigO{\log n + \log K + m^2}$,
    where $K := \max\left(\norm{\bfA}, \norm{\bfc}\right)$.
\end{lemma}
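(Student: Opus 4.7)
The plan is to prove a structural decomposition theorem for $\kstar{\sol{\cQ(\bfx)}}$ and then encode the decomposition as an integer linear system whose matrix has all entries in $\set{-1, 0, +1}$.

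I would first establish a Caratheodory-style decomposition: every $\bfx \in \kstar{\sol{\cQ(\bfx)}}$ with $\bfx \neq \bfzero$ admits a representation $\bfx = \sum_j u_j \bfs_j + \bfh$, where the $\bfs_j$ enumerate the finite set $S := \sol{\cQ(\bfx)} \cap [0, D]^n$ of \emph{small} solutions, the multiplicities $u_j \in \bbN$ are not all zero, and $\bfh \in \sol{\bfA\bfy = \bfzero}$. The argument is the standard Hilbert-basis peeling: if $\bfv \in \sol{\cQ(\bfx)}$ has $\norm{\bfv} > D$, then Lemma~\ref{lem:ilp} applied to $\bfA\bfz = \bfzero \wedge \bfz \vle \bfv \wedge \bfz \neq \bfzero$ (the last constraint handled by a case split on which coordinate is $\geq 1$) yields a small nonzero homogeneous $\bfz \vle \bfv$; subtracting $\bfz$ shrinks $\bfv$ while moving mass into $\bfh$, and iterating produces a residue in $S$. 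Unpacking Lemma~\ref{lem:ilp}'s constants for a system of $\bigO{m + n}$ rows in $n$ variables with coefficients bounded by $K := \max(\norm{\bfA}, \norm{\bfc})$ delivers the stated value of $D$.

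I would then encode ``$\bfx$ has such a decomposition'' as a norm-$1$ system via \emph{unary unfolding}. Enumerating $S = \set{\bfs_1, \ldots, \bfs_N}$, for each $j$ I introduce a multiplicity variable $u_j \in \bbN$ together with, for each coordinate $i \in [n]$ and each level $l \in \set{1, \ldots, s_{j, i}}$, a fresh variable $w_{j, i, l} \in \bbN$ constrained by $w_{j, i, l} - u_j = 0$; the equations $\sum_{j, l} w_{j, i, l} = (\bfy_1)_i$ then realize $(\bfy_1)_i = \sum_j s_{j, i}\, u_j$, all with entries in $\set{-1, 0, 1}$. Packaging these auxiliaries together with the first $n$ entries of $\bfy_1$, plus one extra coordinate reserved as a flag witnessing $\bfy_1 \neq \bfzero$, yields the promised $\bfAtilde$ with $t \leq 2nk$. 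The full Boolean combination then stitches the pieces together: $\bfAtilde\bfy_1 = \bfzero$ captures the bounded-solution sum, $\bfA\bfy_2 = \bfzero$ captures the kernel, and $\bfx = \projp{n}{\bfy_1} + \bfy_2$ is the total; the implication $(\bfy_1 = \bfzero) \to (\bfy_2 = \bfzero)$ rules out spurious pure-kernel vectors masquerading as empty Kleene-star sums (valid because $\bfc \neq \bfzero$ forces $\bfzero \notin \sol{\cQ(\bfx)}$, so any nonempty Kleene-star sum has $\bfy_1 \neq \bfzero$), while in the reverse direction any kernel vector $\bfy_2$ can be absorbed into one copy of $\bfs_j$ (some $u_j \geq 1$) using $\bfA(\bfs_j + \bfy_2) = \bfc$. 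The time bound $2^{\bigO{\log n + \log K + m^2}}$ then follows by giving $\bfAtilde$ a succinct symbolic description indexed by $S$ rather than writing out each of its $\bigO{nk}$ nonzero entries.

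The main obstacle will be calibrating the auxiliary-variable count to the stated $k = n(2D+1)^m + 1$ rather than the cruder $n(D+1)^n$ one gets from the naive enumeration of $S$. The exponent $(2D+1)^m$ suggests grouping the small solutions by an $\bfA$-related signature in a $(2D+1)^m$-sized grid of $[-D, D]^m$ and keeping only one representative per cell via a Caratheodory-style argument within each group; the time bound depends similarly on this more economical encoding being carried out implicitly rather than by brute-force enumeration of $S$.
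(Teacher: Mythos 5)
Your plan has two parts: a Caratheodory decomposition of $\kstar{\sol{\cQ}}$ into (nonnegative combination of small solutions) + (kernel vector), and then a ``unary unfolding'' encoding of the combination so that the matrix has entries in $\set{-1,0,1}$. The decomposition is fine, and the unary-unfolding trick does yield a norm-$1$ system. The problem is the size of that system, and you correctly flag it as the main obstacle; but the fix you sketch does not work, and the obstacle is in fact where the whole content of the lemma lives.

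Your proposed repair is to ``group the small solutions by an $\bfA$-related signature in $[-D,D]^m$ and keep one representative per cell.'' But every $\bfs \in S = \sol{\cQ}\cap[0,D]^n$ has the \emph{same} image $\bfA\bfs = \bfc$, so there is exactly one cell and nothing to group. More fundamentally, you cannot replace $S$ by a polynomial-in-$(2D+1)^m$ subset and preserve the Kleene star: $\kstar{\cdot}$ is monotone, and the forward inclusion $\kstar{\sol{\cQ}} \subseteq \sol{\cQtilde}$ needs every minimal solution (up to absorbing kernel excess into $\bfy_2$), and there can be $(L+1)^{\Omega(n)}$ minimal solutions. So your encoding, which spends $\Theta\left(\sum_{j}\sum_i s_{j,i}\right)$ variables and constraints on an explicit enumeration of $S$, will be exponential in $n$, not in $m$; this breaks both the stated $k = n(2D+1)^m+1$ and the $2^{\bigO{\log n + \log K + m^2}}$ computation bound, which in turn are exactly what Theorem~\ref{thm:gptwoexp} needs.

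The paper avoids enumeration entirely. It builds a \emph{simple} (acyclic, unambiguous, single accepting state) automaton $\cM$ over $\Sigma = \set{a_1,\dots,a_n}$ whose states record the \emph{running partial sum} $\bfA\bfv \in [-D,D]^m$ as letters are read one at a time (plus bookkeeping for acyclicity), so $\solmin{\cQ} \subseteq \parikh{\cM} \subseteq \sol{\cQ}$ with $\bigO{nL(2D+1)^m}$ states and $\bigO{nk}$ transitions. Crucially, the automaton can have doubly-exponentially many accepting paths with only polynomially-in-$(2D+1)^m$ many transitions, and $\kstar{\parikh{\cM}}$ is captured by a single Kirchhoff/flow-balance system over the transition variables (Lemma~\ref{lem:kstar_nfa}), using the flow decomposition theorem for the reverse inclusion. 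This is the step your proposal is missing: a norm-$1$ encoding of the Kleene star that does not enumerate solutions but instead represents them implicitly as source-to-sink flows. Replacing your per-solution multiplicities $u_j$ and unary ladders $w_{j,i,l}$ by per-transition flow variables with conservation constraints, with the Parikh projection recovering $\projp{n}{\bfy_1}$, is the move that gets the exponent down from $n$ to $m$. The rest of your sketch — the roles of $\bfy_1$ and $\bfy_2$, the implication $(\bfy_1 = \bfzero)\to(\bfy_2 = \bfzero)$ because $\bfc \neq \bfzero$, and absorbing $\bfy_2$ into one summand — matches the paper's final gluing step and is correct.
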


We defer the proof of this to the appendix.

\myparagraph{Types}
Let $\vocab$ be a vocabulary with unary  $U_1, \ldots, U_n$ and binary  $R_1, \ldots, R_m$.
\begin{definition}[1- and 2-type]
    A \emph{1-type} (w.r.t. $\vocab$) is a maximal consistent subset of
    \begin{equation*}
        \bigcup_{i \in \intsinterval{n}}\set{U_i(x), \neg U_i(x)}
        \ \cup\ 
        \bigcup_{i \in \intsinterval{m}}\set{R_i(x, x), \neg R_i(x, x)}.
    \end{equation*}
    A \emph{2-type} (w.r.t. $\vocab$) is a maximal consistent subset of
    \begin{equation*}
        \bigcup_{i \in \intsinterval{m}}
        \set{R_i(x, y), \neg R_i(x, y), R_i(y, x), \neg R_i(y, x)}.
    \end{equation*}
\end{definition}
We denote the set of $1$-types  by $\onetypes$ and the set of $2$-types $\twotypes$.

Given an element $v$ in a $\vocab$-structure $\cG$, we let $\onetypeof(v)$ denote its one-type in $\cG$
and similarly for a pair of elements $v, u$ in $\cG$ we let $\twotypeof(v,u)$ denote its two-type in $\cG$;
in both cases we omit the dependence on $\cG$ for brevity.

The following definitions are variations of notions
in Pratt-Hartmann's~\cite{ctwocomplexity,gctwocomplexity,ianrevisited}.

\begin{definition}[Silent 2-type]
    The \emph{silent type}, denoted by $\etanull$, is the unique 2-type that consists of only negated terms.
    If a type is not silent we say it is \emph{audible}.
\end{definition}

\begin{definition}[Dual of 2-type]
    For a 2-type $\eta$, the \emph{dual} of $\eta$, denoted by $\dual{\eta}$,
    it the unique 2-type that swaps $x$ and $y$ in $\eta$.
\end{definition}
We denote
the set of all 2-types except $\etanull$ by $\twotypes^+$, and
the set of all 2-types that contain $R_t(x, y)$ by $\twotypes_t$ for $1 \le t \le m$.

Note that $\abs{\Onetypes} = 2^{n+m}$,
$\abs{\twotypes} = 2^{2m}$, $\abs{\twotypes^+} = 2^{2m}-1$,
and $\abs{\twotypes_t} =2^{2m-1}$.

We can abstract a model based on types:
\begin{definition}[Type graph]
    We associate to a $\vocab$-structure in a binary vocabulary a colored graph in an exponentially larger vocabulary:
    each vertex is colored with a 1-type, and
    each edge is colored with a 2-type. This is the \emph{type graph} of the $\vocab$-structure.
\end{definition}

Losing information, we can abstract a finite model based on cardinalities of $1$-types. The notion will be crucial in Section \ref{sec:ctwo}:
\begin{definition}[$1$-type cardinality vector of a $\vocab$-structure] \label{def:cardboundvector}
    Let $\cG$ be a $\vocab$-structure.
    The $1$-type cardinality vector of $\cG$ is a $\abs{\onetypes}$-dimensional vector 
    where the $i^{th}$ component is the number of elements in $\cG$ with 1-type $\pi_i$.
\end{definition}

\section{Logics and prior complexity bounds concerning them}

We fix a vocabulary $\vocab$ which consists of $n$ unary predicates $U_1, \ldots, U_n$ and $m$ binary predicates $R_1, \ldots, R_m$. All of our logics will consist of formulas with at most two variables, $x$ and $y$.
We start by reviewing \emph{two variable logic with counting} ($\Ctwo$), built up from atomic formulas $U_i(x)$, $ U_i(y)$, $R_i(x, y)$, and $R_i(y, x)$
via the Boolean operators and the quantifier constructs
$\rho(x) := \exists^{\geq k} y\ \phi(x,y)$,
where $\phi$ is a $\Ctwo$ formula and $k$ is a natural number, and similarly with the role of $x$ and $y$ swapped.

Our semantics will always be over \emph{finite structures} interpreting the vocabulary $\vocab$.
The formula $\rho(y)$ above holds in a model $M$ at an element $u$ for $x$ when there are least $k$ elements $v$ such that $M \models \phi(u, v)$.
The case $k = 1$ represents standard existential quantification.

\emph{Guarded two variable logic with counting} ($\GCtwo$), restricts $\Ctwo$ by requiring quantification to be guarded by an atom. The quantifier rule now includes the forms:
\begin{equation*}
    \exists^{\geq k} y\ R_i(x, y) \land \phi(x,y),
    \quad
    \exists^{\geq k} y\ R_i(y,x) \land \phi(x,y),
    \quad \text{or} \quad
    \exists^{\geq k} x\ ~ U_i(x) \land \phi(x).
\end{equation*}

The logic $\Ptwo$ replaces $\Ctwo$'s counting quantifier with a \emph{Presburger quantifier}:
\begin{equation*}
    \rho(x)\ :=\ \left(
    \sum_{t \in \intsinterval{m}} \kappa_{t} \cdot \presby{ \phi_t(x,y)}  \right)
    \ \circledast\ \delta,
\end{equation*}
where $\kappa_t$ are integers,
$\circledast$ is an inequality or equality, and $\delta$ is an integer.
In $\Ptwo$ we will also allow \emph{unguarded quantification over unary formulas}: if $\phi(x)$ is a formula, so is
$\exists x\ \phi(x)$.
For example,
$\exists x \left(\presby{R(x,y)} - 3 \cdot \presby{R(y,x)} = 0\right)$
is a sentence of $\Ptwo$ that holds in a directed graph  when there is an element that has 3 times as many outgoing edges as incoming edges.

The logic $\GPtwo$ (\emph{guarded two variable logic with Presburger quantifiers}) restricts $\Ptwo$ by requiring each $\phi_i$ in a Presburger quantifier to be guarded, although still allowing unguarded unary quantification.
Note that global cardinality constraints can be expressed in $\Ptwo$, but not in $\GPtwo$.
Following prior papers \cite{twovarpres}, we do not allow a quantifier that tests the modulus of a linear combination of
cardinalities of one dimensional sets, and thus our quantifiers are in a sense weaker than Presburger arithmetic.
A \emph{Presburger modulus constraint} is as above, but allowing $\circledast\ \delta$ to be replaced by 
$(= m \mod n)$, for natural numbers $m, n$ with $m < n$.
Likewise, $\Ctwoplusglobal$ can be extended to allow modulus constraints. 
In the appendix we show that \emph{all of of our results on $\GPtwo$ and $\Ctwoplusglobal$ also hold for the extension with modulus constraints}; in the body of the paper we focus on plain $\GPtwo$ and $\Ctwoplusglobal$ in the statements.

Given a unary formula $\phi(x)$ in some logic, and a finite model $M$, we let $|\phi|_M$ be the cardinality of $\phi$ in $M$; we extend this notation to a tuple $\langle \phi_1(x) \ldots \phi_k(x) \rangle$.
Given such a tuple, its \emph{spectrum} is the set of vectors of numbers of the form  $| \phi_1(x) \ldots \phi_k(x) |_M$. We say that such a spectrum is \emph{semi-linear} if it can be represented as a set of linear inequalities. When we talk about \emph{computing the spectrum} for a formula, we mean computing a semi-linear representation. 
If we have such a representation, we can clearly check whether an individual $\phi_i(x)$ is satisfiable, simply by seeing if the corresponding set of constraints is feasible. Thus, in cases, where the spectrum is known always to have such a representation, computing the spectrum can be seen as a generalization of the satisfiability pboelm.

The positive results in this paper, as well as the prior bounds, are summarized in Table \ref{tab:results}.
In the table, Sat Logic refers to the satisfiability problem for formulas of the logic over finite structures, while
Spectrum refers to a bound on the size of a representation of the spectrum, again over finite structure. An entry of the form $X$-c indicates that the problem is  complete for complexity class $X$.
\begin{tiny}
\begin{table}[!ht] \
\centering
\resizebox{\columnwidth}{!}{%
\begin{tabular}{l|l|l}
\hline
& Sat Logic     &   Spectrum Size Bound \\
\hline
$\GCtwo$ & $\exptime$-c \cite{gctwocomplexity} & {\bf $\expp$} (formerly
  $\twoexpp$) \\ 
$\Ctwo$ & $\nexptime$-c \cite{ctwocomplexity}  & {\bf $\expp$, Thm \ref{thm:spectrasemilinear}} (formerly  
$\twoexpp$ \cite{ultimatelyperiodic}) \\
$\GPtwo$ & {\bf $\exptime$-c, Thm \ref{thm:gptwoexp}} (formerly $\exptime$-hard, in $
\threenexp$ \cite{percentage}) & Open \\
$\Ptwo$ &  Undecidable \cite{percentage} & No size bound \\
\hline
\end{tabular}
}
\caption{Complexity results, with those new to this work in bold.}
\label{tab:results}
\end{table}
\end{tiny}

We give results on the spectrum, but they have applications to decidability of  richer logics. $\Ctwo$ can be extended with \emph{global unary cardinality constraints}: these are of
the form
$\rho := 
\left(\sum_{t \in \intsinterval{n}}
\kappa_t \cdot |\phi_t(x)|\right)\ \circledast\ \delta$,
where $\phi_t$ are $\Ctwo$ formulas in one free variable,
$\kappa_t$ are integers,
$\circledast$ is an inequality or equality,
and $\delta$ is an integer.
$\rho$ is a sentence, and we define the semantics only for finite structures $M$.
Such $M$ satisfies $\rho$ if,
letting $s_t$ be the number of elements in
$M$ satisfying $\phi_t(x)$, we have 
$\left(\sum_{t \in \intsinterval{n}} \kappa_t \cdot s_t\right)\ \circledast\ \delta$.
We write $\Ctwoplusglobal$ for the extension of $\Ctwo$ with global cardinality constraints.
As we explain later, computation of the spectrum provides an approach to get tight bounds on these extended logics. In the case where the global constraints are in Presburger arithmetic, these can be seen as an alternative to the technique of \cite{rudolphcardinality}, which obtains tight bounds for the logic through reduction to satisfiability of $\Ctwo$ itself.

\section{Finite satisfiability of $\GPtwo$ is in $\expp$}\label{sec:gptwo}

The goal of this section is to prove:
\begin{theorem} \label{thm:gptwoexp}
    The finite satisfiability problem of $\GPtwo$ is in $\mexp$.
\end{theorem}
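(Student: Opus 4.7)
The plan is to reduce finite satisfiability of a $\GPtwo$ sentence $\Phi$ to the feasibility of a Boolean combination of integer linear systems, and then apply Lemmas~\ref{lem:ilp} and~\ref{lem:key} to get the $\exptime$ bound. First I would put $\Phi$ into a Scott-style normal form in which every Presburger quantifier occurs at the top level as a constraint $\rho_j(x) \equiv \bigl(\sum_{\eta \in \twotypes^+} \kappa^j_\eta \cdot \presby{R_{i_j}(x,y) \land \alpha^j_\eta(x,y)}\bigr) \circledast_j \delta_j$ conditioned on the $1$-type of $x$, and unguarded unary quantifications become existence or absence assertions about $1$-types. This normalization causes only polynomial blowup, and the local semantics of every $\rho_j$ depends only on the \emph{star type} of an element, that is, the vector in $\bbN^{\twotypes}$ recording how many incident edges have each $2$-type.

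Next, I would abstract models up to $\GPtwo$-equivalence by a $1$-type cardinality vector $(s_\pi)_{\pi \in \onetypes}$ together with, for each $\pi$, a total edge profile $T_\pi \in \bbN^{\twotypes}$ realized as a sum of $s_\pi$ admissible star types at $\pi$. Admissibility is captured by an integer linear system $\cQ_\pi$ whose size is polynomial in $|\Phi|$, since each top-level $\rho_j$ contributes a single linear inequality. The global conditions are then (i) realizability, $T_\pi \in \kstar{\sol{\cQ_\pi}}$ using exactly $s_\pi$ summands, and (ii) edge balance, $T_\pi[\eta] = T_{\pi'}[\dual{\eta}]$ whenever $\eta$ is an audible $2$-type connecting $\pi$ and $\pi'$. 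Applying Lemma~\ref{lem:key} to each $\cQ_\pi$ rewrites (i) as a polynomial-in-$|\Phi|$ size Boolean combination of integer linear systems with coefficient matrices of unit norm. Conjoining with the balance constraints and the $1$-type assertions coming from unary quantifiers yields a single Boolean combination $\cQhomo$ of integer linear systems over exponentially many variables but with coefficient magnitudes polynomial in $|\Phi|$.

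Finally, Lemma~\ref{lem:ilp} applied to $\cQhomo$ shows that any feasible disjunct admits a solution whose non-zero support is polynomial in the exponential system-size parameter and whose values are exponentially bounded. Since $\cQhomo$ has a highly regular structure, namely the same polynomial-size block $\cQ_\pi$ repeated for each $1$-type $\pi$ and glued by the simple balance matrix, deciding feasibility can be carried out deterministically in $\exptime$ by enumerating the polynomially many support slots per block and solving the induced small ILPs. The main obstacle is this last step: the raw system has exponentially many constraints, so a direct reduction to general ILP would only give $\nexptime$. Avoiding this requires exploiting the block structure induced by the per-$1$-type Kleene stars, together with the unit-norm bound $\norm{\bfAtilde} = 1$ from Lemma~\ref{lem:key}, which together ensure that the witness data to be searched lies in a polynomial-dimensional face of the exponentially-dimensional solution polytope.
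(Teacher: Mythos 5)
Your setup — normal form, per-$1$-type behavior vectors abstracting the star type of a vertex, aggregating them via $\kstar{\cdot}$, Lemma~\ref{lem:key} to trade the Kleene star for an explicit integer linear system with $\norm{\bfAtilde}=1$, and gluing the per-type blocks by matching constraints on audible $2$-types — follows the paper's construction (Definitions~\ref{def:characteristic-system} and~\ref{def:linear-system-Q}, Lemmas~\ref{lem:equiv1}--\ref{lem:equiv2}) essentially faithfully. The problem is the last step, and you have in fact correctly identified exactly where the difficulty lies: the resulting system $\cQ^\varphi$ has exponentially many constraints and variables, so black-box ILP feasibility only gives $\nexp$.

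Your proposed escape does not work as stated. First, Lemma~\ref{lem:ilp} bounds the non-zero support of a solution by $c_1 t\log(c_2 t M)$ with $t$ the number of constraints; since $t$ is already exponential in $|\varphi|$, the support bound is exponential, not ``polynomially many support slots per block.'' Enumerating exponentially many candidate supports deterministically is not $\exptime$, and guessing one is precisely the $\nexp$ algorithm you were trying to avoid. Second, the appeal to the ``block structure'' and the unit-norm bound to conclude that witnesses live in a ``polynomial-dimensional face of the exponentially-dimensional polytope'' is an assertion, not an argument; nothing you have said rules out needing an exponential-dimensional witness, and indeed such witnesses are needed in general. The missing idea is the one in Lemma~\ref{lem:equiv3}: the only non-linear ingredient in $\cQtilde$ is the implication $(\bfy_1=\bfzero)\to(\bfy_2=\bfzero)$, and Lemma~\ref{lem:ilp} lets one replace it, for a precomputable exponential $M^\varphi$, by the \emph{homogeneous} linear inequality $\bfone\cdot\bfy_2\le M^\varphi(\bfone\cdot\bfy_1)$. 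After that substitution the entire system is homogeneous, so it has a solution in $\bbN$ iff it has one over $\bbQ$, and rational LP feasibility is polynomial in the size of the system, i.e.\ exponential in $|\varphi|$. That homogenization-plus-rational-relaxation step is what actually buys $\exptime$ over $\nexp$, and it does not appear in your proposal.
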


\myparagraph{$\GPtwo$ infrastructure}
We fix a vocabulary $\vocab$ which consists of $n$ unary predicates $U_1, \ldots, U_n$ and $m$ binary predicates $R_1, \ldots, R_m$.
We consider $\GPtwo$ sentences $\varphi$ over $\vocab$ in \emph{normal form}:
\begin{equation*}
    \varphi\ :=\ 
    \forall x\ \gamma(x)\ \land\ 
    \bigwedge_{i \in \intsinterval{m}} \forall x \forall y\ \left(R_i(x,y) \land x \neq y\right) \to \alpha_i(x, y)\ \land\ 
    \bigwedge_{i \in \intsinterval{n}} \forall x\ U_i(x) \to P_i(x),
\end{equation*}
where $\gamma(x)$ is a quantifier-free formula,
each $\alpha_i(x, y)$ is a quantifier-free formula, and
each $P_i(x)$ is a Presburger quantified formula of the form: 
\begin{equation*}
    P_i(x)\ :=\ \left(
        \sum_{t \in \intsinterval{m}} \lambda_{i, t} \cdot \presby{R_t(x, y) \land x \neq y}\right)
    \ \circledast_i\ \delta_i.
\end{equation*}
Note that in the normal form we allow
$
\forall x \forall y\ \left(R(x,y) \land x \neq y\right) \to \alpha(x, y)
$
which is equivalent to a $\GPtwo$  sentence.

There is a linear time algorithm that converts every $\GPtwo$ sentence into an equi-satisfiable sentence in normal form~\cite{twovarpres}, which is 
based on the standard renaming technique for $\GCtwo$ introduced in~\cite{Kazakov04,gctwocomplexity}.
See the appendix for an argument.

\begin{definition}[Compatibility of 1- and 2-types]
    For a $\GPtwo$ sentence $\varphi$ in the normal form above and a 1-type $\pi$,
    we say $\pi$ is \emph{compatible} with $\varphi$,
    if $\pi(x) \models \gamma(x)$, where $\gamma(x)$ is from the normal form,
    that is, the ``unary universal part'' of $\varphi$.

    For  1-types $\pi_1$ and $\pi_2$, and a 2-type $\eta$,
    we say that $\tuple{\pi_1, \eta, \pi_2}$ is \emph{compatible} with $\varphi$,
    if
    \begin{compactitem}
        \item $\pi_1(x) \land \eta(x, y) \land \pi_2(y) \models
        \bigwedge_{i \in \intsinterval{m}} \left(R_i(x,y) \land x \neq y\right) \to \alpha_i(x, y)$, and
        \item $\pi_2(x) \land \dual{\eta}(x, y) \land \pi_1(y) \models
        \bigwedge_{i \in \intsinterval{m}} \left(R_i(x,y) \land x \neq y\right) \to \alpha_i(x, y)$.
    \end{compactitem}
    That is, the tuple satisfies the ``binary universal part''.
\end{definition}
We denote the set of 1-types compatible with $\varphi$ by $\Onetypes^\varphi \subseteq \Onetypes$.
Note that it takes time $\bigO{\abs{\varphi}}$ to check compatibility with $\varphi$ either for a $1$-type $\pi$  or a tuple $\tuple{\pi_1, \eta, \pi_2} \in \Onetypes \times \twotypes \times \Onetypes$.

In order to capture the counting conditions, we introduce \emph{behavior vectors}.
Informally, the behavior vector of a vertex encodes the configuration of its neighbors.
We fix an ordering of the set $\twotypes^+ \times \Onetypes$,
which has cardinality $2^{n+3m}-2^{n+m}$.
A behavior vector $\bff$ is an element of $\bbN^{2^{n+3m}-2^{n+m}}$.
Abusing notation, we write $\bff(\pi, \eta)$
to refer to the $i^{th}$ component of $\bff$,
where $i$ is the index of the tuple $\tuple{\pi, \eta}$ in the fixed ordering of $\twotypes^+ \times \Onetypes$.

\begin{definition}[Behavior vector of a vertex]
    For a $\vocab$-structure $\cG$ and a vertex $v$ in $\cG$,
    the behavior vector realized by $v$,
    denoted by $\bh{v}$, is an element of $\bbN^{2^{n+3m}-2^{n+m}}$ defined as follows:
    for every $\eta \in \twotypes^+$ and $\pi \in \Onetypes$,
    $\bh{v}(\eta, \pi)$ is the number of edges $e$ in $\cG$ adjacent to $v$
    such that the 2-type of $e$ is $\eta$ and the 1-type of $e$'s destination is $\pi$.
\end{definition}

\begin{definition}[Characteristic system]
\label{def:characteristic-system}
    For $\GPtwo$ sentence $\varphi$ in normal form and 1-type $\pi$,
    the characteristic system of $\varphi$ and $\pi$, denoted $\cC^\varphi_\pi(\bfx)$,
    is the integer linear system defined:
    \begin{compactitem}
        \item The variables of $\cC^\varphi_\pi(\bfx)$ are $x_{\eta, \pi'}$
        for every $\eta \in \twotypes^+$ and $\pi' \in \Onetypes$.
        \item (Compatibility)
        For every $\eta \in \twotypes^+$ and $\pi' \in \Onetypes$
        if $\pi'$ or $\tuple{\pi, \eta, \pi'}$ is not compatible with $\varphi$,
        then $x_{\eta, \pi'} = 0$
        is a constraint in $\cC^\varphi_\pi(\bfx)$.
        \item (Counting)
        For $1 \le i \le n$, if $U_i(x) \in \pi$
        then the following constraint is in $\cC^\varphi_\pi(\bfx)$:
        \begin{equation*}
            \left(
                \sum_{t \in \intsinterval{m}}
                \lambda_{i, t} \cdot
                \sum_{\eta \in \twotypes_t}
                \sum_{\pi' \in \Onetypes} 
                x_{\eta, \pi'}
            \right)
            \ \circledast_i\ \delta_i.
        \end{equation*}
    \end{compactitem}

\end{definition}
Note that there are $\abs{\twotypes^+ \times \Onetypes}$ variables
and at most $n$, the number of unary predicates, constraints in $\cC^\varphi_\pi(\bfx)$.
The absolute value of coefficients in $\cC^\varphi_\pi(\bfx)$
is bounded by $K$, where $K$ is the maximal absolute value of coefficients in $\varphi$.

\begin{definition}[Compatibility of behavior vector]
    For a $\GPtwo$ sentence $\varphi$ in normal form 
    and a 1-type $\pi$,
    we say that a behavior vector $\bff$ is \emph{compatible} with $\varphi$ and $\pi$
    if $\bff$ is a solution of $\cC^\varphi_\pi(\bfx)$.
\end{definition}
We denote the set of all behavior vectors that are compatible with $\varphi$ and $\pi$
by $\cB^{\varphi}_{\pi} := \sol{\cC^\varphi_\pi(\bfx)}$.
Note that $\cB^{\varphi}_{\pi}$ can be infinite but semilinear.

The following property restates satisfaction of a formula in a $\vocab$-structure in terms of the type graph of the structure, where the types are identified with vertex- and edge-labels:
\begin{proposition}\label{prop:labeled_gptwo}
    For every $\GPtwo$ sentence $\varphi$ in normal form,
    $\varphi$ is finitely satisfiable if and only if
    there exists a colored graph $\cG$ satisfying the following conditions:
    \begin{compactitem}
        \item (1-type) For every vertex $v \in V$, $\onetypeof(v)$ is compatible with $\varphi$.
        \item (2-type) For every pair of vertices $v_1, v_2 \in V$,
        $\tuple{\onetypeof(v_1), \twotypeof(v_1, v_2), \onetypeof(v_2)}$ is compatible with $\varphi$.
        \item (Counting) For every vertex $v \in V$, $\bh{v}$ is compatible with $\onetypeof(v)$ and $\varphi$.
    \end{compactitem}
\end{proposition}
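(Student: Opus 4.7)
The proposition is an equivalence between finite satisfiability of $\varphi$ and the existence of a colored graph of types satisfying three local conditions. I would prove the two directions separately by unpacking the normal form.

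\textbf{From a finite model to a type graph.} Suppose $M$ is a finite $\vocab$-structure with $M\models\varphi$. Let $\cG$ be the type graph of $M$: vertices are the elements of $M$, each vertex $v$ colored by $\onetypeof(v)$, and each edge between distinct $v_1,v_2$ colored by $\twotypeof(v_1,v_2)$. For the \emph{1-type condition}, the conjunct $\forall x\,\gamma(x)$ of $\varphi$ forces $\pi(x)\models\gamma(x)$ for $\pi=\onetypeof(v)$, i.e.\ compatibility. For the \emph{2-type condition}, the conjuncts $\forall x\forall y\,(R_i(x,y)\land x\neq y)\to\alpha_i(x,y)$ imply that for any pair $v_1\neq v_2$ the conjunction $\pi_1(x)\land\eta(x,y)\land\pi_2(y)$, where $\pi_j=\onetypeof(v_j)$ and $\eta=\twotypeof(v_1,v_2)$, entails the binary universal part; applying the same to the reversed pair handles the dual clause. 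For the \emph{counting condition}, fix $v$ with $\pi=\onetypeof(v)$ and suppose $U_i(x)\in\pi$. By $M\models\forall x\,U_i(x)\to P_i(x)$, the linear combination
\[
\sum_{t\in\intsinterval{m}}\lambda_{i,t}\cdot\bigl|\{u\in M : u\neq v,\ M\models R_t(v,u)\}\bigr|
\]
stands in the relation $\circledast_i$ to $\delta_i$. The key observation is that, by definition of the behavior vector and of $\twotypes_t$ as the 2-types containing $R_t(x,y)$,
\[
\bigl|\{u\neq v : M\models R_t(v,u)\}\bigr|\ =\ \sum_{\eta\in\twotypes_t}\sum_{\pi'\in\Onetypes}\bh{v}(\eta,\pi').
\]
Thus $\bh{v}$ satisfies the counting constraint of $\cC^\varphi_\pi$; the compatibility constraints $x_{\eta,\pi'}=0$ are satisfied because the 1- and 2-type conditions already force the corresponding coordinates of $\bh{v}$ to vanish.

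\textbf{From a type graph to a model.} Conversely, suppose $\cG$ satisfies the three bulleted conditions. Define a $\vocab$-structure $M$ on the vertex set of $\cG$ by declaring $U_i(v)$ to hold iff $U_i(x)\in\onetypeof(v)$, declaring $R_i(v,v)$ to hold iff $R_i(x,x)\in\onetypeof(v)$, and for $v_1\neq v_2$ declaring $R_i(v_1,v_2)$ to hold iff $R_i(x,y)\in\twotypeof(v_1,v_2)$. Then by construction $\onetypeof_M(v)$ coincides with the label of $v$ in $\cG$, and similarly for 2-types. The 1-type condition delivers $M\models\forall x\,\gamma(x)$; the 2-type condition delivers each binary universal conjunct; and the counting condition, together with the identity displayed above, delivers $M\models\forall x\,U_i(x)\to P_i(x)$ for each $i$.

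\textbf{Main obstacle.} There is no real obstacle; the proposition is a bookkeeping translation between the syntactic normal form and the combinatorial data $(\pi,\eta,\bff)$. The only points requiring mild care are: (i) handling the diagonal correctly, which is why $R_i(x,x)$ is part of the 1-type and why the universal binary conjuncts in the normal form are guarded by $x\neq y$; and (ii) verifying that the counting formula $\presby{R_t(x,y)\land x\neq y}$, evaluated at $v$, equals the sum $\sum_{\eta\in\twotypes_t}\sum_{\pi'}\bh{v}(\eta,\pi')$, which is immediate from how $\twotypes_t$ and $\bh{v}$ are defined but is the pivot that makes the translation work.
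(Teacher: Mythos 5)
Your proof is correct and follows exactly the natural translation that the paper treats as immediate (the paper states Proposition~\ref{prop:labeled_gptwo} without proof, viewing it as a restatement of satisfaction in terms of the type graph). Both directions are verified faithfully, including the key identity equating $\presby{R_t(x,y)\land x\neq y}$ with $\sum_{\eta\in\twotypes_t}\sum_{\pi'}\bh{v}(\eta,\pi')$, which is indeed the pivot of the translation.
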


We say that a colored multiple graph in the vocabulary of types, $\cG$, is a \emph{finite pseudo-model} of a $\GPtwo$ sentence $\varphi$ if it satisfies all conditions in Proposition~\ref{prop:labeled_gptwo}.
Here we show that to establish the finite satisfiability of $\varphi$, it is sufficient to construct a pseudo-model.

\begin{lemma}\label{lem:pseudo}
    For every $\GPtwo$ sentence $\varphi$ in normal form,
    $\varphi$ is finitely satisfiable if and only if it has a finite pseudo-model.
\end{lemma}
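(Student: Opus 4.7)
The forward direction is immediate: by Proposition~\ref{prop:labeled_gptwo}, any finite model of $\varphi$ yields a colored graph satisfying the three conditions, which is in particular a colored multigraph of multiplicity $1$, hence a pseudo-model. The substance of the lemma is the backward direction. The plan is to take a finite pseudo-model $\cG$ and, via a ``blow-up'' construction, redistribute its multi-edges across many copies of each vertex so as to obtain a colored graph that still satisfies the conditions of Proposition~\ref{prop:labeled_gptwo}, hence a finite model of $\varphi$.

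Concretely, let $\mu$ be the maximum edge multiplicity in $\cG$ and choose $N \geq \max(\mu, 1)$. Define $\cG'$ on vertex set $V(\cG) \times \{1, \ldots, N\}$, with each copy $v^{(i)}$ inheriting the 1-type $\onetypeof(v)$. For each unordered pair $\{v, w\}$ of distinct vertices of $\cG$, fix an orientation and enumerate the audible 2-types on the edges between $v$ and $w$ (from that orientation) as $\eta_1, \ldots, \eta_T$, where $T \leq \mu \leq N$. Assign to the pair $\{v^{(i)}, w^{(j)}\}$ the 2-type $\eta_s$ where $s := ((j-i) \bmod N) + 1$ if $s \leq T$, and $\etanull$ otherwise; between distinct copies of the same original vertex assign $\etanull$ (consistent since $\cG$ is loop-free).

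The verification proceeds in three steps. First, 1-type compatibility is inherited directly from $\cG$. Second, every 2-type label on $\cG'$ is either an audible label that descends from an edge in $\cG$ (hence compatible with the surrounding 1-types by the pseudo-model hypothesis) or $\etanull$, which is compatible with any pair of 1-types because every $R_i(x,y)$- and $R_i(y,x)$-atom appears negated in $\etanull$, making the binary universal part of the normal form vacuously true. Third, for the counting condition, fix a copy $v^{(i)}$ and an original neighbor $w$; as $j$ varies over $\{1, \ldots, N\}$ the index $s$ is a bijection onto $\{1, \ldots, N\}$, so the multiset of audible labels on edges from $v^{(i)}$ to copies of $w$ is exactly $\{\eta_1, \ldots, \eta_T\}$, while silent fillers (and the silent edges between distinct copies of the same original vertex) contribute nothing because $\bh{\cdot}$ is indexed by $\twotypes^+$. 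Summing over neighbors $w$, we obtain $\bh{v^{(i)}} = \bh{v} \in \cB^{\varphi}_{\onetypeof(v)}$, which gives the counting condition at $v^{(i)}$.

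The main subtlety I would flag is that the round-robin must be simultaneously correct at both endpoints of each pair: from $w^{(j)}$'s perspective the very same edges must realize the multiset of duals $\{\dual{\eta_1}, \ldots, \dual{\eta_T}\}$ (padded with silents), and this must agree with $\bh{w}$. This is precisely what makes the round-robin shift the right choice, since $i \mapsto ((j-i) \bmod N) + 1$ is also a bijection on $\{1, \ldots, N\}$, so $w^{(j)}$ sees each dual $\dual{\eta_s}$ exactly once across the copies of $v$. Once this symmetry is confirmed, the rest of the verification is routine bookkeeping, and we conclude that $\cG'$ witnesses the conditions of Proposition~\ref{prop:labeled_gptwo}, giving finite satisfiability of $\varphi$.
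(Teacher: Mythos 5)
Your proposal is correct but takes a genuinely different route from the paper. The paper's backward direction uses an iterative \emph{duplicate-and-swap} argument: take the disjoint union of the pseudo-model with a copy of itself, and for each pair of vertices with parallel edges, swap the destinations of one edge and its copy so the edge now crosses between the two halves; this reduces the maximum edge multiplicity by exactly one, and iterating $\mu$ times (where $\mu$ is the initial multiplicity, doubling the graph at each step) produces a multiplicity-one graph, which is then completed with silent edges. Your construction instead does a single $N$-fold blow-up ($N \geq \mu$) and spreads the parallel edges between each pair $\{v,w\}$ over the $N \times N$ grid of copy pairs via a modular round-robin shift. Both are sound: the crucial verification in each case is that the behavior vector of every copy equals that of the original vertex, and your observation that $j \mapsto (j-i) \bmod N$ and $i \mapsto (j-i) \bmod N$ are simultaneously bijections on $\{1,\ldots,N\}$ (so both endpoints see the right multiset of audible $2$-types, with the duals matching up automatically) is exactly the right thing to check; the paper's approach avoids having to argue this by keeping each swapped edge attached to the same pair of original types on both sides. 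Your version produces a linear-in-$\mu$ rather than exponential-in-$\mu$ blow-up, though model size is immaterial to the lemma. One small precision issue: ``enumerate the audible $2$-types on the edges between $v$ and $w$'' should be an enumeration of a \emph{multiset} (the same $2$-type can occur on several parallel edges), which you clearly intend but could state explicitly.
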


\begin{proof}
    Note that the type graph of a finite model of $\varphi$ is a finite pseudo-model of $\varphi$. 
    Thus we focus on the if direction,  constructing (the type graph of) a finite model of $\varphi$ using a \emph{duplicating and swapping} procedure.
   
    Let $\cG_0$ be a finite pseudo-model of $\varphi$, and let $\cG'_0$ be a copy of $\cG_0$.
    Define $\cG_1$ as the disjoint union of $\cG_0$ and $\cG'_0$,
    after iteratively applying the following swapping procedure:
    for every pair of vertices $v$ and $u$ in $\cG_0$ with more than one edge between them,
    let $e$ be an edge between them, and let $e'$ be the corresponding edge in $\cG'_0$.
    We then swap the destinations of $e$ and $e'$ in $\cG_1$.

    It is straightforward to verify that $\cG_1$ is a pseudo-model of $\varphi$ and its multiplicity is one less than $\cG_0$.
    We can repeat this procedure until we obtain a finite pseudo-model $\cG_t$ where no two vertices have multiple edges.
    Finally, the type graph of the finite model  of $\varphi$ is obtained by connecting all remaining pairs in $\cG_t$ with the silent type.
\end{proof}

\myparagraph{The main construction}
For every $\GPtwo$ sentence $\varphi$ in normal form,
we assume that
for every 1-type $\pi \in \Onetypes^\varphi$,
$\bfzero \not\in \cB^{\varphi}_{\pi}$.
Otherwise $\varphi$ is trivially satisfiable
by a model with only one vertex with 1-type $\pi$. 

\begin{lemma}\label{lem:equiv1}
    A $\GPtwo$ sentence $\varphi$ in normal form is finitely satisfiable
    if and only if there exists vectors $\bfg_\pi \in \kstar{\cB^\varphi_\pi}$ for each $\pi \in \Onetypes^\varphi$ satisfying the following conditions. 
    \begin{compactitem}
        \item (Matching) For every $\tuple{\pi_1, \eta, \pi_2} \in \Onetypes^\varphi \times \twotypes^+ \times \Onetypes^\varphi$,
        $\bfg_{\pi_1}(\eta, \pi_2) = \bfg_{\pi_2}(\dual{\eta}, \pi_1)$.

        \item (Non-triviality) There exists some $\pi \in \Onetypes^\varphi$ such that $\bfg_\pi \neq \bfzero$.
    \end{compactitem}
\end{lemma}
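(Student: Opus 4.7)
I plan to reduce both directions of the biconditional to the existence of a finite pseudo-model via Lemma~\ref{lem:pseudo}, interpreting each $\bfg_\pi$ as the aggregate behavior vector $\sum_{v \in V_\pi} \bh{v}$ over the vertices $V_\pi$ of 1-type $\pi$ in a pseudo-model.

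For the forward direction, a finite pseudo-model $\cG$ of $\varphi$ (given by Lemma~\ref{lem:pseudo}) immediately yields such vectors: partition its vertex set by 1-type as $V = \bigsqcup_{\pi \in \Onetypes^\varphi} V_\pi$, and set $\bfg_\pi := \sum_{v \in V_\pi} \bh{v}$. Membership $\bfg_\pi \in \kstar{\cB^\varphi_\pi}$ follows from the counting condition of Proposition~\ref{prop:labeled_gptwo}, which gives $\bh{v} \in \cB^\varphi_\pi$ for every $v \in V_\pi$. The matching condition is a double-counting identity: both $\bfg_{\pi_1}(\eta, \pi_2)$ and $\bfg_{\pi_2}(\dual{\eta}, \pi_1)$ enumerate the same ordered pairs $(v, u)$ with $v \in V_{\pi_1}$, $u \in V_{\pi_2}$, $v \ne u$, and $\twotypeof(v, u) = \eta$, since $\twotypeof(u, v) = \dual{\eta}$ is equivalent to $\twotypeof(v, u) = \eta$. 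Non-triviality follows because $\cG$ is nonempty and the standing assumption $\bfzero \notin \cB^\varphi_\pi$ forces $\bh{v} \neq \bfzero$ for every $v$.

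For the converse, given vectors $\bfg_\pi$ satisfying both conditions, I would construct a finite pseudo-model of $\varphi$ and then invoke Lemma~\ref{lem:pseudo}. First, I would replace each $\bfg_\pi$ by $N \bfg_\pi$ for a sufficiently large even integer $N$; since $\kstar{}$ is closed under $\bbN$-scaling, $N \bfg_\pi \in \kstar{\cB^\varphi_\pi}$, and both matching and non-triviality are preserved. Decompose $N \bfg_\pi = \sum_{i=1}^{k_\pi} \bfv_{\pi, i}$ with $\bfv_{\pi, i} \in \cB^\varphi_\pi$ (a finite sum, by the definition of Kleene star), and introduce a fresh vertex $v_{\pi, i}$ of 1-type $\pi$ with intended behavior vector $\bfv_{\pi, i}$ for each pair $(\pi, i)$. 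Finally, for each triple $(\pi_1, \eta, \pi_2)$ with $\eta \in \twotypes^+$, realize edges of 2-type $\eta$ between $V_{\pi_1}$ and $V_{\pi_2}$ as a (bipartite if $\pi_1 \neq \pi_2$, otherwise ordinary) multigraph whose row sums are $\bfv_{\pi_1, i}(\eta, \pi_2)$ and whose column sums are $\bfv_{\pi_2, j}(\dual{\eta}, \pi_1)$; these totals agree by matching, and the largeness and evenness of $N$ ensure that the corresponding degree sequence is realizable. A routine check against Proposition~\ref{prop:labeled_gptwo} shows the resulting colored multigraph is a pseudo-model.

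The main obstacle is the edge-realization step, specifically the corner case $\pi_1 = \pi_2$ with $\eta = \dual{\eta}$. Here each edge between two $\pi$-vertices with this 2-type contributes $2$ to $\bfg_\pi(\eta, \pi)$, so odd values could not be realized directly by any multigraph, and self-loops are forbidden by our convention. Scaling $\bfg_\pi$ by a large even $N$ sidesteps both obstructions simultaneously: evenness fixes the parity of the degree sum, and largeness supplies enough copies of each $\bfv_{\pi, i}$ so that no single prescribed degree exceeds the sum of the others. Once this delicate case is handled, the remaining bipartite and ordinary multigraph realizations are classical applications of degree-sequence realizability with multi-edges allowed.
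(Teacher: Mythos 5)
Your proposal is correct and follows essentially the same route as the paper: forward direction by aggregating behavior vectors over vertices of each $1$-type and double counting, backward direction by decomposing the $\bfg_\pi$ into summands from $\cB^\varphi_\pi$, introducing one fresh vertex per summand, realizing the prescribed edge-degree sequences for each $(\pi_1,\eta,\pi_2)$, and invoking Lemma~\ref{lem:pseudo}.

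There is one place where you are actually \emph{more} careful than the paper. In the backward direction the paper writes, after invoking Matching, ``so this construction is always possible.'' That assertion is straightforward for the bipartite case $\pi_1\neq\pi_2$, but you correctly observe that it is not automatic when $\pi_1=\pi_2$: pseudo-models are colored \emph{multigraphs without self-loops}, so for a self-dual $2$-type $\eta$ one needs $\bfg_\pi(\eta,\pi)$ to be even (handshake lemma), and in both the self-dual and non-self-dual cases one also needs no single prescribed degree to exceed the combined capacity of the other vertices. Neither condition is forced by Matching (which degenerates to a tautology for self-dual $\eta$ at a fixed $\pi$), and $\kstar{\cB^\varphi_\pi}$ contains vectors for which a direct loopless realization fails. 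Your fix of rescaling each $\bfg_\pi$ by a large even $N$ is exactly right: $\kstar{\cB^\varphi_\pi}$ is closed under such scaling, Matching and Non-triviality are linear and hence preserved, the even factor fixes parity, and the largeness gives enough copies of each summand that the max-degree constraint for loopless (bi)multigraph realizability is met. So where the paper silently asserts realizability, your argument supplies the missing justification; the two proofs are otherwise the same. One very minor slip in your forward direction: you phrase the double count in terms of ordered pairs of vertices, but in a pseudo-model (a multigraph) one should count \emph{edges}, not pairs, since several parallel edges can connect the same pair; the identity is the same once stated over edges.
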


\begin{proof}
    \textbf{\uline{Suppose that $\varphi$ is finitely satisfiable with finite model $\cG$}}.
    For a 1-type $\pi$,
    let $V_\pi \subseteq V$ be the set of vertices $v$ in $\cG$ with 1-type $\pi$.
    Note that the $V_\pi$ partition $V$.
    Since $\cG \models \varphi$, for every $v$ in $V_\pi$,
    $\bh{v} \in \cB^\varphi_\pi$.
    We claim that $\bfg_\pi = \sum_{v \in V_\pi} \bh{v} \in \kstar{\cB^\varphi_\pi}$ satisfies the conditions.

    \begin{compactitem}
        \item (Matching) 
        For every $\tuple{\pi_1, \eta, \pi_2} \in \Onetypes^\varphi \times \twotypes^+ \times \Onetypes^\varphi$,
        note that
        \begin{equation*}
            \begin{aligned}
                \sum_{v \in V_{\pi_1}} \bh{v}(\eta, \pi_2)
                \ =\ &
                \sum_{v \in V_{\pi_1}}
                \abs{\setc{u \in V}{\text{$v \neq u$, $\twotypeof(v, u) = \eta$, and $\onetypeof(u) = \pi_2$}}} \\
                \ =\ &
                \abs{\setc{(v, u) \in V^2}{\text{$v \neq u$, $\onetypeof(v) = \pi_2$, $\twotypeof(v, u) = \eta$, and $\onetypeof(u) = \pi_2$}}}.
            \end{aligned}
        \end{equation*}
        That is, $\bfg_{\pi_1}(\eta, \pi_2)$ is the number of edges with types $\tuple{\pi_1, \eta, \pi_2}$ in $\cG$.
        Similarly, $\bfg_{\pi_2}(\dual{\eta}, \pi_1)$ is the number of edges with types $\tuple{\pi_2, \dual{\eta}, \pi_1}$ in $\cG$.
        Since the number of edges with types $\tuple{\pi_1, \eta, \pi_2}$ and $\tuple{\pi_2, \dual{\eta}, \pi_1}$ should be the same, the Matching constraint holds.
        \item (Non-triviality)
        Since $\cG$ is non-empty, by our assumption,
        it has a vertex $v$ with 1-type $\pi$ satisfying that $\bh{v} \neq \bfzero$.
        Therefore,
        $\bfg_{\pi} = \bh{v} + \sum_{u \in V_{\pi} \setminus \set{v}} \bh{u} \neq \bfzero$.
    \end{compactitem}

    \textbf{\uline{Suppose that there exists such $\bfg_\pi \in \kstar{\cB^\varphi_\pi}$ for $\pi \in \Onetypes^\varphi$}}.
    By the definition of Kleene star,
    for every $\pi \in \Onetypes^\varphi$,
    there exists $\bff_{\pi, 1}, \ldots, \bff_{\pi, \ell_\pi} \in \cB^\varphi_\pi$,
    not necessarily distinct,
    such that $\bfg_\pi = \sum_{i \in \intsinterval{\ell_\pi}} \bff_{\pi, i}$. 
    We construct a colored multigraph $\cG$, which will be the type graph of our structure, as follows.
    The vertices in $\cG$ are
    $\bigcup_{\pi \in \Onetypes^\varphi}
    \set{v_{\pi, 1}, \ldots, v_{\pi, \ell_\pi}}$.
    Each vertex $v_{\pi, i}$ is colored with a $1$-type $\pi$,
    and we let $V_{\pi}$ be the set of vertices colored with $\pi$.
    For every $\tuple{\pi_1, \eta, \pi_2} \in \Onetypes^\varphi \times \twotypes^+ \times \Onetypes^\varphi$,
    we add edges between $V_{\pi_1}$ and $V_{\pi_2}$ 
    such that
    for each $v_{\pi_1, i}$, the number of edges connected to it with 2-type $\eta$ is $\bff_{\pi_1, i}(\eta, \pi_2)$ and 
    for each $v_{\pi_2, i}$, the number of edges connected to it with 2-type $\eta$ is $\bff_{\pi_2, i}(\dual{\eta}, \pi_1)$.
    Note that by the Matching constraint,
    $\bfg_{\pi_1}(\eta, \pi_2) = \bfg_{\pi_2}(\dual{\eta}, \pi_1)$,
    so this construction is always possible.
    We claim that $\cG$ is a pseudo-model of $\varphi$.
    Note that each vertex in $\cG$ is colored with 1-type in $\Onetypes^\varphi$,
    each edge in $\cG$ is colored with 2-type in $\twotypes^+$,
    and it is routine to check that the behavior vector of $v_{\pi, i}$ is $\bff_{\pi, i}$.
    By the Non-triviality constraint, $\cG$ is non-empty.
    Thus by Lemma~\ref{lem:pseudo}, $\varphi$ is finitely satisfiable.
\end{proof}

Above we connected the finite satisfiability problem of $\GPtwo$ sentence $\varphi$ and the Kleene star of behavior vectors.
However, because the cardinality of the set $\kstar{\cB^\varphi_\pi}$ is \emph{infinite},
it is not clear how  to determine the solvability of the conditions in Lemma~\ref{lem:equiv1}.
In addition, this representation does not tell us anything about the shape of the models.

Recall that the set of behavior vectors $\cB^\varphi_\pi$ is the solution set of the characteristic system $\cC^\varphi_\pi(\bfx)$.
We can apply Lemma~\ref{lem:key} and obtain that
\begin{equation*}
    \kstar{\cB^\varphi_\pi}
    \ =\ 
    \kstar{\sol{\cC^\varphi_\pi(\bfx)}}
    \ =\ 
    \sol{\cCtilde^\varphi_\pi(\bfx)}.
\end{equation*}
Thus the conditions in Lemma~\ref{lem:equiv1} can be encoded in the following integer linear system.

\begin{definition}
\label{def:linear-system-Q}
    For every $\GPtwo$ sentence $\varphi$ in normal form,
    the integer linear system $\cQ^\varphi(\bfx)$ is defined as follows.
    \begin{compactitem}
        \item The variables of $\cQ^\varphi(\bfx)$ are $x_{\pi_1, \eta, \pi_2}$
        for every $\pi_1, \pi_2 \in \Onetypes^\varphi$ and $\eta \in \twotypes^+$.
        Let $\bfx_{\pi_1}$ denote the vector of variables $x_{\pi_1, \eta, \pi_2}$.
        
        \item (Structure)
        For every $\pi \in \Onetypes^\varphi$,
        $\cCtilde^\varphi_\pi(\bfx_\pi)$
        is part of $\cQ^\varphi(\bfx)$,
        where $\cCtilde^\varphi_\pi(\bfx)$ is the integer linear system obtained by applying Lemma~\ref{lem:key} to the characteristic system $\cC^\varphi_\pi(\bfx)$.
            
        \item (Matching)
        For every $\tuple{\pi_1, \eta, \pi_2} \in \Onetypes^\varphi \times \twotypes^+ \times \Onetypes^\varphi$,
        the following constraint is in $\cQ^\varphi(\bfx)$,
        $x_{\pi_1, \eta, \pi_2} = x_{\pi_2, \dual{\eta}, \pi_1}$,
        where we recall that $\dual{\eta}$ denotes the dual 2-type.

        \item (Non-triviality)
            $\sum_{\pi_1 \in \Onetypes^\varphi}
            \sum_{\eta \in \twotypes^+}
            \sum_{\pi_2 \in \Onetypes^\varphi} x_{\pi_1, \eta, \pi_2}
             > 0$.
    \end{compactitem}
\end{definition}

\begin{lemma}\label{lem:equiv2}
    For every $\GPtwo$ sentence $\varphi$ in normal form,
    $\varphi$ is finitely satisfiable if and only if
    $\cQ^\varphi(\bfx)$ has a solution in $\bbN$.
\end{lemma}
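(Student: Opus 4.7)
The plan is to show that $\cQ^\varphi(\bfx)$ is simply a direct encoding, as a single integer linear system, of the conditions appearing in Lemma~\ref{lem:equiv1}, using Lemma~\ref{lem:key} to turn the condition ``$\bfg_\pi \in \kstar{\cB^\varphi_\pi}$'' (which is a priori a semilinear membership condition) into a solvability condition for an integer linear system. Once this correspondence is made precise, both directions follow by unpacking definitions and invoking Lemma~\ref{lem:equiv1}.

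In more detail, recall that $\cB^\varphi_\pi = \sol{\cC^\varphi_\pi(\bfx)}$ by construction, so by Lemma~\ref{lem:key} applied to each characteristic system we obtain
\begin{equation*}
\kstar{\cB^\varphi_\pi}\ =\ \kstar{\sol{\cC^\varphi_\pi(\bfx)}}\ =\ \sol{\cCtilde^\varphi_\pi(\bfx)}.
\end{equation*}
The Structure block of $\cQ^\varphi(\bfx)$ is, by Definition~\ref{def:linear-system-Q}, precisely the conjunction of the systems $\cCtilde^\varphi_\pi(\bfx_\pi)$ over $\pi \in \Onetypes^\varphi$, interpreted on disjoint blocks of variables $\bfx_\pi = (x_{\pi, \eta, \pi'})_{\eta, \pi'}$. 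The Matching and Non-triviality blocks of $\cQ^\varphi(\bfx)$ are identical, symbol by symbol, to the Matching and Non-triviality conditions in Lemma~\ref{lem:equiv1}, once we identify the component $\bfg_\pi(\eta, \pi')$ with the variable $x_{\pi, \eta, \pi'}$.

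For the forward direction I would assume $\varphi$ is finitely satisfiable. By Lemma~\ref{lem:equiv1} there exist vectors $\bfg_\pi \in \kstar{\cB^\varphi_\pi}$ for $\pi \in \Onetypes^\varphi$ satisfying Matching and Non-triviality. Define an assignment by $x_{\pi, \eta, \pi'} := \bfg_\pi(\eta, \pi')$. By the displayed equation above, each $\bfg_\pi$ is a solution of $\cCtilde^\varphi_\pi(\bfx_\pi)$, so the Structure block holds; the Matching and Non-triviality blocks transfer verbatim from Lemma~\ref{lem:equiv1}. Hence this assignment is a solution of $\cQ^\varphi(\bfx)$ in $\bbN$.

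For the converse, given a solution $\bfx^*$ of $\cQ^\varphi(\bfx)$ in $\bbN$, for each $\pi \in \Onetypes^\varphi$ define $\bfg_\pi$ by $\bfg_\pi(\eta, \pi') := x^*_{\pi, \eta, \pi'}$. The Structure block of $\cQ^\varphi(\bfx)$ says that $\bfg_\pi \in \sol{\cCtilde^\varphi_\pi(\bfx)} = \kstar{\cB^\varphi_\pi}$, and the Matching and Non-triviality blocks give the corresponding hypotheses of Lemma~\ref{lem:equiv1}. Invoking Lemma~\ref{lem:equiv1} yields that $\varphi$ is finitely satisfiable. There is no substantive obstacle here; the proof is purely a bookkeeping verification that Definition~\ref{def:linear-system-Q} was set up to match Lemma~\ref{lem:equiv1}, with the only nontrivial ingredient being the equality $\kstar{\cB^\varphi_\pi} = \sol{\cCtilde^\varphi_\pi(\bfx)}$ already supplied by Lemma~\ref{lem:key}. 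The real content then moves to the next step, where one must bound the complexity of checking solvability of $\cQ^\varphi(\bfx)$ in $\expp$, exploiting the structural bounds on $\cCtilde^\varphi_\pi$ from Lemma~\ref{lem:key}.
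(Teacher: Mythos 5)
Your proposal is correct and follows essentially the same route as the paper's proof: identify the variables $x_{\pi,\eta,\pi'}$ with the components $\bfg_\pi(\eta,\pi')$, use Lemma~\ref{lem:key} to recast the membership condition $\bfg_\pi \in \kstar{\cB^\varphi_\pi}$ as satisfaction of the Structure block, and transfer Matching and Non-triviality verbatim between Lemma~\ref{lem:equiv1} and Definition~\ref{def:linear-system-Q} in both directions.
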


\begin{proof}
    \textbf{\uline{Suppose that $\varphi$ is finitely satisfiable}}.
    By Lemma~\ref{lem:equiv1},
    there exists $\bfg_\pi \in \kstar{\cB^\varphi_\pi}$ for $\pi \in \Onetypes^\varphi$
    satisfying the Matching and Non-triviality conditions.

    For every $\pi_1, \pi_2 \in \Onetypes^\varphi$ and $\eta \in \twotypes^+$,
    let
    $a_{\pi_1, \eta, \pi_2} := \bfg_{\pi_1}(\eta, \pi_2)$.
    We claim that assigning  $x_{\pi_1, \eta, \pi_2}$ to $a_{\pi_1, \eta, \pi_2}$ gives a solution of the system $\cQ^\varphi(\bfx)$.
    \begin{compactitem}
        \item (Structure)
        By definition, for every $\pi \in \Onetypes^\varphi$,
        $\bfg_\pi \in \kstar{\cB^\varphi_\pi}$.
        By Lemma~\ref{lem:key},
        $\kstar{\cB^\varphi_\pi}
        = \kstar{\sol{\cC^\varphi_\pi(\bfx)}}
        = \sol{\cCtilde^\varphi_\pi(\bfx)}$.
        Thus, $\bfa_\pi$ is a solution of $\cCtilde^\varphi_\pi(\bfx)$.
        
        \item (Matching) For every $\tuple{\pi_1, \eta, \pi_2} \in \Onetypes^\varphi \times \twotypes^+ \times \Onetypes^\varphi$,
        by the Matching condition in Lemma~\ref{lem:equiv1},
        $\bfg_{\pi_1}(\eta, \pi_2) = \bfg_{\pi_2}(\dual{\eta}, \pi_1)$.
        Thus $a_{\pi_1, \eta, \pi_2} = a_{\pi_2, \dual{\eta}, \pi_1}$.

        \item (Non-triviality) By the Non-triviality condition in Lemma~\ref{lem:equiv1},
        there exists a 1-type $\pi \in \Onetypes^\varphi$
        such that $\bfg_\pi \neq \bfzero$.
        Therefore, 
        \begin{equation*}
            \begin{aligned}
                \sum_{\pi_1 \in \Onetypes^\varphi}
                \sum_{\eta \in \twotypes^+}\sum_{\pi_2 \in \Onetypes^\varphi} a_{\pi_1, \eta, \pi_2}
                \ =\ &
                \sum_{\pi_1 \in \Onetypes^\varphi}
                \sum_{\eta \in \twotypes^+}\sum_{\pi_2 \in \Onetypes^\varphi} \bfg_{\pi_1}(\eta, \pi_2) \\
                \ \ge\ &
                \sum_{\eta \in \twotypes^+}\sum_{\pi_2 \in \Onetypes^\varphi} \bfg_{\pi}(\eta, \pi_2) 
                \ >\ 0.
            \end{aligned}
        \end{equation*}
    \end{compactitem}

    \textbf{\uline{Suppose that $\cQ^\varphi(\bfx)$ has a solution, assigning  $x_{\pi_1, \eta, \pi_2}$ to $a_{\pi_1, \eta, \pi_2}$}}.
    For every $\pi \in \Onetypes^\varphi$,
    since $\bfa_{\pi}$ is a valid assignment of $\cCtilde^\varphi_\pi(\bfx)$,
    by Lemma~\ref{lem:key},
    $\bfa_{\pi} \in \kstar{\cB^\varphi_{\pi}}$.
    We claim that $\bfa_{\pi}$ are vectors satisfying the conditions in Lemma~\ref{lem:equiv1},
    which implies that $\varphi$ is finitely satisfiable.
    The proof of Non-triviality is routine. 
    Here we focus on the Matching constraint.
    For every $\tuple{\pi_1, \eta, \pi_2} \in \Onetypes^\varphi \times \twotypes^+ \times \Onetypes^\varphi$,
    by the Matching condition of $\cQ^\varphi$,
    $a_{\pi_1, \eta, \pi_2} = a_{\pi_2, \dual{\eta}, \pi_1}$.
    Since $\bfa_{\pi_1}(\eta, \pi_2) = a_{\pi_1, \eta, \pi_2}$,
    we have $\bfa_{\pi_1}(\eta, \pi_2) = \bfa_{\pi_2}(\dual{\eta}, \pi_1)$.
\end{proof}

The system $\cQ^\varphi(\bfx)$ is a Boolean combination of exponentially many integer linear constraints, and thus can be solved in
$\nexptime$. To get $\exptime$, we apply an idea from \cite{gctwocomplexity}, refining to get exponentially many homogeneous constraints.
\begin{definition}
    For every $\GPtwo$ sentence $\varphi$ in normal form,
    for $M \in \bbN$,
    the integer linear system $\cQ_M'^\varphi(\bfx)$ is the system $\cQ^\varphi(\bfx)$
    where the Structure constraint is replaced with:
    \begin{compactitem}
        \item (Structure$'$)
            For every $\pi \in \Onetypes^\varphi$,
            \begin{equation*}
                \exists \bfy_1, \bfy_2\ 
                (\bfx_\pi\ =\ \projp{n}{\bfy_1} + \bfy_2)\ \land\ 
                (\bfAtilde\bfy_1\ =\ \bfzero)\ \land\ 
                (\bfA\bfy_2\ =\ \bfzero)\ \land\ 
                ((\bfone \cdot \bfy_2) \le M (\bfone \cdot \bfy_2)),
            \end{equation*}
            where $\bfAtilde$ is the matrix obtained by applying Lemma~\ref{lem:key}
            to the characteristic system $\cC^\varphi_\pi(\bfx)$.
    \end{compactitem}
\end{definition}

\begin{lemma}\label{lem:equiv3}
    For every $\GPtwo$ sentence $\varphi$ in normal form,
    there exists $M^\varphi \in \bbN$, such that
    for every $M \ge M^\varphi$,
    $\cQ^\varphi(\bfx)$ has a solution in $\bbN$ if and only if
    $\cQ_{M}'^\varphi(\bfx)$ has a solution in $\bbN$.

    Moreover, $M^\varphi$ can be computed directly from $\varphi$ in time exponential in the length of $\varphi$, assuming binary encoding of coefficients.
\end{lemma}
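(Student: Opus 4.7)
The plan is to handle the two directions of the equivalence separately and then extract an effective bound for $M^\varphi$.

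For the direction $(\Leftarrow)$ I would observe that the only difference between $\cQ^\varphi$ and $\cQ_M'^\varphi$ is that the implication $(\bfy_1=\bfzero) \to (\bfy_2=\bfzero)$ in Structure is replaced by the linear inequality $(\bfone\cdot\bfy_2)\le M(\bfone\cdot\bfy_1)$ in Structure$'$. Since $\bfy_2$ ranges over nonnegative integer vectors, whenever $\bfone\cdot\bfy_1 = 0$ the latter inequality forces $\bfone\cdot\bfy_2 = 0$ and hence $\bfy_2=\bfzero$; so every solution of $\cQ_M'^\varphi$ is already a solution of $\cQ^\varphi$, regardless of $M$. This direction places no constraint on $M^\varphi$.

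For the direction $(\Rightarrow)$ my approach is to appeal to Lemma~\ref{lem:ilp}. The implication is the only non-conjunctive piece of $\cQ^\varphi$, and for each $\pi$ it rewrites as the disjunction $(\bfy_1^{(\pi)}\neq\bfzero) \lor (\bfy_2^{(\pi)}=\bfzero)$; expanding over all $\pi\in\Onetypes^\varphi$ puts $\cQ^\varphi$ in DNF. Crucially, each disjunct has the same total number $t$ of constraints and the same maximum coefficient magnitude $K$, because the case split only adds ``$=0$'' or ``$\ge 1$'' atoms of unit magnitude. I would pick any feasible disjunct (one exists by assumption) and apply Lemma~\ref{lem:ilp} to obtain a solution in which every individual variable is bounded by the closed-form quantity $V := c_1 t(tK)^{c_2 t}$ given there. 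Setting $M^\varphi$ equal to $V$ times the dimension of $\bfy_2$ ensures $\bfone\cdot\bfy_2^{(\pi)}\le M^\varphi$ for every $\pi$. I claim this same assignment now solves $\cQ_M'^\varphi$ for any $M\ge M^\varphi$: for each $\pi$, either $\bfy_1^{(\pi)}=\bfzero$ (so the implication, guaranteed by the chosen disjunct, forces $\bfy_2^{(\pi)}=\bfzero$ and both sides of Structure$'$ vanish), or $\bfone\cdot\bfy_1^{(\pi)}\ge 1$ (being a nonzero nonnegative integer vector) and the right-hand side is at least $M\ge M^\varphi\ge \bfone\cdot\bfy_2^{(\pi)}$.

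For the ``moreover'' clause, I would trace through the parameters: $\cQ^\varphi$ has $2^{O(n+m)}$ primary variables and auxiliary $\bfy$'s of similar order (using the dimensions from Lemma~\ref{lem:key}); the number of constraints per disjunct is polynomial in these dimensions; and coefficient magnitudes are at most $K$, the max coefficient of $\varphi$ (the factor from $\bfAtilde$ has norm $1$ by Lemma~\ref{lem:key}). Plugging these into Lemma~\ref{lem:ilp}'s explicit formula gives $M^\varphi$ of bit-length exponential in $|\varphi|$, computable in exponential time by reading the parameters off $\varphi$ and evaluating the formula. The main obstacle I anticipate is the apparent DNF blow-up, which is doubly exponential in $|\varphi|$; I would sidestep it by noting that Lemma~\ref{lem:ilp}'s bound depends only on the parameters $t$ and $K$ shared by every disjunct, so no enumeration of disjuncts is ever required to compute $M^\varphi$.
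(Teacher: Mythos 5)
Your proof is correct and follows essentially the same strategy as the paper's: use Lemma~\ref{lem:ilp} to bound a witness of $\cQ^\varphi$ and choose $M^\varphi$ to dominate $\bfone\cdot\bfy_2$ for that witness, with the backward direction being immediate since the linear inequality forces the implication. The only thing you spell out more carefully than the paper is the reduction of the implications to a DNF before invoking Lemma~\ref{lem:ilp} — a reasonable clarification, though your claim that every disjunct has ``the same $t$'' is slightly loose (disjuncts can differ in constraint count; what matters is that $t$ is bounded above by the total number of atoms, and the bound in Lemma~\ref{lem:ilp} is monotone in $t$), so the conclusion about computing $M^\varphi$ without enumeration still goes through.
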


\begin{proof}
    By Lemma~\ref{lem:ilp},
    if $\cQ^\varphi$ has a solution in $\bbN$,
    then it has a ``small solution'': every value is bounded by $M_0^\varphi = c_1t\left(t\maxlambda\right)^{c_2t}$,
    where $\maxlambda$ is the maximal coefficient in $\varphi$ and $t$ is the number of constraints in $\cQ^\varphi$, which is exponential in the number of unary and binary predicates.
    Thus $M_0^\varphi$ can be computed directly from $\varphi$ in time exponential in the length of $\varphi$, assuming binary encoding of coefficients.
    Let $M^\varphi = 2^{n}M_0^\varphi$.

    Note that the only difference between $\cQ^\varphi(\bfx)$ and $\cQ_{M}'^\varphi(\bfx)$
    pertains to constraints
    $(\bfy_1 = \bfzero) \to (\bfy_2 = \bfzero)$ and
    $ ((\bfone \cdot \bfy_2) \le M (\bfone \cdot \bfy_1))$.
    We show that the two systems are equi-satisfiable, provided $M$ is sufficiently large.

    \textbf{\uline{Suppose that $\cQ^\varphi(\bfx)$ has a solution in $\bbN$}}.
    We claim that a small solution $\bfa_{\pi}$ of $\cQ^\varphi(\bfx)$ 
    is a solution of $\cQ'^\varphi(\bfx)$.
    For every $\pi \in \Onetypes^\varphi$, 
    let $\bfb_1$ and $\bfb_2$ be corresponding assignments for $\bfy_1$ and $\bfy_2$ in $\cCtilde^\varphi_\pi(\bfx_\pi)$.
    If $\bfa_\pi = \bfzero$, then $\bfb_1 = \bfzero$ and $\bfb_2 = \bfzero$, 
    which implies that
    $(\bfone \cdot \bfb_2) \le M^\varphi (\bfone \cdot \bfb_1)$ holds trivially.
    On the other hand, if $\bfa_\pi \neq \bfzero$,
    then $\bfb_1 \neq \bfzero$, which implies that $(\bfone \cdot \bfb_1) \ge 1$.
    Therefore
    $(\bfone \cdot \bfb_2)
    \ \le\ 
    2^n M_0^\varphi
    \ =\ 
    M^\varphi
    \ \le\ 
    M
    \ \le\ 
    M (\bfone \cdot \bfb_1)$

    \textbf{\uline{Suppose that $\cQ_M'^\varphi(\bfx)$ has a solution $\bfa_{\pi}$ in $\bbN$}}.
    We claim that  the $\bfa_{\pi}$ also form a solution to $\cQ^\varphi(\bfx)$.
    It is sufficient to show that,
    for  vectors $\bfb_1$ and $\bfb_2$,
    if $(\bfone \cdot \bfb_2) \le M (\bfone \cdot \bfb_1)$,
    then $(\bfb_1 = \bfzero) \to (\bfb_2 = \bfzero)$.
    If $\bfb_1 = \bfzero$, then $(\bfone \cdot \bfb_2) \le M (\bfone \cdot \bfb_1) = 0$,
    which implies that $\bfb_2 = \bfzero$.
    Otherwise if $\bfb_1 \neq \bfzero$, 
    then $(\bfb_1 = \bfzero) \to (\bfb_2 = \bfzero)$ holds trivially.
\end{proof}

We are now ready to prove Theorem \ref{thm:gptwoexp}:

\begin{proof}
    For every $\GPtwo$ sentence $\varphi$ in normal form,
    by Lemma~\ref{lem:equiv1}, Lemma~\ref{lem:equiv2}, and Lemma~\ref{lem:equiv3},
    $\varphi$ is finitely satisfiable if and only if $\cQ_{M^\varphi}'^\varphi(\bfx)$ has a solution in $\bbN$.
    Note that $\cQ_{M^\varphi}'^\varphi(\bfx)$ is an integer linear system with 
    $2^\bigO{m\left(n + \log K\right)}$ variables, 
    $2^\bigO{m\left(n + \log K\right)}$ constraints, and
    coefficients bounded by $M^\varphi$.
    Moreover, $\cQ_{M^\varphi}'^\varphi(\bfx)$ is \emph{homogeneous}. 
    Thus, it has a solution in the natural numbers if and only if it has a \emph{rational} solution.
    The feasibility of an integer linear system over the rationals can be checked in time polynomial in the number of variables, the number of equations, and the logarithm of the absolute value of the maximum coefficients~\cite{IP}. Consequently, this process takes time exponential in the length of $\varphi$, assuming binary encoding of coefficients
\end{proof}

Note that the solutions to the linear system used in proving Theorem \ref{thm:gptwoexp} will contain a representation of every finite model:  the vector of cardinalities of $1$-types. But it contains additional vectors that do not correspond to the cardinalities of one types in any model. This is related to the fact that while existence of a pseudo-model implies existence of a model (Lemma \ref{lem:pseudo}), cardinalities are not preserved in moving from a pseudo-model to the corresponding model.  We return to this in Section \ref{sec:gptwocounter}.

\section{Spectra of $\Ctwo$ formulas} \label{sec:ctwo}



In this section we will consider converting formulas in logic with arithmetic into linear constraints. These constraints will characterize  the set of models of a $\Ctwo$ formula. With this characterization in place, we have reduced finite satisfiability of the formula to a check for satisfiability of the conjunction of the linear constraints, which can be done via standard linear algebra solving. From this it will follow that $\Ctwo$ with a variety of global cardinality constraints is decidable: we just conjoined the cardinality constraints with the constraints characterizing the models.

We fix a vocabulary $\vocab$ which consists of $n$ unary predicates $U_1, \ldots, U_n$ and $m$ binary predicates $R_1, \ldots, R_m$.
We consider $\Ctwo$ sentence $\varphi$ over $\vocab$ in \emph{normal form}:
\begin{equation*}
    \varphi\ :=\ 
    \forall x\ \gamma(x)\ \land\ 
    \forall x\ \forall y\ \left(x \neq y \to \alpha(x, y)\right)\ \land\ 
    \bigwedge_{i \in \intsinterval{m'}} \forall x\ \exists^{=k_i} y\ \left(R_i(x,y) \land x \neq y\right),
\end{equation*}
where $\gamma(x)$ and $\alpha(x, y)$ are quantifier-free,
$R_i$ are binary predicates,
$0 \le m' \le m$,
and $k_i \in \bbN$.

\begin{definition}[Count bound vector]
The count bound vector of $\varphi$, denoted by $\bfk^\varphi$, 
is the $m'$ dimension vector such that
the $i^{th}$ component of $\bfk^\varphi$ is $k_i$.
\end{definition}

We denote the sum of $k_i$ by $K^\varphi$.
\begin{definition}[Forward and backward characteristic vector of 2-types]
    For $m'$ as above and a 2-type $\eta$, the forward characteristic vector of $\eta$,
    denoted by $\eta^\trif$, is an $m'$ dimensional vector defined:
    for every $1 \le i \le m'$, if $R_i(x, y) \in \eta$,
    then the $i^{th}$ component of $\eta^\trif$ is 1.
    Otherwise, it is $0$.
    The backward characteristic vector of $\eta$,
    denoted by $\eta^\trib$,
    is defined similarly.
\end{definition}

\begin{definition}[Forward and backward silent 2-type]
    For $m'$ as above and a 2-type $\eta$ 
    we say that $\eta$ is forward-silent if for $1 \le i \le m'$,
    $\neg R_i(x, y) \in \eta$.
    Backward-silent is defined similarly.
\end{definition}

We define the notions of $\Onetypes^\varphi$ and $\twotypes^\varphi_{\pi_1, \pi_2}$ similarly to those in $\GPtwo$.
We also define
$\cKvpnull_{\pi_1, \pi_2} \subseteq \twotypes^\varphi_{\pi_1, \pi_2}$ as the set of 2-types that are both forward- and backward-silent,
$\cKvpf_{\pi_1, \pi_2}$ for 2-types that are not forward-silent but are backward-silent,
and
$\cKvpfb_{\pi_1, \pi_2}$ for 2-types that are neither forward- nor backward-silent.

In the rest of the section, we deal only with normal form sentences. We explain how the results apply to arbitrary $\Ctwo$ sentences in the appendix.

\begin{definition}[Spectrum and $1$-type spectrum]
    For every $\Ctwo$ sentence $\varphi$,
    the spectrum of $\varphi$, denoted by $\spec(\varphi)$,
    is the set of natural numbers that are sizes of finite models of $\varphi$.

    The 1-type spectrum of $\varphi$, denoted by $\pispec(\varphi)$,
    is the set of 1-type cardinality vectors (see Definition~\ref{def:cardboundvector}) corresponding to finite models of $\varphi$.
\end{definition}

We aim to show: 

\begin{theorem} \label{thm:spectrasemilinear}
    For every $\Ctwo$ sentence $\varphi$,
    $\spec(\varphi)$ and $\pispec(\varphi)$ are semilinear.
    Furthermore, each spectrum can be represented by an existential Presburger formula of size doubly exponential in $\varphi$.
\end{theorem}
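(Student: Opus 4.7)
The plan is to adapt the framework of Section~\ref{sec:gptwo} to track 1-type cardinalities explicitly and to enforce the universal binary constraint $\forall x\forall y\,(x\neq y \to \alpha(x,y))$ from the $\Ctwo$ normal form. For each 1-type $\pi \in \Onetypes^\varphi$, I define a characteristic system $\cC^\varphi_\pi(\bfx)$ whose solutions $\cB^\varphi_\pi$ are valid star vectors for $\pi$: entries $x_{\eta, \pi'}$ count neighbors of 2-type $\eta$ and 1-type $\pi'$, subject to the equalities $\sum_{\eta \in \twotypes_i}\sum_{\pi'} x_{\eta, \pi'} = k_i$ from the exact-count quantifiers, together with zeroing of incompatible triples. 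Appending a unit coordinate to each vector of $\cB^\varphi_\pi$ ensures that a Kleene-star sum $\bfg_\pi \in \kstar{\cB^\varphi_\pi}$ records the number of summands in that coordinate, which is the cardinality $n_\pi$. Lemma~\ref{lem:key} then supplies a Boolean combination of exponential-size integer linear systems whose solutions are exactly $\kstar{\cB^\varphi_\pi}$, and the matching constraint $\bfg_{\pi_1}(\eta, \pi_2) = \bfg_{\pi_2}(\dual{\eta}, \pi_1)$ transfers verbatim from Lemma~\ref{lem:equiv1}.

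The new ingredient, specific to $\Ctwo$, is the \emph{pair-closure} constraint: every ordered pair $(v, u)$ with $v \neq u$ in $V_{\pi_1} \times V_{\pi_2}$ must carry an $\alpha$-compatible 2-type. The 2-types in $\cKvpnull_{\pi_1, \pi_2}$ (those $\alpha$-compatible with $(\pi_1, \pi_2)$ that use no counted $R_i$ in either direction) are unconstrained by the characteristic systems and can serve as ``fillers''. Existentially projecting out the $\bfg$-entries for these filler types, the pair-closure constraint reduces to $\sum_{\eta} \bfg_{\pi_1}(\eta, \pi_2) \le n_{\pi_1} n_{\pi_2}$ (or $\le n_\pi(n_\pi - 1)$ on the diagonal), summed over counted-audible $\eta$, and strengthened to equality when $\cKvpnull_{\pi_1, \pi_2} = \emptyset$. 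Both versions are bilinear and constitute the main obstacle.

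I linearize them via case analysis on whether each $n_\pi$ is ``large'' ($\ge 2K^\varphi$, where $K^\varphi := \sum_i k_i$) or takes a specific ``small'' value in $\{0, 1, \dots, 2K^\varphi - 1\}$. The rationale is that the sum of counted-audible $\bfg$-entries between $V_{\pi_1}$ and $V_{\pi_2}$ is bounded by $K^\varphi(n_{\pi_1} + n_{\pi_2})$, using the characteristic constraint on forward-counted edges from $V_{\pi_1}$ together with matching for backward-counted edges from $V_{\pi_2}$. When both $n_{\pi_1}, n_{\pi_2} \ge 2K^\varphi$ one has $(n_{\pi_1}-K^\varphi)(n_{\pi_2}-K^\varphi) \ge K^{\varphi\,2}$, so the pair-closure inequality is automatic; and the equality version in a filler-free pair is infeasible since it would force $\min(n_{\pi_1}, n_{\pi_2}) \le 2K^\varphi$. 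When one side is fixed to a small value, the product $n_{\pi_1} n_{\pi_2}$ becomes linear in the other variable. The number of branches is $(2K^\varphi + 1)^{|\Onetypes^\varphi|}$, doubly exponential in $|\varphi|$; each branch is an existential Presburger formula of exponential size.

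For the realization direction, given a satisfying assignment I construct a model by creating $n_\pi$ vertices in each class $V_\pi$ and distributing counted-audible 2-types across each pair $(V_{\pi_1}, V_{\pi_2})$ via a bipartite flow argument whose feasibility follows from the matching equalities; the remaining pairs are labeled by a fixed filler 2-type from $\cKvpnull_{\pi_1, \pi_2}$, which is non-empty whenever the pair-closure constraint was merely an inequality. In contrast to the duplicate-and-swap argument of Lemma~\ref{lem:pseudo}, this construction preserves 1-type cardinalities exactly. Assembling, $\pispec(\varphi)$ is defined by the doubly-exponential disjunction over case-analysis branches of the existential Presburger formulas built from the Kleene-star representations, matching, and linearized pair-closure constraints, yielding an existential Presburger formula of doubly exponential size; $\spec(\varphi)$ is its projection onto $\sum_\pi n_\pi$, preserving semilinearity.
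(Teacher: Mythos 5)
Your approach is genuinely different from the paper's: the paper enumerates a small explicit \emph{core} substructure (Definition~\ref{def:core}) containing all vertices of ``rare'' 1-types, then tracks for each non-core vertex an \emph{extended behavior} $\tuple{g, \bff}$ that pins down, via the map $g$, the exact 2-type to each individual core vertex. You instead try to lift the Kleene-star / matching machinery of Section~\ref{sec:gptwo} directly, adding cardinality coordinates, a bilinear pair-closure constraint, and a large/small case split. This would be a cleaner reduction if it worked, but there is a genuine gap in the realization direction.

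The gap is a Gale--Ryser obstruction. Once you fix a small value for some $n_{\pi_1}$, the Kleene-star decomposition only records the \emph{multiset} of behavior vectors realized in $V_{\pi_2}$ and the aggregate $\bfg$, and the matching constraint only equates the two aggregate totals $\bfg_{\pi_1}(\eta,\pi_2)=\bfg_{\pi_2}(\dual{\eta},\pi_1)$. This is not enough to guarantee a simple bipartite assignment of 2-types. Concretely, take $n_{\pi_1}=2$ with prescribed outdegrees $(3,1)$ toward $V_{\pi_2}$ for some audible $\eta$, and let the multiset on the $V_{\pi_2}$ side have two vertices with $\bff(\dual{\eta},\pi_1)=2$ and the rest with $0$; matching holds ($4=4$), every individual entry respects the bound $\le n_{\pi_1}$, yet no simple graph exists because $v_1$ alone already touches three distinct $u$'s while only two $u$'s are allowed nonzero degree. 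A max-flow or Birkhoff argument gives a \emph{multigraph} with the right aggregate edge counts, but collapsing it to a simple graph requires swapping room that is simply absent when one side has fewer than roughly $2K^\varphi$ vertices; and fixing $n_{\pi}$ numerically does not fix which small vertex carries which 2-type. This is precisely what the paper's extended-behavior function $g$ and the $\match_{1,\cH}/\match_{2,\cH}/\bige_{\cH}$ constraints are engineered to handle: the core is enumerated together with its internal edge structure, $g$ pins each non-core vertex's 2-type to each core vertex, and the $\bige$ condition (at least $(2K^\varphi+1)^2$ realizers of each non-core type/tuple) gives the slack the appendix swapping argument needs. To repair your proposal along its own lines you would have to enumerate, not just the cardinalities of small classes, but their full labeled substructure \emph{and} augment the behavior vectors with a $g$-component toward those small classes, at which point you have essentially re-derived the paper's construction.

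A secondary, smaller issue: you write that the pair-closure constraint reduces to $\sum_\eta \bfg_{\pi_1}(\eta,\pi_2)\le n_{\pi_1}n_{\pi_2}$, but for realizability you also need per-vertex constraints $\sum_\eta \bff(\eta,\pi_2)\le n_{\pi_2}$ (respectively $n_\pi-1$ on the diagonal) on every behavior vector $\bff$ in the decomposition, not only on the aggregate; the Kleene-star set $\cB^\varphi_\pi$ as you define it does not include these, and they depend on the $n_{\pi'}$ values being chosen in the same formula, which the Kleene-star representation of Lemma~\ref{lem:key} does not accommodate without further reworking.
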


\begin{definition}[Silent compatible 1-types]
    For a pair of 1-types $\pi_1$ and $\pi_2$,
    we say that $\pi_1$ and $\pi_2$ are \emph{silent-compatible} w.r.t. $\varphi$,
    if $\cKvpnull_{\pi_1, \pi_2}$ is non-empty. Otherwise, we say, following~\cite{ctwocomplexity}, that 
    $\pi_1$ and $\pi_2$ are a \emph{noisy pair}.
\end{definition}

\begin{lemma} \label{lem:differentiated}
    For every $\Ctwo$ sentence $\varphi$ in normal form and $\vocab$-structure $\cG \models \varphi$,
    for every pair of 1-types $\pi_1$ and $\pi_2$,
    if $\pi_1$ and $\pi_2$ are a noisy pair (w.r.t. $\varphi$), then
    at least one of the 1-types $\pi_1$ or $\pi_2$ is realized by at most $2K^\varphi+1$ vertices in $\cG$.
\end{lemma}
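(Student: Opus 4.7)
The plan is to exploit the strong structural constraint that the noisy pair condition imposes: if $\pi_1,\pi_2$ form a noisy pair, then for every pair of distinct vertices $v_1 \in V_{\pi_1}, v_2 \in V_{\pi_2}$ (where $V_\pi$ denotes the set of vertices with 1-type $\pi$), the 2-type $\twotypeof(v_1,v_2)$ is forced to be audible in the counting-quantified relations $R_1,\ldots,R_{m'}$. That is, at least one of $R_i(v_1,v_2)$ or $R_i(v_2,v_1)$ must hold for some $i\in\intsinterval{m'}$. This turns the problem into a simple edge-counting argument.

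First I would set up the edge count. Because $\varphi$ is in normal form, every vertex $v$ of $\cG$ has exactly $k_i$ outgoing $R_i$-edges for each $i\in\intsinterval{m'}$, so $v$ emits exactly $K^\varphi=\sum_i k_i$ edges in the counting relations. Each such edge, for $v\in V_{\pi_1}$, ``witnesses'' at most one pair in $V_{\pi_1}\times V_{\pi_2}$ (namely, a pair whose second coordinate is the edge's target), and symmetrically for $v\in V_{\pi_2}$.

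I would then split into two cases depending on whether $\pi_1=\pi_2$. For the case $\pi_1\neq\pi_2$, assume without loss of generality $|V_{\pi_1}|\le |V_{\pi_2}|$. Every pair in $V_{\pi_1}\times V_{\pi_2}$ must be witnessed by an edge originating in $V_{\pi_1}$ or in $V_{\pi_2}$, so
\begin{equation*}
|V_{\pi_1}|\cdot|V_{\pi_2}|\ \le\ K^\varphi\cdot|V_{\pi_1}| + K^\varphi\cdot|V_{\pi_2}|\ \le\ 2K^\varphi\cdot|V_{\pi_2}|,
\end{equation*}
which gives $|V_{\pi_1}|\le 2K^\varphi\le 2K^\varphi+1$. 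For the case $\pi_1=\pi_2$, the $\binom{|V_{\pi_1}|}{2}$ unordered pairs of distinct vertices in $V_{\pi_1}$ must each be witnessed by one of the $K^\varphi\cdot|V_{\pi_1}|$ outgoing counting edges from $V_{\pi_1}$, yielding $|V_{\pi_1}|(|V_{\pi_1}|-1)/2 \le K^\varphi\cdot|V_{\pi_1}|$, i.e.\ $|V_{\pi_1}|\le 2K^\varphi+1$.

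There is no real obstacle; the only care needed is keeping track of whether pairs are ordered or unordered in the two cases so that the uniform bound $2K^\varphi+1$ comes out on the nose (the ``$+1$'' is only needed in the $\pi_1=\pi_2$ case). The noisy-pair hypothesis is used exactly once, to certify that every cross-pair consumes at least one of the scarce counting edges.
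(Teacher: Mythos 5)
Your proof is correct and follows essentially the same double-counting argument as the paper: in both, the noisy-pair hypothesis forces every (ordered or unordered) pair across $V_{\pi_1}$ and $V_{\pi_2}$ to consume at least one of the $K^\varphi$ outgoing counting atoms per vertex, yielding the inequalities $\abs{V_{\pi_1}}\cdot\abs{V_{\pi_2}}\le K^\varphi(\abs{V_{\pi_1}}+\abs{V_{\pi_2}})$ (for $\pi_1\neq\pi_2$) and $\abs{V_{\pi_1}}(\abs{V_{\pi_1}}-1)\le 2K^\varphi\abs{V_{\pi_1}}$ (for $\pi_1=\pi_2$). Your WLOG treatment of the $\pi_1\neq\pi_2$ case is a slightly cleaner route to the same conclusion than the paper's explicit fraction manipulation.
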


\begin{proof}
    
    Recall that $V_{\pi}$ denotes the vertices in $V$ with 1-type $\pi$.
    Because $\cG \models \varphi$, for vertex $v \in \cG$,
    $\sum_{u \in V \setminus \set{v}} \twotypeof^\trif(v, u) = \bfk^\varphi$.
    We consider two cases.

    \textbf{\uline{Suppose that $\pi_1 = \pi_2$.}}
    Since $\twotypeof^\trif(v, u) = \twotypeof^\trib(u, v)$,
    we have
    \begin{equation*}
        \begin{aligned}
            \sum_{v \in V_{\pi_1}} \sum_{u \in V_{\pi_1} \setminus \set{v}}
            (\twotypeof^\trif(v, u) + \twotypeof^\trib(v, u))
            \ =\ &
            2 \sum_{v \in V_{\pi_1}} \sum_{u \in V_{\pi_1} \setminus \set{v}}
            \twotypeof^\trif(v, u) \\
            \ \vle\ &
            2 \sum_{v \in V_{\pi_1}} \sum_{u \in V \setminus \set{v}} \twotypeof^\trif(v, u)
            \ =\ 2\abs{V_{\pi_1}} \bfk^\varphi.
        \end{aligned}
    \end{equation*}
    For every pair of vertices $v$ and $u$ in $V_{\pi_1}$ with $v \neq u$,
    by definition of noisy pair,
    the sum of components of $\twotypeof^\trif(v, u) + \twotypeof^\trib(v, u)$ is at least $1$.
    Therefore, we have
    \begin{equation*}
        \sum_{v \in V_{\pi_1}} \sum_{u \in V_{\pi_1} \setminus \set{v}} 1
        \ =\ \abs{V_{\pi_1}}\cdot(\abs{V_{\pi_1}} - 1)
        \ \le\ 2\abs{V_{\pi_1}} \cdot K^\varphi,
    \end{equation*}
    which implies that $\abs{V_{\pi_1}} \le 2K^\varphi + 1$. 
    
    \textbf{\uline{Suppose that $\pi_1 \neq \pi_2$.}}
    By a similar argument, 
    $\abs{V_{\pi_1}} \cdot \abs{V_{\pi_2}} \le (\abs{V_{\pi_1}} + \abs{V_{\pi_2}}) \cdot K^\varphi$.
    If $\abs{V_{\pi_1}} \leq 2K^\varphi+1$ we are done. So assume the opposite.
    Then we have
    \begin{equation*}
        \abs{V_{\pi_2}}
        \ \le\ 
        \frac{K^\varphi}{1-\frac{K^\varphi}{\abs{V_{\pi_1}}}}
        \ \le\ 
        \frac{K^\varphi}{1-\frac{K^\varphi}{2K^\varphi + 1}}
        \ \le\ 
        2K^\varphi.
    \end{equation*}
\end{proof}

The following straightforward proposition gives
conditions that characterize the satisfiability of $\Ctwo$ by a model in terms of the prior constructs.
\begin{proposition}\label{prop:ctwo_graph}
    For every $\Ctwo$ sentence $\varphi$ in normal form and $\vocab$-structure $\cG$,
    $\cG$ is a model of $\varphi$ if and only if following conditions hold.
    For every $v, u \in V$ with $v \neq u$,
    \begin{compactitem}
        \item (1-type) $\onetypeof(v) \in \Onetypes^\varphi$,
        \item (2-type) $\twotypeof(v, u) \in \twotypes^\varphi_{\onetypeof(v), \onetypeof(u)}$, and
        \item (Counting) $\sum_{u \in V \setminus \set{v}} \twotypeof^\trif(v, u) = \bfk^\varphi$,
    \end{compactitem}
    where $\bfk^\varphi$ is the count bound vector of $\varphi$.
\end{proposition}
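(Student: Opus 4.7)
The plan is to prove each direction by simply unpacking the three conjuncts of the normal form of $\varphi$ and matching them with the three conditions in the proposition. The definitions of $\Onetypes^\varphi$ and $\twotypes^\varphi_{\pi_1, \pi_2}$ (defined analogously to the $\GPtwo$ case) are designed precisely to encode satisfaction of the unary and binary universal parts, while the forward characteristic vector $\twotypeof^\trif$ is designed to encode which of the counted binary predicates $R_1, \ldots, R_{m'}$ appear in a given 2-type. So the correspondence will be direct.

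For the forward direction, suppose $\cG \models \varphi$. Fix $v \in V$. Since $\cG \models \forall x\ \gamma(x)$, we have $\cG \models \gamma(v)$, and by the definition of 1-type this means $\onetypeof(v)(x) \models \gamma(x)$, i.e., $\onetypeof(v) \in \Onetypes^\varphi$. For distinct $v, u \in V$, from $\cG \models \forall x\forall y\ (x \ne y \to \alpha(x,y))$ applied to $(v,u)$ and to $(u,v)$ we obtain that $\tuple{\onetypeof(v), \twotypeof(v,u), \onetypeof(u)}$ satisfies the binary universal part, so $\twotypeof(v,u) \in \twotypes^\varphi_{\onetypeof(v), \onetypeof(u)}$. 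Finally, for each $i \in \intsinterval{m'}$, the counting conjunct $\forall x\, \exists^{=k_i} y\ (R_i(x,y) \land x \ne y)$ ensures that the number of $u \in V \setminus \set{v}$ with $R_i(v,u)$ is exactly $k_i$; by definition of $\twotypeof^\trif$, the $i^{\text{th}}$ component of $\twotypeof^\trif(v,u)$ is $1$ iff $R_i(v,u) \in \twotypeof(v,u)$, so summing over $u \ne v$ yields the $i^{\text{th}}$ entry $k_i$, giving the counting equality $\sum_{u \in V \setminus \set{v}} \twotypeof^\trif(v,u) = \bfk^\varphi$.

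For the converse, assume the three conditions and verify each conjunct of $\varphi$. The 1-type condition immediately implies $\cG \models \gamma(v)$ for every $v$, since $\gamma$ is a quantifier-free formula in one variable whose truth value at $v$ depends only on $\onetypeof(v)$. Similarly, for distinct $v, u$, the 2-type condition yields $\cG \models \alpha(v,u)$ because $\alpha$ is quantifier-free and its truth depends only on $\tuple{\onetypeof(v), \twotypeof(v,u), \onetypeof(u)}$. For each $i \in \intsinterval{m'}$, the counting equality, read componentwise, says that exactly $k_i$ elements $u \ne v$ satisfy $R_i(v,u)$, which is precisely the statement of the $i^{\text{th}}$ counting conjunct.

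There is no real obstacle: the statement is essentially a restatement of the semantics of the normal form in type-graph language, and the work has already been done in choosing the definitions of $\Onetypes^\varphi$, $\twotypes^\varphi_{\pi_1,\pi_2}$, and $\twotypeof^\trif$. The only thing to be careful about is that the forward characteristic vector truncates to the first $m'$ coordinates (the counted predicates), which exactly matches the index range of $\bfk^\varphi$, so the counting condition quantifies only over the predicates actually constrained by $\varphi$ and not over the full binary vocabulary.
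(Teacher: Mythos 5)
Your proof is correct and takes exactly the approach the paper intends: the paper itself calls this a ``straightforward proposition'' and omits the proof, and your argument is the direct unpacking of the three conjuncts of the normal form against the definitions of $\Onetypes^\varphi$, $\twotypes^\varphi_{\pi_1,\pi_2}$, and $\twotypeof^\trif$ that it implicitly relies on. In particular you correctly note the two points where care is needed: compatibility for a 2-type requires checking the binary universal part in both orientations $(v,u)$ and $(u,v)$ (matching the two-clause definition of triple compatibility inherited from the $\GPtwo$ case), and $\twotypeof^\trif$ truncates to the $m'$ counted predicates so that its index range matches $\bfk^\varphi$.
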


\begin{definition}[Partial model of $\Ctwo$ sentence] \label{def:partial}
    For every $\Ctwo$ sentence $\varphi$ and $\vocab$-structure $\cG$,
    we say that a complete colored graph $\cG$ is a \emph{partial model} of $\varphi$
    if $\cG$ satisfies the first two conditions in Proposition~\ref{prop:ctwo_graph}.
\end{definition}

The following is obvious.
\begin{lemma}\label{lem:core_partial_model}
    For every $\Ctwo$ sentence $\varphi$ in normal form,
    for every $\vocab$-structure $\cG \models \varphi$,
    for every $\cH \subseteq \cG$,
    $\cH$ is a partial model of $\varphi$.
\end{lemma}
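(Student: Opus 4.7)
The plan is to observe that being a partial model of $\varphi$ is defined purely by universal conditions on single vertices and pairs of vertices, which are preserved under taking induced sub-structures.

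First I would make the elementary but essential observation that for any induced sub-structure $\cH \subseteq \cG$ and any $v \in \cH$, the 1-type of $v$ computed in $\cH$ coincides with its 1-type computed in $\cG$: the 1-type depends only on which $U_i(v)$ and $R_i(v,v)$ hold, and these are inherited directly from $\cG$. Likewise, for any distinct $v, u \in \cH$, the 2-type $\twotypeof(v,u)$ is determined solely by the binary facts $R_i(v,u)$ and $R_i(u,v)$, which are again inherited. In particular, if $\cG$ is a complete colored graph, so is $\cH$.

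Next, I would unwind the hypothesis $\cG \models \varphi$ using the normal form. Since $\cG$ satisfies $\forall x\ \gamma(x)$, every $v \in \cG$ (and hence every $v \in \cH$) has a 1-type satisfying $\gamma$, so $\onetypeof(v) \in \Onetypes^\varphi$, establishing the (1-type) condition of Proposition~\ref{prop:ctwo_graph}. Similarly, $\cG \models \forall x\forall y\ (x \neq y \to \alpha(x,y))$ ensures that for every pair of distinct vertices $v, u$ of $\cG$, and hence of $\cH$, the tuple $\tuple{\onetypeof(v), \twotypeof(v,u), \onetypeof(u)}$ is compatible with $\alpha$, i.e.\ $\twotypeof(v,u) \in \twotypes^\varphi_{\onetypeof(v), \onetypeof(u)}$, giving the (2-type) condition. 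By Definition~\ref{def:partial}, $\cH$ is then a partial model of $\varphi$.

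There is no real obstacle here: the content of the lemma is simply that the ``universal part'' of the normal form, excluding the counting clauses $\forall x\ \exists^{=k_i} y\ (R_i(x,y) \land x \neq y)$, is preserved under passing to induced substructures, while the counting clauses (which would fail in general) are precisely the ones dropped in the definition of partial model. The lemma serves as a structural sanity check underpinning subsequent constructions in Section~\ref{sec:ctwo}.
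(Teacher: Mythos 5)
Your proof is correct and matches what the paper treats as ``obvious'': types are inherited by induced substructures, the 1-type and 2-type conditions of Proposition~\ref{prop:ctwo_graph} are universal and hence preserved, and the counting condition is precisely the one the definition of partial model discards. The paper gives no written proof, so your argument simply fills in the intended (and only natural) reasoning.
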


\myparagraph{Core of structures}
\begin{definition} \label{def:core}
    For $\cG$ a $\vocab$-structure, $\ell \in \bbN$,
    we say that substructure $\cH \subseteq \cG$ is an $\ell$-core of $\cG$ if 
    \begin{compactitem}
        \item for every 1-type $\pi$,
        the number of vertices in $\cG \setminus \cH$
        with 1-type $\pi$ is either $0$ or at least $\ell$
        \item for  1-types $\pi_1$, $\pi_2$, and 2-type $\eta$,
        there are either $0$ and at least $\ell$
        pairs of vertices $v, u \in \cG \setminus \cH$ satisfying that
        $v \neq u$,
        $\onetypeof(v) = \pi_1$,
        $\twotypeof(v, u) = \eta$,
        and $\onetypeof(u) = \pi_2$.
    \end{compactitem}
\end{definition}

Note that the core of $\cG$ is not unique,
and $\cG$ is trivially a core of itself for every $\ell \in \bbN$.
Therefore, we are interested in the existence of a \emph{small} core of $\cG$.
We can show that for every $\ell$, every $\vocab$-structure has an \emph{exponential-size} core.

\begin{lemma}\label{lem:small_core}
    For every $\vocab$-structure $\cG$, for every $\ell \in \bbN$,
    $\cG$ has an $\ell$-core $\cH$ with size at most $(2^{2n+4m+2}) \cdot \ell$.
\end{lemma}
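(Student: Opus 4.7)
The plan is to construct $\cH$ greedily by iteratively eliminating ``bad'' occurrences. At any stage, call a 1-type $\pi$ \emph{bad} if the number of vertices in $\cG \setminus \cH$ with 1-type $\pi$ lies strictly between $0$ and $\ell$, and call a triple $\tuple{\pi_1, \eta, \pi_2}$ \emph{bad} if the number of ordered pairs $(v,u)$ inside $\cG \setminus \cH$ with $v \neq u$, $\onetypeof(v) = \pi_1$, $\twotypeof(v,u)=\eta$, and $\onetypeof(u) = \pi_2$ lies strictly between $0$ and $\ell$. Starting from $\cH = \emptyset$, I would repeat the following step: if some 1-type $\pi$ is bad, move \emph{all} vertices of 1-type $\pi$ currently in $\cG \setminus \cH$ into $\cH$; otherwise, if some triple $T$ is bad, pick one endpoint from each of the (fewer than $\ell$) ordered pairs realising $T$ in $\cG \setminus \cH$, and add these to $\cH$. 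Terminate when no bad 1-type and no bad triple remain; at that point, every 1-type and every triple has count either $0$ or at least $\ell$ in $\cG \setminus \cH$, so $\cH$ is an $\ell$-core by Definition~\ref{def:core}.

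To bound $|\cH|$, I would introduce the potential
\[
\Phi(\cH) \ :=\ \bigl|\{\pi \in \Onetypes : \text{$\pi$ is realised in $\cG\setminus\cH$}\}\bigr|\ +\ \bigl|\{T \in \Onetypes \times \twotypes \times \Onetypes : \text{$T$ is realised in $\cG\setminus\cH$}\}\bigr|.
\]
Each step only enlarges $\cH$, which can only decrease every 1-type and triple count inside $\cG \setminus \cH$; hence the two index sets defining $\Phi$ can only shrink. Moreover, the particular bad element targeted has its count driven from $\ge 1$ to exactly $0$, so $\Phi$ strictly decreases by at least $1$ at every step. Since
\[
\Phi(\emptyset)\ \le\ |\Onetypes|\ +\ |\Onetypes|^2 \cdot |\twotypes|\ =\ 2^{n+m} + 2^{2n+4m}\ \le\ 2^{2n+4m+1},
\]
the procedure terminates in at most $2^{2n+4m+1}$ steps. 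Each step adds at most $\ell - 1$ vertices to $\cH$ (in the triple case, one endpoint per pair suffices because removing either endpoint of an ordered pair destroys it). Therefore $|\cH|\ \le\ 2^{2n+4m+1}\cdot(\ell-1)\ \le\ 2^{2n+4m+2}\cdot\ell$, as required.

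The one point that needs care in the write-up is that resolving a bad triple may cause a previously ``safe'' 1-type (with count $\ge \ell$ in $\cG\setminus\cH$) to become bad, since removing up to $\ell - 1$ vertices can drop its count below $\ell$, and symmetrically resolving a bad 1-type may turn safe triples into bad ones. This is what I expect to be the main conceptual obstacle, but it does not break the potential argument: the key monotonicity observation is that shrinking $\cG\setminus\cH$ can never create a fresh realisation of a 1-type or triple that was previously absent, so the index sets in $\Phi$ genuinely only shrink. These cascading effects merely mean that the procedure may take many steps, all of which are already accounted for by the initial value of $\Phi$, and the final size bound follows without further adjustment.
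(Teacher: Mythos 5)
Your proposal is correct and follows essentially the same approach as the paper's proof: a greedy iterative construction bounded by counting the number of 1-types and type-triples (the paper phrases this as running a fixed number of $2^{2n+4m+1}$ rounds, while you phrase it via a potential function, but the argument is the same). The only minor difference is that in the bad-triple case you add just one endpoint per offending pair (at most $\ell-1$ vertices per step), whereas the paper adds both endpoints (at most $2\ell$ vertices per step); both give the required bound, and your variant is slightly more economical.
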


The intuition behind the construction is to repeatedly select a 1-type or tuple that violates the $\ell$-core condition
and add the relevant vertices to $\cH$.
Since there are only finitely many such choices, the process eventually stabilizes,
and the size of the constructed core remains bounded: see the appendix for details.
 
For a $\Ctwo$ sentence $\varphi$ and a $\vocab$-structure $\cG \models \varphi$,
if $\cH$ is a $\ell$-core of $\cG$ with sufficiently large $\ell$,
then we have the following important property of $\cG \setminus \cH$:

\begin{lemma}\label{lem:mutually_null_compatible}
    For every $\Ctwo$ sentence $\varphi$ and $\vocab$-structure $\cG \models \varphi$,
    for every $\ell$-core $\cH \subseteq \cG$ with $\ell > 2K^\varphi + 1$, 
    the 1-types realized in $\cG \setminus \cH$ are silent-compatible (w.r.t. $\varphi$).
\end{lemma}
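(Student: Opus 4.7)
The plan is a short argument by contradiction, leveraging Lemma~\ref{lem:differentiated} together with the counting guarantee built into the definition of $\ell$-core. Suppose toward contradiction that $\pi_1, \pi_2$ are 1-types both realized in $\cG \setminus \cH$ but that the pair $(\pi_1, \pi_2)$ is noisy, i.e., $\cKvpnull_{\pi_1, \pi_2} = \emptyset$. The goal is to derive a cardinality contradiction.

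By the first bullet of Definition~\ref{def:core}, every 1-type realized in $\cG \setminus \cH$ must be realized there by at least $\ell$ vertices. Hence both $\pi_1$ and $\pi_2$ are realized by at least $\ell > 2K^\varphi + 1$ vertices in $\cG \setminus \cH$, and therefore by more than $2K^\varphi + 1$ vertices of $\cG$ itself. This directly contradicts Lemma~\ref{lem:differentiated}, which states that for a noisy pair in any model of $\varphi$, at least one of the two 1-types is realized by at most $2K^\varphi + 1$ vertices in $\cG$. Thus the assumption that $(\pi_1, \pi_2)$ is noisy is untenable, and the pair must be silent-compatible.

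There is essentially no obstacle here: the lemma is a direct bookkeeping consequence of the two ingredients already proved. The only thing worth emphasizing in the write-up is the role played by the threshold $\ell > 2K^\varphi + 1$: it is chosen precisely so that the lower bound $\ell$ supplied by the $\ell$-core property strictly exceeds the upper bound $2K^\varphi + 1$ supplied by Lemma~\ref{lem:differentiated}, leaving no room for a noisy pair to occur outside $\cH$. Note that the case $\pi_1 = \pi_2$ is handled by the same argument, since Lemma~\ref{lem:differentiated} then bounds $|V_{\pi_1}|$ itself by $2K^\varphi + 1$.
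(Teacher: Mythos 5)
Your argument is correct and takes essentially the same route as the paper: both proceed by contradiction, pitting the lower bound of $\ell$ from the first clause of the $\ell$-core definition against the $2K^\varphi+1$ upper bound from Lemma~\ref{lem:differentiated}. The only cosmetic difference is the direction of the contradiction (you infer both types are large and contradict the lemma; the paper invokes the lemma to get one type small and contradicts the core condition), which amounts to the same thing.
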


\begin{proof}
    Suppose that there exist 1-types $\pi_1$ and $\pi_2$ realized in $\cG \setminus \cH$
    which are not  silent-compatible.
    By Lemma~\ref{lem:differentiated},
    at least one of the 1-types $\pi_1$ or $\pi_2$ is realized by at most $2K^\varphi+1$ vertices in $\cG$.
    Without loss of generality,
    suppose that $\pi_1$ is realized by at most $2K^\varphi+1$ vertices in $\cG$,
    then there are at most $2K^\varphi+1 < \ell$ vertices in $\cG \setminus \cH$ realized $\pi_1$.
    However, by the definition of $\ell$-core, since $\pi_1$ is realized in $\cG \setminus \cH$,
    it is realized by at least $\ell$ vertices in $\cG \setminus \cH$.
    This leads to a contradiction.
\end{proof}

In Section~\ref{sec:gptwo}, we introduced the notion of behavior vectors to describe the number of edges connected to a given vertex with specific 2-types.
In this section, we consider $\vocab$-structures with a fixed core and extend the concept of behavior vectors to incorporate information about the core.
Again, we fix an ordering of the set $\twotypes^+ \times \Onetypes$,
which has cardinality $2^{n+3m}-2^{n+m}$.

\begin{definition}[Extended behavior]
    Fixing a vocabulary $\vocab$, an \emph{extended behavior} w.r.t. 
    a set of vertices $V$ is a tuple $\tuple{g, \bff}$ where
    $g$ is a mapping from a vertex $v \in V$ to a 2-type in $\twotypes$, 
    and
    $\bff$ is an element of $\bbN^{2^{n+3m}-2^{n+m}}$.

\end{definition}

Informally, each extended behavior consists of two components.
The first one is a mapping from vertices in the core to 2-types,
capturing the 2-type realized by the edge between a given vertex and core vertices.
The second component is a behavior vector.

Abusing notation, for an extended behavior $\tuple{g, \bff}$,
we will write $g^\trif(v)$ for ${(g(v))}^\trif$ and
$\dual{g}(v)$ for the dual of $g(v)$.
We will also write
$\bff(\eta, \pi)$ for the $i^{th}$ component of $\bff$,
where $i$ is the index of the tuple $\tuple{\eta, \pi}$ in the fixed ordering of $\twotypes^+ \times \Onetypes$.

\begin{definition}[Compatibility of extended behaviors]\label{def:compatextended}
    For a $\Ctwo$ sentence $\varphi$ in normal form,
    $\pi \in \onetypes^\varphi$, and
    a $\vocab$-structure $\cH$ with vertices $V$,
    we say that an extended behavior function $\tuple{g, \bff}$ w.r.t. $V$ is
    \emph{compatible} with $\varphi$, $\pi$, and $\cH$
    if $f$ satisfies:
    \begin{enumerate}
        \item (Compatibility with $\cH$)
        For every $v \in V$, $g(v) \in \twotypes^\varphi_{\pi, \onetypeof(v)}$.

        \item (Compatibility with 1- and 2- types)
        For every $\tuple{\eta, \pi'} \in \twotypes^+ \times \Onetypes$,
        if $\pi' \not\in \onetypes^\varphi$ or 
        $\eta \not\in \twotypes^\varphi_{\pi, \pi'}$,
        then $\bff(\eta, \pi') = 0$.

        \item (Forward-silent zero)
        For every $\tuple{\eta, \pi'} \in \twotypes^+ \times \Onetypes$,
        if $\eta$ is forward-silent,
        then $\bff(\eta, \pi') = 0$.
       
        \item (Counting) The remaining entries of $\bff$,
        that is, entries for $2$-types that are not forward-silent, are bounded by the following constraint:
        \begin{equation*}
            \sum_{v \in V} g^\trif(v) + 
            \sum_{\pi' \in \Onetypes^\varphi}
            \left(
                \sum_{\eta \in \cKvpf_{\pi, \pi'}} \bff(\eta, \pi') \cdot \eta^\trif +
                \sum_{\eta \in \cKvpfb_{\pi, \pi'}} \bff(\eta, \pi') \cdot \eta^\trif
            \right)
            \ =\ \bfk^\varphi.
        \end{equation*}
        Note that this condition implies that
        if $\pi'$ and $\eta$ are compatible with $\varphi$, and
        $\eta$ is not forward-silent,
        then $\bff(\eta, \pi')$ is bounded by $K^\varphi$.
    \end{enumerate}
\end{definition}

Note that given an extended behavior,
we can check whether it is compatible with $\pi$ and $\cH$ in time polynomial in $2^n$ and $\abs{V}$.
Let $\cB^{\varphi}_{\pi, \cH}$ be the set of extended behaviors w.r.t. 
the vertices of $\cH$ that are compatible with $\varphi$, $\pi$ and $\cH$.
For each $\tuple{g, \bff} \in \cB^{\varphi}_{\pi, \cH}$,
recall that every entry of $\bff$ is bounded by $\abs{K^\varphi}$.
Thus the cardinality of $\cB^{\varphi}_{\pi, \cH}$ is bounded by
$\abs{V}^\abs{\cK} \cdot \abs{K^\varphi}^\abs{\twotypes \times \onetypes}$
which is doubly exponential in the length of $\varphi$.

\myparagraph{Computing a semilinear representation of the spectrum of a $\Ctwo$ sentence}
We are now ready to show that the $1$-type spectrum of a $\Ctwo$ sentence $\varphi$ in normal form is semilinear.
For every finite model $\cG$ of $\varphi$,
by Lemma~\ref{lem:differentiated},
for sufficiently large $\ell$,
for every $\ell$-core $\cH$ of $\cG$
and pair of 1-types $\pi_1$ and $\pi_2$ in $\cG \setminus \cH$,
$\pi_1$ and $\pi_2$ are silent-compatible w.r.t. $\varphi$. 
Furthermore by Lemma~\ref{lem:small_core},
$\cG$ has an $\ell$-core whose size is exponential in $\varphi$ and $\ell$.
We prove Theorem~\ref{thm:c2_spec} via a two-step construction.
First we show that for every such core $\cH$,
there exists an existential Presburger formula $\Psi^\varphi_\cH(\bfx)$
such that the solution set of $\Psi^\varphi_\cH(\bfx)$ corresponds to the set of 1-type cardinality vectors of finite models of $\varphi$ with $\ell$-core $\cH$.

\begin{compactitem}
    \item The variables are $x_\pi$ for $\pi \in \onetypes^\varphi$,
    and $y_{\pi, \tuple{g, \bff}}$ for $\pi \in \onetypes^\varphi$ and $\tuple{g, \bff} \in \cB^\varphi_{\pi, \cH}$.
    Intuitively, $\bfx$ is the 1-type spectrum of the finite model $\cG$ and $y_{\pi, \tuple{g, \bff}}$ is the number of vertices in $\cG \setminus \cH$ with 1-type $\pi$ and extended behavior $\tuple{g, \bff}$.
    \item
    The formula $\summ_{\cH}(\bfx, \bfy)$ asserts that the 1-type spectrum of $\cG$ is the sum of the 1-type spectrum of $\cH$ and
    the $1$-type spectrum of $\cG \setminus \cH$.
    Let $s_\pi$ be the number of vertices in $\cH$ with 1-type $\pi$. \\
    \begin{notsotiny}
    $\displaystyle\summ_{\cH}(\bfx, \bfy)
         := 
     \bigwedge_{\pi \in \onetypes^\varphi} \left(x_\pi = s_\pi + \sum_{\tuple{g, \bff} \in \cB^{\varphi}_{\pi, \cH}} y_{\pi, \tuple{g, \bff}}\right)$.
    \end{notsotiny}
    \item
    The formula $\comp_{\cH}(\bfy)$ guarantees that the vertices in $\cH$ satisfies the Counting condition of Proposition~\ref{prop:ctwo_graph}.
    Let $V_c$ be the vertices in $\cH$. \\
    \begin{notsotiny}
    $\displaystyle\comp_{\cH}(\bfy)
         := 
        \bigwedge_{v \in V_c} \left(
            \sum_{u \in V_c \setminus \set{v}} \twotypeof^\trif(v, u) +
            \sum_{\pi \in \onetypes^\varphi} \sum_{\tuple{g, \bff} \in \cB^{\varphi}_{\pi, \cH}} g^\trib(v) \cdot y_{\pi, \tuple{g, \bff}}
        = \bfk^\varphi \right)$.
        \end{notsotiny}
    \item
    The formula $\silent(\bfy)$ assert that 1-types realized in $\cG \setminus \cH$ are silent compatible. \\
    \begin{notsotiny}
    $\displaystyle\silent_{\cH}(\bfy)
         := 
        \bigwedge_{\substack{\pi_1, \pi_2 \in \onetypes^\varphi \\ \text{are not silent compatible}}}
        \left(\sum_{\tuple{g, \bff} \in \cB^{\varphi}_{\pi_1, \cH}} y_{\pi_1, \tuple{g, \bff}} = 0\right) \lor
        \left(\sum_{\tuple{g, \bff} \in \cB^{\varphi}_{\pi_2, \cH}} y_{\pi_2, \tuple{g, \bff}} = 0\right)$.
        \end{notsotiny}
    \item
    The formula $\bige_{\cH}(\bfy)$ guarantees that 1-types and tuples realized in $\cG \setminus \cH$ satisfy the requirements of a core. \\
    \begin{notsotiny}
    $
    \begin{aligned}
            \bige_{\cH}(\bfy)
        \ :=\ &
        \bigwedge_{\pi \in \onetypes^\varphi}
            \exists z\ 
            \left(
                z = \sum_{\tuple{g, \bff} \in \cB^\varphi_{\pi, \cH}} y_{\pi, \tuple{g, \bff}}
            \right) \land
            \left(
                z = 0  \lor
                z \ge {(2K^\varphi+1)}^2 
            \right)
        \land \\
        &\bigwedge_{\substack{
            \pi_1, \pi_2 \in \onetypes^\varphi \\ 
            \eta \in \cKvpfb_{\pi_1, \pi_2}}}
        \exists z\ 
        \left(z = \sum_{\tuple{g, \bff} \in \cB^\varphi_{\pi_1, \cH}} \bff(\eta, \pi_2) \cdot y_{\pi_1, \tuple{g, \bff}}\right) \land
        \left(
            z = 0 \lor
            z \ge (2K^\varphi+1)^2
            \right).
        \end{aligned}
    $
    \end{notsotiny}
    \item
        The formula $\match_{1, \cH}(\bfy)$ encodes the edge matching condition for audible 2-types and vertices in $\cG \setminus \cH$. \\
        \begin{notsotiny}
        $\displaystyle
        \match_{1, \cH}(\bfy)
         := 
        \bigwedge_{\substack{
            \pi_1, \pi_2 \in \onetypes^\varphi \\
            \eta \in \cKvpfb_{\pi_1, \pi_2}}}
        \left(
            \sum_{\tuple{g, \bff} \in \cB^{\varphi}_{\pi_1, \cH}} \bff(\eta, \pi_2) \cdot y_{\pi_1, \tuple{g, \bff}}
            =
            \sum_{\tuple{g, \bff} \in \cB^{\varphi}_{\pi_2, \cH}} \bff(\dual{\eta}, \pi_1) \cdot y_{\pi_2, \tuple{g, \bff}}
        \right)
        $.
        \end{notsotiny}

    \item
        The formula $\match_{2, \cH}(\bfy)$ encodes the edge matching condition for backward-silent but not forward-silent 2-types and vertices in $\cG \setminus \cH$. \\
        \begin{notsotiny}
        $\displaystyle
        \match_{2, \cH}(\bfy)
         := 
        \bigwedge_{\substack{
            \pi_1, \pi_2 \in \onetypes^\varphi \\
            \eta \in \cKvpf_{\pi_1, \pi_2}}}
        \left(
            \left(
                \sum_{\tuple{g, \bff} \in \cB^{\varphi}_{\pi_1, \cH}} \bff(\eta, \pi_2) \cdot y_{\pi_1, \tuple{g, \bff}} > 0
            \right)
            \to
            \left(
                \sum_{\tuple{g, \bff} \in \cB^{\varphi}_{\pi_2, \cH}} y_{\pi_2, \tuple{g, \bff}} > 0
            \right)
        \right)
        $.
        \end{notsotiny}
        Keep in mind that we will always be considering solutions in the natural numbers. So this implication could be rewritten without summation: if for one of the extended behaviors for this triple, $\bff(\eta, \pi_2) \cdot y_{\pi_1, \tuple{g, \bff}} > 0$, then  one of the numbers $y_{\pi_2, \tuple{g, \bff}} > 0$.
\end{compactitem}
Finally, $\displaystyle
    \Psi^\varphi_{\cH}(\bfx)
         :=
         \exists \bfy\ 
        \summ_{\cH}(\bfx, \bfy) \land
        \comp_{\cH}(\bfy) \land
        \silent_{\cH}(\bfy) \land
        \bige_{\cH}(\bfy) \land
        \bigwedge_{i=1, 2}
        \match_{i, \cH}(\bfy)$.

This system characterizes the spectrum:
\begin{lemma}\label{lem:c2_spec_core}
    For every $\Ctwo$ sentence $\varphi$ and partial model $\cH$ of $\varphi$,
    a vector $\bfv$ is a solution of $\Psi^\varphi_\cH(\bfx)$
    if and only if
    there exists a finite model $\cG$ of $\varphi$ such that
    the 1-type cardinality vector of $\cG$ is $\bfv$,
    and $\cH$ is a $(2K^\varphi+1)^2$-core of $\cG$.
\end{lemma}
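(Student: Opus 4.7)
The plan is to prove both directions explicitly, anchored by the intended semantics of the free variables: $x_\pi$ represents the number of vertices of 1-type $\pi$ in $\cG$, while $y_{\pi, \tuple{g,\bff}}$ represents the number of vertices $v \in \cG \setminus \cH$ of 1-type $\pi$ such that the map $u \mapsto \twotypeof(v,u)$ for $u$ ranging over the vertices of $\cH$ equals $g$, and the behavior vector of $v$ computed only against vertices in $\cG \setminus \cH$ equals $\bff$.

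For the forward direction, take $\cG \models \varphi$ with $\cH$ a $(2K^\varphi+1)^2$-core and define $\bfx, \bfy$ by the counts above. The assignment is consistent with $\cB^\varphi_{\pi,\cH}$: each extended behavior realized by a non-core vertex $v$ satisfies the compatibility conditions of Definition~\ref{def:compatextended} by Proposition~\ref{prop:ctwo_graph} applied to $\cG$. Each conjunct of $\Psi^\varphi_\cH$ is then checked essentially by unfolding definitions: $\summ_\cH$ follows from partitioning $V$ into $\cH$ and $\cG \setminus \cH$; $\comp_\cH$ follows from the Counting clause of Proposition~\ref{prop:ctwo_graph} applied at each core vertex, splitting outgoing audible edges into those going to other core vertices (contributing $\twotypeof^\trif$) and those arriving from non-core vertices (contributing $g^\trib$ via the extended behaviors); $\silent_\cH$ follows from Lemma~\ref{lem:mutually_null_compatible} since $(2K^\varphi+1)^2 > 2K^\varphi+1$; $\bige_\cH$ is immediate from Definition~\ref{def:core} with $\ell = (2K^\varphi+1)^2$; and $\match_{1,\cH}$, $\match_{2,\cH}$ follow by double-counting audible edges between non-core vertices, $\match_{1,\cH}$ handling 2-types audible in both directions and $\match_{2,\cH}$ handling forward-audible but backward-silent 2-types, where only the forward endpoint ``sees'' the edge.

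For the reverse direction, given a solution $\bfv, \bfy$, build $\cG$ by adding $y_{\pi, \tuple{g,\bff}}$ fresh vertices of 1-type $\pi$ for each $\pi$ and each extended behavior, connecting each new vertex $v$ to every core vertex $u$ by the 2-type $g(u)$, which is well-defined and satisfies $\twotypeof(v,u) \in \twotypes^\varphi_{\pi,\onetypeof(u)}$ by compatibility of the extended behavior with $\cH$. The real work is to realize 2-types between pairs of non-core vertices so that each new vertex obtains its prescribed behavior vector $\bff$. For each pair $\pi_1, \pi_2$ and each $\eta \in \cKvpfb_{\pi_1,\pi_2}$, $\match_{1,\cH}$ matches total demand between the two sides while $\bige_\cH$ guarantees at least $(2K^\varphi+1)^2$ candidate vertices on each side, which — given that each individual behavior vector has $\ell_1$-norm bounded by $K^\varphi$ — leaves plenty of room to pair edges without collision. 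For each $\eta \in \cKvpf_{\pi_1,\pi_2}$, $\match_{2,\cH}$ ensures a $\pi_2$-target exists whenever demand is positive, and again $\bige_\cH$ provides enough such targets to absorb the edges as distinct partners. Remaining pairs of non-core vertices (those still unconnected) are assigned a 2-type from $\cKvpnull_{\pi_1,\pi_2}$, which is non-empty on all relevant pairs by $\silent_\cH$. The resulting $\cG$ satisfies Proposition~\ref{prop:ctwo_graph}: its Counting clause at core vertices follows from $\comp_\cH$, and at non-core vertices from the Counting clause of Definition~\ref{def:compatextended} combined with the realization of each $\bff$ on non-core edges and each $g$ on core edges.

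The main obstacle is the combinatorial pairing in the reverse direction, since in a colored graph each pair of vertices must receive exactly one 2-type, so naively chosen ``demand'' edges could collide or force an impossible second label on a pair. The quadratic lower bound $(2K^\varphi+1)^2$ from $\bige_\cH$ is what dominates: each vertex's audible demand is bounded by $K^\varphi$, so a greedy bipartite assignment for each $(\pi_1, \eta, \pi_2)$ has vastly more choices than needed, and once all audible edges are placed the remaining pairs are available for silent labeling. Verifying that the greedy argument can be carried out in the correct order without re-labeling conflicts (e.g., that the choice for one $(\eta, \pi_1, \pi_2)$ does not exhaust the room needed for another) is the technical heart of the construction and is where the $(2K^\varphi+1)^2$ bound is used.
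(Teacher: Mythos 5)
Your forward direction is correct and matches the paper: the intended semantics of $\bfx,\bfy$ are as you state, and each conjunct of $\Psi^\varphi_\cH$ is verified by unfolding definitions, Proposition~\ref{prop:ctwo_graph}, Lemma~\ref{lem:mutually_null_compatible}, the definition of core, and edge double-counting. No issues there.

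The reverse direction, however, has a genuine gap, and you have identified it yourself but not closed it. You write that ``a greedy bipartite assignment for each $(\pi_1,\eta,\pi_2)$ has vastly more choices than needed'' and then flag that ``verifying that the greedy argument can be carried out in the correct order without re-labeling conflicts \ldots\ is the technical heart of the construction.'' That is precisely the step that needs an actual proof, and a direct greedy argument is not obviously valid: when you fix one $(\pi_1,\eta,\pi_2)$ you are committing pairs, and later triples $(\pi_1,\eta',\pi_2)$ or $(\pi_2,\eta'',\pi_1)$ compete for the \emph{same} set of vertex pairs, so ``enough room overall'' does not by itself yield a conflict-free schedule. The paper resolves this not by a careful greedy ordering but by a two-stage construction: first build a colored \emph{multigraph} in which each vertex $v \in V_{\pi,\tuple{g,\bff}}$ gets exactly $\bff(\eta,\pi')$ parallel edges of each audible $\eta$ (this is trivially feasible because $\match_{1,\cH}$, $\match_{2,\cH}$ match aggregate demands), and then repeatedly apply a \emph{swapping} move à la Pratt-Hartmann: whenever $v_1,v_2$ share parallel edges $e_1,e_2$, use $\bige_\cH$ (the $(2K^\varphi+1)^2$ bound, against per-vertex degree at most $K^\varphi$) to find a disjoint pair $u_1,u_2$ carrying an edge of the offending 2-type and swap endpoints, strictly decreasing the multiplicity while preserving every vertex's local edge profile. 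This swapping step is exactly what replaces your unproved greedy claim, and it is where the quadratic bound is actually used. A second, slightly different swap variant is needed for the $\cKvpf$ case (where only one side ``sees'' the edge), which your sketch also glosses over.

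So: same high-level decomposition as the paper, but the combinatorial core of the reverse direction (collision-free realization of prescribed behaviors) is asserted rather than proved. To repair it, either reproduce the multigraph-then-swap argument, or replace the greedy claim with a precise degree-realization argument (e.g.\ a Gale--Ryser/Hall-style criterion applied 2-type by 2-type in a fixed order, with an explicit accounting of the pairs already consumed); the latter is doable but must actually be written down, since the naive ``plenty of room'' estimate is not a proof.
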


It is easy to see that vectors in the spectrum satisfy the equations. 
In the other direction, we extend the core $\cH$ to a finite model $\cG$ of $\varphi$.
Solutions to the equations tell us how many vertices of each $1$-type and extended behavior to create in $\cG$.
We assign edge colors so that each vertex realizes its intended extended behavior.
This is always possible because the solution is sufficiently large, as ensured by the Big condition in the equation. This extra size gives us some leeway to match up $1$-types via $2$-types.
Details are in the appendix.

Let $\mathfrak{H}$ be the collection of partial models of $\varphi$ with size at most $2^{2n+4m+2} \cdot (2K^\varphi+1)^2$.
We define the existential Presburger formula 
$\Psi^\varphi(\bfx)
 := 
\bigvee_{\cH \in \mathfrak{H}}
\Psi^\varphi_{\cH}(\bfy).$

\begin{lemma}\label{thm:c2_spec}
    For every $\Ctwo$ sentence $\varphi$,
    $\sol{\Psi^\varphi(\bfx)} = \pispec(\varphi)$.
\end{lemma}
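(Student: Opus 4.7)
The plan is to derive this statement as an immediate corollary of Lemma~\ref{lem:c2_spec_core}, combined with the existence of bounded-size cores from Lemma~\ref{lem:small_core}. By construction, $\Psi^\varphi(\bfx)$ is the disjunction of $\Psi^\varphi_{\cH}(\bfx)$ over all partial models $\cH \in \mathfrak{H}$, where $\mathfrak{H}$ collects partial models of size at most $2^{2n+4m+2}\cdot(2K^\varphi+1)^2$. So each direction of the equality is a routine translation through this disjunction.

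For the inclusion $\sol{\Psi^\varphi(\bfx)} \subseteq \pispec(\varphi)$, I would take $\bfv \in \sol{\Psi^\varphi(\bfx)}$, pick a disjunct it satisfies, i.e.\ some $\cH \in \mathfrak{H}$ with $\bfv \in \sol{\Psi^\varphi_{\cH}(\bfx)}$, and then apply Lemma~\ref{lem:c2_spec_core} directly to obtain a finite model $\cG \models \varphi$ whose 1-type cardinality vector is $\bfv$. Hence $\bfv \in \pispec(\varphi)$.

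For the reverse inclusion, suppose $\bfv \in \pispec(\varphi)$ and fix a finite model $\cG \models \varphi$ with 1-type cardinality vector $\bfv$. Apply Lemma~\ref{lem:small_core} with $\ell = (2K^\varphi+1)^2$ to extract an $\ell$-core $\cH \subseteq \cG$ of size at most $2^{2n+4m+2}\cdot(2K^\varphi+1)^2$. By Lemma~\ref{lem:core_partial_model}, any substructure of a model of $\varphi$ is a partial model, so $\cH \in \mathfrak{H}$. Invoking the other direction of Lemma~\ref{lem:c2_spec_core} (with this particular $\cG$ and $\cH$) then yields $\bfv \in \sol{\Psi^\varphi_{\cH}(\bfx)}$, and therefore $\bfv \in \sol{\Psi^\varphi(\bfx)}$.

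There is essentially no obstacle here: the entire mathematical content has already been absorbed into Lemma~\ref{lem:c2_spec_core}, whose backward direction requires the core threshold to be large enough to force silent-compatibility via Lemma~\ref{lem:mutually_null_compatible} (which needs $\ell > 2K^\varphi+1$) and to provide the ``slack'' invoked by the $\bige_{\cH}$ conjunct. The only sanity check is that the threshold $(2K^\varphi+1)^2$ used to define $\mathfrak{H}$ agrees with the threshold used in $\bige_{\cH}$ and is compatible with the size bound in Lemma~\ref{lem:small_core}, which is the case by inspection. The proof is thus just a packaging step.
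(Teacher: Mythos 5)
Your proposal is correct and follows exactly the same route as the paper: both directions are immediate consequences of Lemma~\ref{lem:c2_spec_core}, with Lemma~\ref{lem:small_core} supplying the small core for the reverse inclusion. The only (welcome) difference is that you explicitly cite Lemma~\ref{lem:core_partial_model} to justify $\cH \in \mathfrak{H}$, a step the paper leaves implicit.
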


\begin{proof}
    Suppose that $\bfv \in \sol{\Psi^\varphi(\bfx)}$,
    there exists $\cH$ such that $\bfv \in \sol{\Psi^\varphi_\cH(\bfx)}$.
    By Lemma~\ref{lem:c2_spec_core},
    there exists a model $\cG$ of $\varphi$ such that the 1-type cardinality vector of $\cG$ is $\bfv$.
    Thus $\bfv \in \pispec(\varphi)$.
    On the other hand, suppose that $\bfv \in \pispec(\varphi)$.
    Then there is a model $\cG$ of $\varphi$ with 1-type cardinality vector  $\bfv$.
    By Lemma~\ref{lem:small_core}, $\cG$ has a $(2K^\varphi+1)^2$-core $\cH$ with size at most $2^{2n+4m+2}(2K^\varphi+1)^2$,
    which implies that $\cH \in \mathfrak{H}$.
    By Lemma~\ref{lem:c2_spec_core},
    $\bfv \in \sol{\Psi^\varphi_\cH(\bfx)}$.
    Since $\cH \in \mathfrak{H}$,
    $\bfv \in \sol{\Psi^\varphi(\bfx)}$.
\end{proof}

Theorem~\ref{thm:spectrasemilinear} follows immediately from Lemma~\ref{thm:c2_spec}.

Note that the length of the formula $\Psi^\varphi(\bfx)$ is doubly exponential in the length of $\varphi$,
since the length of $\Psi^\varphi_\cH(\bfx)$ is already doubly exponential in the length of $\varphi$, 
as the number of variables $y_{\pi, \tuple{g, \bff}}$ is doubly exponential.
Moreover, there are doubly exponentially many possible cores $\cH$.

\myparagraph{Consequences for finite satisfiability of $\Ctwo$ with global constraints}
Note that in Lemma~\ref{thm:c2_spec},
we show that for every $\Ctwo$ sentence $\varphi$ in normal form,
there exists an existential Presburger formula $\Psi^\varphi(\bfx)$ of doubly exponential size 
such that $\pispec{\varphi} = \sol{\Psi^\varphi(\bfx)}$.
Since for every 1-type $\pi$ and partial model $\cH$ of $\varphi$, we can compute $\cB^{\varphi}_{\pi, \cH}$ in doubly exponential time in the length of $\varphi$ and $\cH$ by a straightforward enumerate-and-check procedure,
$\Psi^\varphi(\bfx)$ can be computed from $\varphi$ in doubly exponential time.
Note that global constraints of $\Ctwo$ sentences are linear constraints over the $1$-type cardinality vector.
Therefore the finite satisfiability problem of $\Ctwo$ with global constraints reduces
to satisfiability problem of $\Psi^\varphi(\bfx)$ together with those linear constraints.
This naive approach yields a $2\text{-}\nexp$ algorithm.

We observe that $\Psi^\varphi(\bfx)$ can be rewritten as disjunctions of doubly exponentially many integer linear systems.
Each system contains doubly exponentially many variables, but only \emph{exponentially many constraints}.
By Lemma~\ref{lem:ilp}, if such a system has a solution,
then there exists a solution in which only exponentially many variables are assigned non-zero values.

Therefore, we can decide the finite satisfiability of $\Ctwo$ with global constraints 
using the $\nexp$ procedure in Algorithm~\ref{alg:ctwosat}.

\begin{algorithm}
\begin{algorithmic}[1]
\caption{Algorithm for the finite satisfiability problem of $\Ctwo$ with global constraints}\label{alg:ctwosat}
\Procedure{SAT}{$\varphi$}
    \State Compute $\Onetypes^\varphi$ and $\twotypes^\varphi_{\pi_1, \pi_2}$
    \State Guess a exponential size partial model $\cH$ of $\varphi$
    \State Guess exponentially many non-zero variables $y'_{\pi, \tuple{g, \bff}}$
    \State Construct the system $\Psi^\varphi_{\cH}(\bfx)$ with only variables $y'_{\pi, \tuple{g, \bff}}$
    \State Check the feasibility of $\Psi^\varphi_{\cH}(\bfx)$ with global constraints
\EndProcedure
\end{algorithmic}
\end{algorithm}

We emphasize again that the argument here is completely generic, and the same comment holds to show that for \emph{any} decidability extension of Presburger arithmetic (e.g. B\"uchi Arithemetic or Semenov Arithmetic~\cite{semenovpresexp}), $\Ctwo$ with global cardinality constraints of that form is decidable, with the complexity the maximum of $\nexptime$ and the complexity of solving the corresponding constraints.

\section{Negative results about the spectrum of $\GPtwo$ sentences} \label{sec:gptwocounter}
The $1$-type spectrum of a $\Ctwo$ sentence is semilinear, and we showed how to calculate it in
the previous section. For $\GPtwo$, earlier in the paper we were able to characterize models by a linear system: but the solutions
to this linear system are not in one-to-one correspondence with the number of satisfiers of $1$-types.
Thus a natural question is whether the $1$-type spectrum of a $\GPtwo$ sentence is semilinear. We answer this below in
the negative.

Consider the following $\GPtwo$ formula $\varphi$ over vocabulary $\set{P_1, P_2, E}$:
\begin{equation*}
    \begin{aligned}
        \varphi\ :=\ &
        \forall x\ 
        \left(
            \left(\neg P_1(x) \land P_2(x)\right) \lor
            \left(P_1(x) \land \neg P_2(x)\right)
        \right)
        \ \ \land \ \ 
        \bigwedge_{1 \le i \le 2} \forall x\ P_i(x) \to \gamma_i(x) \\
        \gamma_1(x)\ :=\ &
        \presby{E(x, y) \land P_1(y)} - \presby{E(x, y) \land P_2(y)} = 0 \\
        \gamma_2(x)\ :=\ &
        \presby{E(y, x) \land P_1(y)} = 1.
    \end{aligned}
\end{equation*}
Let $\cG$ be a finite model of $\varphi$.
A vertex in $\cG$ is labeled with either $P_1$ or $P_2$.
Let $V_i$ be the set of vertices in $\cG$ with label $P_i$.
For every $v \in V_1$, the number of edges from $v$ to $V_1$
is the same as the number of edges from $v$ to $V_2$.
For every vertex $v \in V_2$, there exists exactly one edge from $V_1$ to $v$.
Therefore the cardinality of $V_2$ is the same as the number of edges between vertices of $V_1$,
which is bounded by $\abs{V_1}\cdot(\abs{V_1}-1)$.
Thus the 1-type spectrum of $\varphi$ is $\setc{(0, a, b, 0) \in \bbN^4}{b \le a(a-1)}$.
Hence we have the following negative result.
\begin{theorem}
    The 1-type spectrum of a $\GPtwo$ sentence is not always semilinear.
\end{theorem}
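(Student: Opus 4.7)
The plan is to formalize the three-step informal sketch that precedes the theorem statement. I would separately establish (a) an upper bound: every finite model $\cG \models \varphi$ has $1$-type cardinality vector of the form $(0, a, b, 0)$ with $b \le a(a-1)$; (b) a matching realization: every such tuple is realized; and (c) non-semilinearity of the set $\setc{(0, a, b, 0) \in \bbN^4}{b \le a(a-1)}$.

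For (a), I would first observe that the universal conjunct $(\neg P_1(x) \land P_2(x)) \lor (P_1(x) \land \neg P_2(x))$ forces each element to realize exactly one of $P_1, P_2$; since graphs have no self-loops, the only realizable $1$-types are $\pi_1$ (elements with $P_1$) and $\pi_2$ (elements with $P_2$), so the two coordinates of the cardinality vector corresponding to $\neg P_1 \land \neg P_2$ and $P_1 \land P_2$ are automatically $0$. Writing $V_i$ for the set of elements of $1$-type $\pi_i$, summing $\gamma_1(x)$ over $x \in V_1$ gives $\abs{\{(x,y) \in V_1^2 : E(x,y)\}} = \abs{\{(x,y) \in V_1 \times V_2 : E(x,y)\}}$, while summing $\gamma_2(x)$ over $V_2$ gives $\abs{\{(x,y) \in V_1 \times V_2 : E(x,y)\}} = \abs{V_2}$. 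Combined with the trivial bound $\abs{\{(x,y) \in V_1^2 : E(x,y)\}} \le \abs{V_1}(\abs{V_1}-1)$ from the absence of self-loops, this yields $\abs{V_2} \le \abs{V_1}(\abs{V_1}-1)$.

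For (b), fix $a \in \bbN$ and $b \le a(a-1)$. I would construct a model with $V_1 = \set{u_1, \ldots, u_a}$ and $V_2 = \set{v_1, \ldots, v_b}$, enumerate $b$ distinct ordered pairs $(x_j, y_j) \in V_1 \times V_1$ with $x_j \neq y_j$ (available since $a(a-1) \ge b$), install edges $E(x_j, y_j)$ inside $V_1$, and for each $j$ add the edge $E(x_j, v_j)$. Direct inspection verifies $\gamma_1$ (each $u \in V_1$ has the same count of outgoing $V_1$-edges and outgoing $V_2$-edges, namely the number of $j$ with $x_j = u$) and $\gamma_2$ (each $v_j$ has exactly one incoming edge from $V_1$, namely from $x_j$). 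The cases $a \in \set{0,1}$ force $b = 0$ and are handled by the empty model and the one-vertex model, respectively.

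For (c), I would argue that if the set $S = \setc{(0,a,b,0)}{b \le a(a-1)}$ were semilinear, then its projection $S' = \setc{(a,b) \in \bbN^2}{b \le a(a-1)}$ would also be semilinear, hence definable in Presburger arithmetic. But then the function $a \mapsto \max\setc{b}{(a,b) \in S'} = a(a-1)$ would be Presburger-definable, and any such total function on $\bbN$ is ultimately a finite union of piecewise linear pieces, contradicting its quadratic growth. I expect no real obstacle beyond being precise about this last point; the cleanest route is to invoke the known equivalence between semilinear sets in $\bbN^k$ and Presburger-definable sets, and then use the standard fact that an ultimately-linear graph cannot coincide with a strictly superlinear function.
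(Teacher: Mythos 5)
Your proposal is correct and uses exactly the same example sentence and the same counting argument as the paper, which only sketches the upper bound $\abs{V_2}\le\abs{V_1}(\abs{V_1}-1)$ and then asserts the rest. The additional steps you spell out — the realization construction in (b) and the Presburger-definability argument in (c) — are exactly the parts the paper leaves implicit, and both are sound.
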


This does not resolve whether the set of cardinalities of finite models is a semilinear set, and
it leaves open whether $\GPtwo$ with global cardinality constraints is decidable.

\section{Related work} \label{sec:related}

The logic $\GPtwo$ was shown decidable in \cite{percentage,twovarpres}, while the decidability
of $\Ctwo$ with cardinality constraints is implicit in \cite{tonyeryk, ultimatelyperiodic}. Our complexity bound for $\GPtwo$ refines the analysis of the Kleene star operator in \cite{usicalp}, which in turn relies on \cite{kleenestar}. Our analysis of $\Ctwo$ with constraints obviously relies heavily on the techniques developed by Pratt-Hartmann for analysis of $\Ctwo$, in particular normal forms, the categorization of types, and the distinction between large and small number of realizers (e.g. the notion of Z-differentiated \cite{ctwocomplexity,gctwocomplexity}).

Although we borrow techniques from prior work, our approach gives a self-contained proof of decidability for $\Ctwo$ itself, and also for $\Ctwo$ with global constraints in extensions of Presburger arithmetic. An alternative approach to get bounds for $\Ctwo$ with Presburger global constraints is by reducing to decidability of $\Ctwo$.
This approach was presented by Rudolph in \cite{rudolphcardinality}, relying on ideas from \cite{baadercardinality}. Although technically speaking the results in \cite{rudolphcardinality} are presented for description logics, they could easily be applied to give the same bounds for $\Ctwo$.

In addition to contributing tight bounds for complexity for decidability of $\GPtwo$ a well as the size of the spectrum  for $\Ctwo$, we note that \cite{usicalp} is phrased in terms of graph neural networks, while from the analysis of spectra in \cite{tonyeryk,ultimatelyperiodic} it would be difficult to see how semilinear representations are derived. Thus we believe our work additionally makes the analyses of logics with arithmetic substantially more accessible.

\section{Conclusion} \label{sec:conc}

In the paper we isolate the complexity of satisfiability for two logics combining relational
structure and  arithmetic, $\GPtwo$ and 
the logics $\Ctwo$ and $\GCtwo$ extended with cardinality constraints. In the process we refine and
simplify proof techniques for analyzing such logics.  $\GPtwo$ and $\Ctwo$ are incomparable in expressiveness. The former allows inductively forming open formulas that involve arithmetic between counts of neighbors, while in the latter arithmetic can only be used at top-level for forming sentences. On the other hand, $\Ctwo$ allows unguarded formulas. See the appendix for a formal argument.
We leave open the question of the decidability of combinations of formulas in the two languages.

\bibliography{references}

\begin{thebibliography}{10}

\bibitem{baadercardinality}
Franz Baader.
\newblock Expressive cardinality restrictions on concepts in a description logic with expressive number restrictions.
\newblock {\em SIGAPP Appl. Comput. Rev.}, 19(3), 2019.

\bibitem{percentage}
Bartosz Bednarczyk, Maja Orlowska, Anna Pacanowska, and Tony Tan.
\newblock {On classical decidable logics extended with percentage quantifiers and arithmetics}.
\newblock In {\em FSTTCS}, 2021.

\bibitem{ultimatelyperiodic}
Michael Benedikt, Egor Kostylev, and Tony Tan.
\newblock {Two Variable Logic with Ultimately Periodic Counting}.
\newblock {\em SIAM Journal on Computing}, 53(4):884--968, 2024.

\bibitem{usicalp}
Michael Benedikt, Chia-Hsuan Liu, Boris Motik, and Tony Tan.
\newblock Verification of graph neural networks via logical characterizations.
\newblock In {\em ICALP}, 2024.

\bibitem{semilinear}
Dmitry Chistikov and Christoph Haase.
\newblock {The Taming of the Semi-Linear Set}.
\newblock In {\em ICALP}, 2016.

\bibitem{caratheodory-integer}
F.~Eisenbrand and G.~Shmonin.
\newblock Carath{\'{e}}odory bounds for integer cones.
\newblock {\em Oper. Res. Lett.}, 34(5):564--568, 2006.

\bibitem{flow}
D.~R. Ford and D.~R. Fulkerson.
\newblock {\em Flows in Networks}.
\newblock Princeton University Press, USA, 2010.

\bibitem{kleenestar}
Christoph Haase and Georg Zetzsche.
\newblock {Presburger arithmetic with stars, rational subsets of graph groups, and nested zero tests}.
\newblock In {\em {LICS}}, 2019.

\bibitem{IP}
Narendra Karmarkar.
\newblock A new polynomial-time algorithm for linear programming.
\newblock In {\em STOC}, 1984.

\bibitem{Kazakov04}
Yevgeny Kazakov.
\newblock A polynomial translation from the two-variable guarded fragment with number restrictions to the guarded fragment.
\newblock In {\em {JELIA}}, 2004.

\bibitem{tonyeryk}
Eryk Kopczynski and Tony Tan.
\newblock Regular graphs and the spectra of two-variable logic with counting.
\newblock {\em {SIAM} J. Comput.}, 44(3):786--818, 2015.

\bibitem{twovarpres}
Chia{-}Hsuan Lu and Tony Tan.
\newblock {On two-variable guarded fragment logic with expressive local Presburger constraints}.
\newblock {\em Logical Methods in Computer Science}, 20, 2024.

\bibitem{papa}
C.~Papadimitriou.
\newblock On the complexity of integer programming.
\newblock {\em J. {ACM}}, 28(4):765--768, 1981.

\bibitem{parikh}
Rohit Parikh.
\newblock {On Context-Free Languages}.
\newblock {\em JACM}, 13(4):570–581, 1966.

\bibitem{Pottier}
Loic Pottier.
\newblock Minimal solutions of linear diophantine systems: Bounds and algorithms.
\newblock In {\em RTA}, 1991.

\bibitem{ctwocomplexity}
Ian Pratt-Hartmann.
\newblock {Complexity of the two-variable fragment with counting quantifiers}.
\newblock {\em {Journal of Logic Language and Information}}, 14:369–395, 2005.

\bibitem{gctwocomplexity}
Ian Pratt-Hartmann.
\newblock Complexity of the guarded two-variable fragment with counting quantifiers.
\newblock {\em Journal of Logic and Computation}, 17(1):133--155, 2007.

\bibitem{ianrevisited}
Ian Pratt{-}Hartmann.
\newblock The two-variable fragment with counting.
\newblock In {\em {WoLLIC}}, 2010.

\bibitem{rudolphcardinality}
"Johann"~Sebastian Rudolph.
\newblock Presburger concept cardinality constraints in very expressive description logics - allegro sexagenarioso ma non ritardando.
\newblock In {\em Description Logic, Theory Combination, and All That - Essays Dedicated to Franz Baader on the Occasion of His 60th Birthday}, 2019.

\bibitem{tree}
Helmut Seidl, Thomas Schwentick, Anca Muscholl, and Peter Habermehl.
\newblock Counting in trees for free.
\newblock In {\em ICALP}, 2004.

\bibitem{semenovpresexp}
Aleksei~L. Semenov.
\newblock Logical theories of one-place functions on the set of natural numbers.
\newblock {\em Math.~USSR Izv.}, 22(3):587--618, 1984.

\bibitem{traktenbrot}
Boris Trakhtenbrot.
\newblock {The Impossibility of an Algorithm for the Decidability Problem on Finite Classes}.
\newblock {\em Proceedings of the USSR Academy of Sciences}, 70(4):{569–572}, 1950.

\end{thebibliography}

\section{Proof of Lemma \ref{lem:key}: computation of a representation of the Kleene closure}

Recall that one important component in the decidability of $\GPtwo$ is the computation of a representation of the Kleene closure of a semilinear set.
Recall from Section \ref{sec:prelims} that $\projp{n}{\bfv}$ denotes the projection operator that maps a vector $\bfv$ to its first $n$ entries.
The formal statement was:

\medskip

For every integer linear system $\cQ(\bfx): \bfA \bfx = \bfc$ with $\bfc \neq \bfzero$,
where $\bfA \in \bbZ^{m \times n}$ and $\bfc \in \bbZ^{m}$,
there exists $\bfAtilde \in \bbZ^{(n+t) \times (n+k)}$ with $\norm{\bfAtilde} = 1$
such that
\begin{equation*}
    \kstar{\sol{\cQ(\bfx)}}\ =\ \sol{\cQtilde(\bfx)},
\end{equation*}
where $\cQtilde(\bfx)$ is the following Boolean combination of integer linear systems
\begin{equation*}    
    \exists \bfy_1, \bfy_2\ 
    (\bfx\ =\ \projp{n}{\bfy_1} + \bfy_2)\ \land\ 
    (\bfAtilde\bfy_1\ =\ \bfzero)\ \land\ 
    (\bfA\bfy_2\ =\ \bfzero)\ \land\ 
    ((\bfy_1=\bfzero) \to (\bfy_2=\bfzero)),
\end{equation*}
$k = n\cdot(2D+1)^{m} + 1$,
$t \le 2nk$, and
$D = n \cdot \norm{\bfA} \cdot \left((n+1)\cdot \norm{\bfA} + \norm{\bfc} + 1\right)^{m}$.
Moreover, $\bfAtilde$ can be computed in time $2^\bigO{\log n + \log K + m^2}$,
where $K := \max\left(\norm{\bfA}, \norm{\bfc}\right)$.

\medskip

We prove the lemma by considering \emph{minimal solutions} of the integer linear system.
Recall that $\vle$ means pointwise inequality between vectors.
\begin{definition}[Minimal elements]
    For a set $\cS \subseteq \bbN^n$ and $\bfv \in \cS$,
    we say that $\bfv$ is \emph{minimal} in $\cS$
    if, for every $\bfu \in \cS$ with $\bfu \neq \bfv$,
     $\bfu \not\vle \bfv$.
\end{definition}

Note that the following properties about minimal elements are trivial.
\begin{lemma}\label{lem:min_property}
    Let $\cS$ be a nonempty subset of $\bbN^n$.
    \begin{enumerate}
        \item There exists $\bfv \in \cS$ such that $\bfv$ is minimal in $\cS$.
        \item For every $\bfv \in \cS$, there exists a minimal element $\bfu \in \cS$,
            such that $\bfv - \bfu \in \bbN^n$.
    \end{enumerate}
\end{lemma}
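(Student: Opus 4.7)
The plan is to derive (1) from (2), and prove (2) by a finiteness argument. The key observation is that for any vector $\bfv \in \bbN^n$, the ``downward cone'' $\{\bfu \in \bbN^n : \bfu \vle \bfv\}$ is finite, since it is contained in the finite product $\prod_{i=1}^n \{0, 1, \ldots, \bfv_i\}$. This lets us replace the possibly infinite set $\cS$ with a finite nonempty poset, in which minimal elements exist by an immediate induction.

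For part (2), I would fix $\bfv \in \cS$ and define $\cS_{\bfv} := \{\bfu \in \cS : \bfu \vle \bfv\}$. This set is finite by the cone observation and nonempty since $\bfv \in \cS_{\bfv}$. Pick any $\bfu \in \cS_{\bfv}$ that is minimal inside $\cS_{\bfv}$ with respect to $\vle$. I would then verify that $\bfu$ is in fact minimal in all of $\cS$: if some $\bfw \in \cS \setminus \{\bfu\}$ satisfied $\bfw \vle \bfu$, then by transitivity $\bfw \vle \bfv$, placing $\bfw$ in $\cS_{\bfv}$ and contradicting minimality of $\bfu$ in $\cS_{\bfv}$. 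Finally, $\bfu \vle \bfv$ immediately gives $\bfv - \bfu \in \bbN^n$.

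Part (1) then follows as a one-line corollary: since $\cS$ is nonempty, pick any $\bfv \in \cS$ and apply (2) to obtain a minimal element $\bfu \in \cS$ with $\bfu \vle \bfv$.

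There is no real obstacle here; the only subtle point is remembering that minimality inside the downward cone $\cS_{\bfv}$ transfers to minimality in the full set $\cS$, which is why one should prove (2) first rather than trying to argue (1) directly via some descending-chain argument on an infinite set.
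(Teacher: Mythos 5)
The paper declares this lemma ``trivial'' and gives no proof, so there is nothing to compare your argument against directly. Your argument is correct and is the natural one: restricting to the finite downward cone $\cS_\bfv = \{\bfu \in \cS : \bfu \vle \bfv\}$ reduces the problem to finding a minimal element of a finite nonempty poset, and your transfer step (minimality in $\cS_\bfv$ implies minimality in $\cS$, because any $\bfw \vle \bfu$ also satisfies $\bfw \vle \bfv$ and so lies in the cone) is exactly what is needed and is correctly justified. Deriving (1) from (2) is the right economy of effort. The only other standard route would be to invoke that $(\bbN^n, \vle)$ is well-founded (a consequence of Dickson's lemma, or just of the fact that $\vle$ on $\bbN^n$ embeds into the lexicographic well-order after summing coordinates), but your elementary finiteness argument is cleaner and self-contained, and it has the small added virtue of giving the explicit bound $\bfu \vle \bfv$ that part (2) actually requires.
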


For an integer linear system $\cQ(\bfx)$,
let $\solmin{\cQ(\bfx)}$ denotes the set of minimal solutions,
i.e., $\solmin{\cQ(\bfx)} := \setc{\bfv \in \sol{\cQ(\bfx)}}{\bfv\ \text{is minimal in}\ \sol{\cQ(\bfx)}}$.
Note that the maximal entry of minimal solutions is bounded.

\begin{lemma}[\cite{Pottier}, also see Proposition 4. in~\cite{semilinear}]\label{lem:minimal}
    For every integer linear system $\cQ(\bfx): \bfA \bfx = \bfc$,
    where $\bfA \in \bbZ^{m \times n}$ and $\bfc \in \bbZ^{m}$,
    for every $\bfv \in \solmin{\cQ(\bfx)}$,
     $\norm{\bfv} \le \left((n+1)\cdot\norm{\bfA} + \norm{\bfc} + 1\right)^{m}$.
\end{lemma}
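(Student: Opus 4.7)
The plan is to adapt Pottier's classical monotone-path argument. Fix a minimal solution $\bfv \in \solmin{\cQ(\bfx)}$ and let $N = \bfone \cdot \bfv$ be the sum of its entries; since $\norm{\bfv} \le N$, it suffices to bound $N$ by the desired quantity.

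First, I would write $\bfv$ as an ordered sum of $N$ unit vectors, producing a monotone path $\bfs_0 = \bfzero, \bfs_1, \ldots, \bfs_N = \bfv$ in $\bbN^n$ with each increment $\bfs_{i+1} - \bfs_i$ equal to some unit vector $\bfe_{j_i}$. The freedom to reorder the $j_i$'s will be exploited below. The first observation, which uses minimality, is that the $N+1$ vectors $\bfA \bfs_0, \bfA \bfs_1, \ldots, \bfA \bfs_N \in \bbZ^m$ must be pairwise distinct. Indeed, if $\bfA \bfs_i = \bfA \bfs_j$ for some $i < j$, then $\bfs_j - \bfs_i \in \bbN^n$ lies in the kernel of $\bfA$, so $\bfv - (\bfs_j - \bfs_i)$ is a strictly smaller non-negative solution of $\bfA \bfx = \bfc$, contradicting minimality of $\bfv$.

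The main obstacle, and the step that determines the exponent constants, is choosing the ordering so that every partial image $\bfA \bfs_i$ lies in an $m$-dimensional cube $[-R, R]^m$ for an $R$ at most roughly $(n+1)\norm{\bfA} + \norm{\bfc}$. I would proceed by a greedy induction on $i$: at the $i$th step we have a remainder $\bfv - \bfs_i \in \bbN^n$ which maps under $\bfA$ to $\bfc - \bfA \bfs_i$. Among the at most $n$ coordinates $j$ still available (those with $(\bfv - \bfs_i)_j > 0$), an averaging argument over the corresponding columns of $\bfA$ produces one whose addition keeps $\bfA \bfs_{i+1}$ inside the cube. The $(n+1)\norm{\bfA}$ contribution comes from averaging over up to $n$ columns of norm at most $\norm{\bfA}$; the $\norm{\bfc}$ term reflects the target position that the path must eventually reach; and an additive constant absorbs the passage from real-valued to integer-valued bounds.

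Finally, combining the distinctness property with the box containment, the $N+1$ distinct integer vectors $\bfA \bfs_i$ all lie in an $m$-dimensional integer cube of side at most $2R+1$, giving $N + 1 \le (2R+1)^m$. After routine bookkeeping of the constants, this yields $\norm{\bfv} \le N \le ((n+1)\norm{\bfA} + \norm{\bfc} + 1)^m$, as claimed.
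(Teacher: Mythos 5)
This lemma is cited in the paper (to Pottier, and to a second reference) and the paper does not supply its own proof, so there is no internal argument to compare against; your attempt has to be judged on its own. Your overall plan is the right one and matches the classical monotone-path argument underlying Pottier's bound: decompose $\bfv$ as a monotone lattice path from $\bfzero$ to $\bfv$, use minimality to conclude that the partial images $\bfA\bfs_0, \ldots, \bfA\bfs_N$ are pairwise distinct, reorder the increments so those images stay in an axis-aligned integer box, and then count lattice points. The distinctness step is argued correctly and completely: $\bfs_j - \bfs_i$ is a nonzero vector in $\bbN^n \cap \ker\bfA$, so $\bfv - (\bfs_j - \bfs_i) \in \bbN^n$ would be a strictly smaller solution of $\bfA\bfx = \bfc$, contradicting minimality.

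The reordering step, however, is where all the quantitative content lives, and your sketch of it leaves a genuine gap. A greedy choice that repairs an overflow in one coordinate $\ell$ can simultaneously push another coordinate $\ell'$ out of the box, and the per-coordinate averaging heuristic you describe does not by itself control all $m$ coordinates at once; controlling them simultaneously is the actual difficulty (a Steinitz-rearrangement or proportional-apportionment style bound) and needs an explicit potential function or inductive invariant, not a local greedy fix. Moreover, the bookkeeping does not close as written: you bound $N+1$ by $(2R+1)^m$ and then need $2R+1 = (n+1)\norm{\bfA} + \norm{\bfc} + 1$, i.e.\ $R = ((n+1)\norm{\bfA} + \norm{\bfc})/2$, while the averaging you describe would at best yield a half-width on the order of $n\norm{\bfA} + \norm{\bfc}$. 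For instance, the natural proportional-interleaving ordering (at step $i$, keep each coordinate of $\bfs_i$ within $1$ of $(i/N)\bfv$) keeps each coordinate of $\bfA\bfs_i$ within $n\norm{\bfA}$ of the segment from $0$ to $c_\ell$, and after counting one obtains $N+1 \le (2n\norm{\bfA} + \norm{\bfc} + 1)^m$, which is weaker than the stated bound for $n\ge 2$. So the structure of your argument is the right one, but as written it does not establish the stated inequality with the stated constant; you would need to supply the precise ordering and its invariant from the cited proof to reach the base $(n+1)\norm{\bfA} + \norm{\bfc} + 1$.
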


We begin by turning to the automata-theoretic approach for the solutions of integer linear systems.

\begin{definition}[Acyclic, unambiguous, and simple finite automata] \label{def:simpleplus}
As usual, a finite automaton can be viewed as a 
directed graph where the vertices are the states
and the edges are the transitions.

Such an automaton is \emph{acyclic} if the graph does not contain a cycle. 
It is \emph{unambiguous} if every word has 
at most one accepting run.
    We say that a finite automaton is \emph{simple}
    if it is unambiguous, acyclic, and has exactly one accepting state.
\end{definition}

\begin{definition}[Parikh Vector]
    For a word $w = a_{t_1}\cdots a_{t_\ell} \in \Sigma^*$,
    the \emph{Parikh vector} of $w$
    is a vector in $\bbN^{n}$, defined as
    $\parikh{w} := \sum_{i \in \intsinterval{\ell}} \unit{t_i}$.
    Intuitively, the $i^{th}$ entry in $\parikh{w}$ is the number of occurrences of $a_i$ in $w$.

For a finite automaton $\cM$, the \emph{Parikh image} of $\cM$ is defined as
\begin{equation*}
\parikh{\cM}
\ :=\ 
\setc{\parikh{w}}{\text{$\cM$ accepts $w$}}.
\end{equation*}
\end{definition}

We can encode a finite subset of solutions of a linear system $\cQ(\bfx):\bfA \bfx = \bfc$ as the Parikh image of a finite automaton $\cM$.
More precisely, $\parikh{\cM}$ contains all the minimal solutions of $\cQ(\bfx)$.
The main idea behind the construction of $\cM$ is as follows.
Suppose $\cQ(\bfx)$ has a solution
and $\bfA\in \bbZ^{m\times n}$.
By Lemma~\ref{lem:minimal},
the components of every minimal solution are bounded above by  
$L:=\left((n+1)\cdot\norm{\bfA} + \norm{\bfc} + 1\right)^{m}$.
We can construct a finite automaton $\cM$ where the set of states is $\setc{\bfA\bfv}{\bfv \in [0,L]^n}$
and the transitions are
$\langle \bfA\bfv,a_i, \bfA(\bfv+\bfe_i) \rangle$ for every $1\le i \le n$.
The initial state is $\bfzero$ and the accepting state is $\bfc$.
Obviously, $\parikh{\cM}$ contains all the minimal solutions. In fact, it contains all solutions that are bounded by $L$. 
Note also that
$\cM$ can be constructed in exponential time.
With a little more work, we can make $\cM$ simple, a property that will be useful when analyzing the Kleene star of $\sol{\cQ}$.

\begin{lemma}\label{lem:ilp_to_nfa}
For every integer linear system $\cQ(\bfx): \bfA \bfx = \bfc$,
where $\bfA \in \bbZ^{m \times n}$ and $\bfc \in \bbZ^{m}$,
there exists a simple finite automaton $\cM$
with $k$ states and $t$ transitions,
where $k = n\cdot L\cdot(2D+1)^{m} + 1$, 
$t \le 2nk$,
$L = \left((n+1)\cdot\norm{\bfA} + \norm{\bfc} + 1\right)^{m}$
and $D = n \cdot \norm{\bfA} \cdot L$,
    such that
    \begin{equation*}
        \solmin{\cQ(\bfx)}
        \ \subseteq\ 
        \parikh{\cM}
        \ \subseteq\ 
        \sol{\cQ(\bfx)}.
    \end{equation*}
    Moreover, $\cM$ can be constructed in time exponential in $n,m,\norm{\bfA},\norm{\bfc}$.
\end{lemma}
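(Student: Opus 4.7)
The plan is to encode the solutions of the system as the Parikh image of a finite automaton that accepts each Parikh vector via its unique sorted representative $a_1^{v_1} a_2^{v_2} \cdots a_n^{v_n}$. This ordering trick automatically buys unambiguity and leaves acyclicity and the single-accepting-state requirement to be handled by the state space.

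Concretely, I would take the state set of $\cM$ to be
\[
    \{q_0\} \cup \left([-D,D]^m \times \intsinterval{n} \times \intinterval{0}{L}\right) \cup \{q_f\},
\]
where a triple $(\bfs, j, c)$ is read as ``the partial sum $\bfA\cdot\parikh{\text{prefix so far}}$ equals $\bfs$, the current letter is $a_j$, and $c$ copies of $a_j$ have already been read at this level.'' From $(\bfs, j, c)$ there are two kinds of outgoing transitions: a ``stay'' transition on $a_j$ to $(\bfs + \bfA\unit{j}, j, c+1)$, allowed only when $c < L$ and the target partial sum lies in $[-D,D]^m$; and, for each $k > j$, a ``jump'' transition on $a_k$ to $(\bfs + \bfA\unit{k}, k, 1)$, under the analogous range restriction. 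Whenever a transition would produce $\bfs + \bfA\unit{\cdot} = \bfc$, its target is redirected to $q_f$. From $q_0$ I would provide the initial analogues reaching $(\bfA\unit{k}, k, 1)$ for every $k$, together with a direct edge $q_0 \xrightarrow{a_k} q_f$ whenever $\bfA\unit{k} = \bfc$.

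Simplicity in the sense of Definition \ref{def:simpleplus} is then immediate from the shape of the state space: along every non-terminal transition either $j$ strictly increases or $j$ is preserved while $c$ strictly increases, which kills all cycles even when some column $\bfA\unit{j}$ is zero (this is exactly why the per-level counter $c$ was introduced); $q_f$ is unique by construction; and once a letter $a_k$ is read the automaton can only consume letters $a_\ell$ with $\ell \ge k$, so only sorted words are accepted and the run on any such word is deterministic. The inclusion $\parikh{\cM} \subseteq \sol{\cQ(\bfx)}$ is then trivial since the first coordinate of the state tracks $\bfA \cdot \parikh{\text{prefix}}$ exactly, so reaching $q_f$ forces $\bfA \cdot \parikh{w} = \bfc$. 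For $\solmin{\cQ(\bfx)} \subseteq \parikh{\cM}$, Lemma \ref{lem:minimal} supplies $\norm{\bfv} \le L$ for every minimal $\bfv$, which keeps each level-counter in $\intinterval{0}{L}$ along the sorted run, while $\norm{\bfA\bfv'} \le n \norm{\bfA} L = D$ for any $\bfv' \vle \bfv$ keeps every intermediate partial sum in $[-D,D]^m$; thus the sorted word $a_1^{v_1} \cdots a_n^{v_n}$ is accepted.

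Counting states gives at most $n(L+1)(2D+1)^m + 2$, and each state has at most $n$ outgoing transitions, yielding the claimed bounds on $k$ and $t$ up to constant factors. The construction proceeds by explicit enumeration of states and local transition checks, so it runs in time exponential in $n$, $m$, $\log \norm{\bfA}$, $\log \norm{\bfc}$. The most delicate point I expect is precisely the acyclicity argument in the presence of zero columns of $\bfA$: without the counter $c$ the ``stay'' transition would become a self-loop, and the bound $v_j \le L$ from Lemma \ref{lem:minimal} is exactly what certifies that imposing $c \le L$ still retains all minimal solutions inside $\parikh{\cM}$.
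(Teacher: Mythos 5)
Your construction is the same as the paper's: restrict to sorted words $a_1^{v_1}\cdots a_n^{v_n}$ to get unambiguity, track the partial sum $\bfA\cdot\parikh{\text{prefix}}$ bounded in $[-D,D]^m$, keep a per-letter counter up to $L$ to break self-loops on zero columns, and funnel all accepting transitions into a single state. The only differences are cosmetic (a dedicated initial state, a permuted tuple order, and a bookkeeping discrepancy of $\pm1$ in the state count), so the argument matches the paper's essentially exactly.
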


\begin{proof}
If $\bfc = \bfzero$, then the system $\cQ(\bfx)$ admits the
solution $\bfzero$.
The finite automaton that accepts only the empty word is the desired $\cM$.
So we concentrate on the case when $\bfc\neq \bfzero$.

The automaton $\cM$ is very similar to the one described in the intuition preceding the lemma.
We fix a finite alphabet of $n$ symbols $\Sigma := \set{a_1, \ldots, a_{n}}$.
To ensure acyclicity, we modify the automaton $\cM$ to accept only words from $a_1^*a_2^*\cdots a_n^*$. The automaton will also implement a counter that counts the number of occurrences of each $a_i$ up to $L$. Such a counter can be implemented using the states of $\cM$.
We now give the details.

Let $\cD$ be the set $\intinterval{-D}{D}^{m}$.
\begin{itemize}
\item 
The set of states is 
$\{q_{acc}\}\cup(\intsinterval{n}\times\intsinterval{L} \times \cD)$.
\item
The initial state is $(0,0,\bfzero)$.
\item 
For each $0\leq i<j \leq n$, $0\leq t\leq L-1$,
and $\bfv \in \cD$,
we have the transitions:
\begin{itemize}
\item 
if $\bfv + \bfA\unit{i} \in \cD$,
then $(\tuple{i,t, \bfv}, a_i, \tuple{i,t+1, \bfv + \bfA\unit{i}}) \in \Delta$, 
\item 
if $\bfv + \bfA\unit{i}  = \bfc$,
then $(\tuple{i, t,\bfv}, a_i, q_{acc}) \in \Delta$.

\item 
if $\bfv + \bfA\unit{j} \in \cD$,
then $(\tuple{i,t, \bfv}, a_j, \tuple{j,1, \bfv + \bfA\unit{j}}) \in \Delta$, 
\item 
if $\bfv + \bfA\unit{j}  = \bfc$,
then $(\tuple{i, t,\bfv}, a_j, q_{acc}) \in \Delta$.
\end{itemize}
\end{itemize}
Obviously, $\cM$ has only one accepting state.
The automaton $\cM$ is acyclic since it can only go from the state
$\tuple{i,t, \bfv}$ to $\tuple{j,t',\bfu}$
when $i< j$ or when $i=j$ and $t<t'$.
It is also unambiguous since
for every state $\tuple{i,t,\bfv}$ and $a\in \Sigma$, there is at most one state $\tuple{i',t',\bfu}$
such that $\tuple{i,t,\bfv},a,\tuple{i',t',\bfu}$ is a transition.
Finally, it is routine to verify that
$\parikh{\cM}\subseteq \sol{\cQ}$.
That $\solmin{\cQ}\subseteq \parikh{\cM}$
follows from Lemma~\ref{lem:minimal}.
\end{proof}

Next, we consider the Kleene star of sets.
Recall the definition:

\medskip

For every $\cS \subseteq \bbN^n$,
the \emph{Kleene star} of $\cS$ is defined as
\begin{equation*}
    \kstar{\cS} := \setc{\sum_{s \in \cS'} s}{\text{$\cS'$ is a finite multisubset of $\cS$}}.
\end{equation*}

\medskip

In the following we fix a finite alphabet of $n$ symbols $\Sigma:=\{a_1,\ldots,a_n\}$
and assume that every finite automaton is over this alphabet.
We need more terminology concerning finite automata.
Note that for every acyclic finite automaton $\cM$ (Definition \ref{def:simpleplus})
a run $r$ of $\cM$ uses each transition at most once.
Therefore, we can represent $r$ by a Boolean vector with dimension $\abs{\Delta}$.
Recall that for a unambiguous finite automaton $\cM$ and an accepted word $w$, there exists exactly one accepting run of $\cM$ on $w$.
Let $\bbB:=\{0,1\}$.

\begin{definition}
    For every simple finite automaton $\cM$ with $t$ transitions,
    for every accepted word $w$ of $\cM$,
    letting $r$ be the unique accepting run of $\cM$ on $w$,
    the \emph{path vector} of $w$, denoted by $\mpath{w}$,
    is the vector in $\bbB^t$ defined as follows:
    for $1 \le i \le t$, the $i^{th}$ entry of $\mpath{w}$ is 1
    if and only if $r$ contains the $i^{th}$ transition.
\end{definition}

\begin{lemma}\label{lem:path}
For every simple finite automaton $\cM$ with $t$ transitions,
there exists a matrix $\bfB_\cM \in \bbZ^{n \times t}$,
which can be computed in time $\bigO{nt}$,
such that
\begin{itemize}
\item $\norm{\bfB_\cM} = 1$, and
\item for every word $w$ accepted by $\cM$, $\parikh{w} = \bfB_\cM\cdot\mpath{w}$.
\end{itemize}
\end{lemma}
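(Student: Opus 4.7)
The plan is to define $\bfB_\cM$ as the obvious ``label incidence'' matrix of the automaton's transitions and then verify that the two stated properties follow from the fact that $\cM$ is simple (i.e.\ unambiguous, acyclic, with a unique accepting state).

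Concretely, enumerate the transitions of $\cM$ as $\delta_1,\ldots,\delta_t$, and for each $\delta_j$ let $\ell(\delta_j)\in\{a_1,\ldots,a_n\}$ be its input symbol. I would set
\begin{equation*}
    \bfB_\cM[i,j]\ :=\ \begin{cases} 1 & \text{if } \ell(\delta_j)=a_i,\\ 0 & \text{otherwise.}\end{cases}
\end{equation*}
This immediately gives $\norm{\bfB_\cM}=1$, and $\bfB_\cM$ can be filled in cell by cell from the transition list in time $\bigO{nt}$ (in fact $\bigO{t}$ time suffices if one initializes to zero and then walks through the transition list once).

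For the identity $\parikh{w}=\bfB_\cM\cdot\mpath{w}$, fix a word $w$ accepted by $\cM$. Because $\cM$ is unambiguous there is a unique accepting run $r$ on $w$, and because $\cM$ is acyclic $r$ traverses each transition at most once, so the path vector $\mpath{w}\in\bbB^{t}$ is well-defined and its $j$-th entry records whether $\delta_j$ appears in $r$. The $i$-th coordinate of $\bfB_\cM\cdot\mpath{w}$ then equals
\begin{equation*}
    \sum_{j\,:\,\ell(\delta_j)=a_i}\mpath{w}[j]\ =\ \#\{\,j : \delta_j\text{ occurs in } r\text{ and } \ell(\delta_j)=a_i\,\},
\end{equation*}
which by definition is the number of occurrences of $a_i$ in $w$, i.e.\ the $i$-th coordinate of $\parikh{w}$.

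There is no real obstacle here: the whole content of the lemma is to observe that acyclicity makes the path vector Boolean (rather than a multiplicity vector in $\bbN^{t}$), and unambiguity makes the path vector a function of $w$ alone. The only thing worth being careful about is that all three conditions in the definition of ``simple'' are used: acyclicity for $\mpath{w}\in\bbB^t$, unambiguity for well-definedness of $\mpath{w}$ given $w$, and the single-accepting-state property implicitly (since it is part of the simplicity hypothesis invoked when referring to $\mpath{w}$ as in the preceding definition).
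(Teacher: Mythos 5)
Your construction is exactly the one in the paper (the label-incidence matrix $\bfB_\cM[i,j]=1$ iff the $j$-th transition reads $a_i$), and your verification just spells out what the paper calls ``trivial to verify.'' Same approach, correctly executed.
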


\begin{proof}
    We construct the matrix $\bfB$ as follows.
    For $1 \le i \le n$ and $1 \le j \le t$,
    if the $t^{th}$ transition is of the form $(q, a_i, p)$,
    then $(B_\cM)_{i, j} = 1$.
    Otherwise, $(B_\cM)_{i, j} = 0$.
    It is trivial to verify that
    $\parikh{w} = \bfB\cdot\mpath{w}$.
\end{proof}

We now prove a  lemma regarding the Kleene closure of the Parikh image.
The proof employs a similar idea as in the argument for Theorem~1 in \cite{tree}.

\begin{lemma}\label{lem:kstar_nfa}
    For every simple finite automaton $\cM$ with $k$ states $q_1, \ldots,q_k$ and $t$ transitions,
    there exists a matrix $\bfA_\cM \in \bbZ^{k \times t}$,
    which can be computed in time $\bigO{kt}$,
    such that
    \begin{itemize}
        \item $\norm{\bfA_\cM} = 1$, and
        \item $\kstar{\parikh{\cM}} =
            \sol{\cQ_\cM(\bfx)}$,
            where
            $\cQ_\cM(\bfx) := \exists \bfy\ (\bfx = \projp{n}{\bfy}) \land (\tilde{\bfA}_\cM \bfy = \bfzero)$ and
            $\bfAtilde_\cM = 
            \left[
                \begin{array}{r|r}
                    -\bfI & \bfB_\cM \\
                    \hline
                    \bfzero& \bfA_\cM
                \end{array}
            \right]$.
    \end{itemize}
\end{lemma}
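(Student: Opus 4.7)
I would take $\bfA_\cM$ to be the signed incidence matrix of $\cM$ with flow conservation imposed only at interior states. Letting $q_0$ denote the initial state and $q_f$ the (unique) accepting state of $\cM$, set $(\bfA_\cM)_{q,j} := +1$ if transition $j$ ends at $q$, $-1$ if it begins at $q$, and $0$ otherwise, for each $q \notin \set{q_0, q_f}$, and take the rows indexed by $q_0, q_f$ to be identically zero. Then $\norm{\bfA_\cM} = 1$, and $\bfA_\cM$ can be computed in $\bigO{kt}$ time by one scan of the transition list. (If $q_0 = q_f$, acyclicity forces $\cM$ to accept only the empty word, so $\kstar{\parikh{\cM}} = \set{\bfzero}$ and the claim is immediate.) Under this choice, $\bfA_\cM \bfy_2 = \bfzero$ is precisely flow conservation at every state other than $q_0$ and $q_f$.

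\textbf{Forward direction.} Given $\bfv \in \kstar{\parikh{\cM}}$, write $\bfv = \sum_i \parikh{w_i}$ for a finite multiset of accepted words $w_i$, and set $\bfy_2 := \sum_i \mpath{w_i} \in \bbN^t$. Since $\cM$ is simple, each $\mpath{w_i}$ is the characteristic vector of a directed $q_0 \to q_f$ path, so it conserves flow at every internal state, yielding $\bfA_\cM \mpath{w_i} = \bfzero$. Summing gives $\bfA_\cM \bfy_2 = \bfzero$, and by Lemma~\ref{lem:path} we have $\bfB_\cM \bfy_2 = \bfv$. Setting $\bfy_1 := \bfv$ then witnesses $\bfv \in \sol{\cQ_\cM(\bfx)}$.

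\textbf{Reverse direction.} This is the main work, and it proceeds by flow decomposition. Given $\bfy_2 \in \bbN^t$ with $\bfA_\cM \bfy_2 = \bfzero$, let $G'$ be the subgraph of transitions $j$ with $(\bfy_2)_j > 0$. Acyclicity forces every edge of $G'$ to lie on a directed path between some source and some sink of $G'$; flow conservation at all states outside $\set{q_0, q_f}$ forces every source and every sink of $G'$ to lie in $\set{q_0, q_f}$ (a source of $G'$ not in $\set{q_0, q_f}$ would have positive out-flow but zero in-flow, contradicting its flow-conservation row); and acyclicity of $\cM$ excludes any $q_f \to q_0$ path. Hence every edge of $G'$ lies on a $q_0 \to q_f$ path. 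A greedy decomposition---repeatedly choose such a path $P$, subtract the bottleneck value of $\bfy_2$ along $P$ times the characteristic vector of $P$, and iterate on the residual vector, which remains non-negative and still satisfies $\bfA_\cM (\,\cdot\,) = \bfzero$---terminates and writes $\bfy_2 = \sum_i \mpath{w_i}$, using unambiguity to identify each $q_0 \to q_f$ path with a unique accepted word $w_i$. Therefore $\bfB_\cM \bfy_2 = \sum_i \parikh{w_i} \in \kstar{\parikh{\cM}}$.

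\textbf{Main obstacle.} The subtle point is that a general simple automaton $\cM$ may have ``spurious'' sources and sinks besides $q_0$ and $q_f$, so one cannot directly invoke a unique-source/unique-sink flow-decomposition theorem. The argument above handles this by observing that any such spurious extreme of $G'$ would violate a flow-conservation row of $\bfA_\cM$, forcing the adjacent edges to carry zero flow. This is what justifies using zero rows for $q_0$ and $q_f$, rather than the more traditional device of adding an artificial back-edge from $q_f$ to $q_0$ and working with circulations; the zero-row choice is what keeps $\norm{\bfA_\cM} = 1$ as required by the statement.
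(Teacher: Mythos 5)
Your proposal is correct and takes essentially the same approach as the paper: the same signed-incidence matrix with zero rows at the initial and accepting states, the same summing argument for the forward direction, and flow decomposition for the reverse direction. The only difference is that the paper simply cites the flow decomposition theorem where you spell out why spurious sources and sinks cannot arise and explicitly handle the $q_0 = q_f$ corner case, which is a welcome but not essential elaboration.
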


\begin{proof}
    We construct the matrix $\bfA_\cM$ as follows.
    For $1 \le i \le k$ and $1 \le j \le t$,
    if $q_i$ is the initial state or the accepting state, then
    $\left(A_\cM\right)_{i, j} = 0$.
    Otherwise,
    \begin{equation*}
        \left(A_\cM\right)_{i, j}\ :=\ 
        \begin{cases}
            1,  &\text{if the $j^{th}$ transition is of the form $(p, a, q_i)$} \\
            -1, &\text{if the $j^{th}$ transition is of the form $(q_i, a, p)$} \\
            0,  &\text{otherwise}.
        \end{cases}
    \end{equation*}
    Note that since $\cM$ is acyclic, there is no transition of the form $(q_i, a, q_i)$.

    We first claim that for every accepted word $w$, 
    we have
    \begin{equation*}
        \bfA_\cM \cdot \mpath{w} = \bfzero.    
    \end{equation*}
    We will show that $\left(\bfA_\cM \cdot \mpath{w}\right)_i = 0$,    for $1 \le i \le k$.
    If $q_i$ is the initial state or the accepting state,
    by definition, $\left(A_\cM\right)_{i, j} = 0$ for every $1 \le j \le k$,
    hence, $\left(\bfA_\cM \cdot \mpath{w}\right)_i = 0$.

    If $q_i$ does not appear in the accepting run of $w$,
    then $\mpath{w}_j=0$ if
    the $j^{th}$ transition is going into or coming out of state $q_i$.
    Hence, $\left(\bfA_\cM \cdot \mpath{w}\right)_i = 0$.

    Now, suppose that $q_i$ appears in $\mpath{w}$
    and it is 
    neither the initial state nor the accepting state.
    Let $r$ be the accepting run of $\cM$ on $w$.
    Since $\cM$ is acyclic, $q_i$ appears only once, which implies that $r$ is of the form
    \begin{equation*}
        \cdots p\transs{a} q_i \transs{a'}p'\cdots
    \end{equation*}
    Suppose that
    $\tuple{p, a, q_i}$ is the $j_1^{th}$ transition and 
    $\tuple{q_i, a', p'}$ is the $j_2^{th}$ transition.
    Then we have:
    \begin{equation*}
        \left(\bfA_\cM \cdot \mpath{w}\right)_i
        =
        \left(\bfA_\cM\right)_{i, j_1} +
        \left(\bfA_\cM\right)_{i, j_2}
        = 1 + (-1)
        = 0.
    \end{equation*}

    We will now show that
    $\kstar{\parikh{\cM}} = \sol{\cQ_\cM(\bfx)}$.

    \textbf{\uline{$\kstar{\parikh{\cM}} \subseteq \sol{\cQ_\cM(\bfx)}$}.}
    Suppose that $\bfv \in \kstar{\parikh{\cM}}$.
    By the definition of Kleene star and Parikh image,
    there exist accepted words $w_1, \ldots, w_\ell$
    and $\bfv = \sum_{i \in \intsinterval{\ell}} \parikh{w_i}$.
    Let $\bfu := \sum_{i \in \intsinterval{\ell}} \mpath{w_i}$.
    By the claim above and Lemma~\ref{lem:path}, we have
    \begin{equation*}
        \begin{aligned}
            \bfA_\cM \bfu\ =\ &\sum_{i \in \intsinterval{\ell}} \bfA \cdot \mpath{w_i}\ =\ \bfzero \\
            \bfB_\cM \bfu\ =\ &\sum_{i \in \intsinterval{\ell}} \bfB \cdot \mpath{w_i}
            \ =\ \sum_{i \in \intsinterval{\ell}} \parikh{w_i}\ =\ \bfv,
        \end{aligned}
    \end{equation*}
    which implies that concatenation of $\bfv$ and $\bfu$ is a solution of the integer linear system $\cQ_\cM(\bfx)$.
    Hence, $\bfv \in \sol{\cQ_\cM(\bfx)}$.

    \textbf{\uline{$\sol{\cQ_\cM(\bfx)}\subseteq\kstar{\parikh{\cM}}$}.}
    Suppose that $\bfv \in \sol{\cQ_\cM(\bfx)}$.
    Let $\bfu$ be the vector satisfying $\bfv = \bfB_\cM\bfu$ and $\bfA_\cM\bfu = \bfzero$.
    We interpret the $j^{th}$ entry of $\bfu$, denoted by $u_j$
    as a flow of the $j^{th}$ edge (transition) in the transition diagram of $\cM$.
    Then, since $\bfA_\cM\bfu = \bfzero$ by the claim above,
    for every node (state) $q$,
    if $q$ is neither the initial state nor the accepting state,
    then 
    \begin{equation*}
        \sum_{\substack{
                j \in \intsinterval{t}\\
                \text{the $j^{th}$ transition is of the form $(p, a, q)$}
        }} u_j
        -
        \sum_{\substack{
                j \in \intsinterval{t}\\
                \text{the $j^{th}$ transition is of the form $(q, a, p)$}
        }} u_j
        = 0.
    \end{equation*}
    That is, the incoming flow to the node (state) $q$ equals its outgoing flow.
    By the flow decomposition theorem~\cite{flow},
    there exist paths (words) $w_1, \ldots, w_\ell$ from the initial state to the accepting state
    such that $\bfu = \sum_{i \in \intsinterval{\ell}} \mpath{w_i}$.
    Since each $w_i$ is a path from the initial state to the accepting state, each
    $w_i$ is a word accepted by $\cM$.
    By Lemma~\ref{lem:path}, for each $i \leq \ell$, $\parikh{w_i} = \bfB_\cM\cdot\mpath{w_i}$.
    Thus, we have
    \begin{equation*}
        \bfv
        \ =\ \bfB_\cM\bfu
        \ =\ \sum_{i \in \intsinterval{\ell}} \bfB_\cM\cdot\mpath{w_i}
        \ =\ \sum_{i \in \intsinterval{\ell}} \parikh{w_i}
        \ \in\ \kstar{\parikh{\cM}}.
    \end{equation*}
\end{proof}

Combining Lemma~\ref{lem:ilp_to_nfa} and Lemma~\ref{lem:kstar_nfa} we obtain
a lemma regarding the Kleene star of the set of minimal solutions of an integer linear system.
\begin{lemma}\label{lemma:kleene_minimal}
    For every integer linear system $\cQ(\bfx): \bfA \bfx = \bfc$,
    where $\bfA \in \bbZ^{m \times n}$, $\bfc \in \bbZ^{m}$,
    there exists $\bfAtilde \in \bbZ^{(n+t) \times (n+k)}$ with $\norm{\bfAtilde} = 1$
    such that
    \begin{equation*}
        \kstar{\solmin{\cQ(\bfx)}}
        \ \subseteq\ 
        \sol{\exists \bfy\ (\bfx = \projp{n}{\bfy}) \land (\tilde{\bfA} \bfy = \bfzero)}
        \ \subseteq\ 
        \kstar{\sol{\cQ(\bfx)}},
    \end{equation*}
    where 
    $k = n\cdot(2D+1)^{m} + 1$,
    $t \le 2nk$, and
    $D = n \cdot \norm{\bfA} \cdot \left((n+1)\cdot \norm{\bfA} + \norm{\bfc} + 1\right)^{m}$.
    Moreover, $\bfAtilde$ can be computed in time $2^\bigO{\log n + \log K + m^2}$,
    where $K := \max\left(\norm{\bfA}, \norm{\bfc}\right)$.
\end{lemma}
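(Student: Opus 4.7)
The plan is to compose the two preceding lemmas directly. Lemma~\ref{lem:ilp_to_nfa} encodes the minimal solutions of $\cQ(\bfx)$ inside the Parikh image of a simple finite automaton that remains inside $\sol{\cQ(\bfx)}$, and Lemma~\ref{lem:kstar_nfa} rewrites the Kleene star of the Parikh image of a simple automaton as the solution set of a homogeneous integer linear system whose coefficient matrix has entries in $\{-1, 0, 1\}$. Sandwiching these two facts with the monotonicity of the Kleene star gives exactly the double inclusion in the statement, with $\bfAtilde$ being the matrix produced by Lemma~\ref{lem:kstar_nfa}.

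First I would apply Lemma~\ref{lem:ilp_to_nfa} to $\cQ(\bfx)$ to obtain a simple finite automaton $\cM$, with $k$ states and $t \le 2nk$ transitions for the stated $k$, satisfying
\begin{equation*}
    \solmin{\cQ(\bfx)}\ \subseteq\ \parikh{\cM}\ \subseteq\ \sol{\cQ(\bfx)}.
\end{equation*}
Since $S_1 \subseteq S_2$ implies $\kstar{S_1} \subseteq \kstar{S_2}$ (every finite multiset of elements of $S_1$ is also a finite multiset of elements of $S_2$), taking Kleene stars through the chain yields
\begin{equation*}
    \kstar{\solmin{\cQ(\bfx)}}\ \subseteq\ \kstar{\parikh{\cM}}\ \subseteq\ \kstar{\sol{\cQ(\bfx)}}.
\end{equation*}

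Next I would invoke Lemma~\ref{lem:kstar_nfa} on $\cM$ to obtain a matrix $\bfAtilde_\cM$ with $\norm{\bfAtilde_\cM} = 1$ and the correct block shape $(n+t) \times (n+k)$ such that
\begin{equation*}
    \kstar{\parikh{\cM}}\ =\ \sol{\exists \bfy\ (\bfx = \projp{n}{\bfy}) \land (\bfAtilde_\cM \bfy = \bfzero)}.
\end{equation*}
Setting $\bfAtilde := \bfAtilde_\cM$ makes the middle set in the desired sandwich literally equal to $\kstar{\parikh{\cM}}$, and the two inclusions above finish the containment statement.

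The remaining work is purely bookkeeping. The parameters $k$, $t$, and $D$ transfer directly from Lemma~\ref{lem:ilp_to_nfa}; the norm bound on $\bfAtilde$ comes from Lemma~\ref{lem:kstar_nfa}; and the total running time is dominated by the automaton construction, which is $2^{\bigO{\log n + \log K + m^2}}$ by Lemma~\ref{lem:ilp_to_nfa}, plus the $\bigO{kt}$ cost of assembling $\bfAtilde_\cM$ from the transition table of $\cM$ by Lemma~\ref{lem:kstar_nfa}. There is no conceptual obstacle: the genuine content was already absorbed into the automata-theoretic encoding of minimal solutions and the flow-decomposition argument used to characterize the Kleene star of a simple automaton's Parikh image, so this final lemma is essentially a composition step with parameter tracking.
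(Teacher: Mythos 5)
Your proposal is correct and takes precisely the route the paper intends: the paper itself gives no separate proof of this lemma, stating only that it follows by ``combining Lemma~\ref{lem:ilp_to_nfa} and Lemma~\ref{lem:kstar_nfa},'' and your monotonicity-of-Kleene-star sandwich plus the substitution $\bfAtilde := \bfAtilde_\cM$ is exactly that composition. One small bookkeeping caveat worth noting: the parameters do not literally ``transfer directly,'' since Lemma~\ref{lem:ilp_to_nfa} gives $k = n \cdot L \cdot (2D+1)^m + 1$ while both Lemma~\ref{lemma:kleene_minimal} and Lemma~\ref{lem:key} state $k = n \cdot (2D+1)^m + 1$ without the factor of $L$; this looks like a typo in the paper's statements rather than an error in your argument, but it is worth flagging.
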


We are now ready to prove Lemma \ref{lem:key}.
\begin{proof}
    Let $\cQ(\bfx): \bfA\bfx = \bfc$.
    Let $\bfAtilde$ be the matrix obtained by applying Lemma~\ref{lemma:kleene_minimal} to the integer linear system $\cQ(\bfx)$.
    It is routine to verify that the dimension and maximal absolute values of the entries of $\bfAtilde$ satisfy the required bounds, and $\bfAtilde$ can be computed within the desired time complexity.

    Here we show that $\kstar{\sol{\cQ(\bfx)}} = \sol{\cQtilde(\bfx)}$, where
    \begin{equation*}
        \cQtilde(\bfx)\ :=\ 
        \exists \bfy_1, \bfy_2\ 
        (\bfx\ =\ \projp{n}{\bfy_1} + \bfy_2)\ \land\ 
        (\bfAtilde\bfy_1\ =\ \bfzero)\ \land\ 
        (\bfA\bfy_2\ =\ \bfzero)\ \land\ 
        ((\bfy_1=\bfzero) \to (\bfy_2=\bfzero)).
    \end{equation*}

    \uline{$\kstar{\sol{\cQ(\bfx)}} \subseteq \sol{\cQtilde(\bfx)}$}.
    Suppose that $\bfv \in \kstar{\sol{\cQ(\bfx)}}$.
    If $\bfv = \bfzero$, then we can choose $\bfzero$
    as the assignment for $\bfy_1$ and $\bfy_1$.
    Otherwise if $\bfv \neq \bfzero$,
    by definition of Kleene star and minimal elements,
    there exists $\bfu_1, \bfu_2, \ldots, \bfu_k \in \sol{\cQ(\bfx)}$ and
    $\bfu_{1, \min}, \bfu_{2, \min}, \ldots, \bfu_{k, \min} \in \solmin{\cQ(\bfx)}$
    such that $\bfv = \sum_{i \in \intsinterval{k}} \bfu_i$,
    and for $1 \le i \le k$, $(\bfu_i - \bfu_{i, \min}) \in \bbN^n$.
    \begin{itemize}
        \item
            Because $\sum_{i \in \intsinterval{k}} \bfu_{i, \min} \in \kstar{\solmin{\cQ(\bfx)}}$,
            by Lemma~\ref{lemma:kleene_minimal},
            there exists
            $\bfv_1$ such that
            $\sum_{i \in \intsinterval{k}} \bfu_{i, \min} = \projp{n}{\bfv_1}$ and
            $\bfAtilde\bfv_1 = \bfzero$.
        \item
            Let $\bfv_2 = \sum_{i \in \intsinterval{k}} (\bfu_{i} - \bfu_{i, \min}$).
            Observe that
            $\bfA \bfv_2
            = \sum_{i \in \intsinterval{k}} \left(\bfA \bfu_{i} - \bfA \bfu_{i, \min}\right)
            = \bfzero$.
        \item
            Because $\bfc \neq \bfzero$, $\bfzero$ is not a solution of $\cQ(\bfx)$,
            we have $\bfv_1 \neq \bfzero$.
            Hence  $(\bfv_1=\bfzero) \to (\bfv_2=\bfzero)$ holds.
    \end{itemize}
    Since $\bfv = \projp{n}{\bfv_1} + \bfv_2$,
    $\bfv$ is a solution of $\cQtilde$, where the solution
    assigns $\bfy_1$ to $\bfv_1$ and assigns $\bfy_2$ to $\bfv_2$.

    \uline{$\sol{\cQtilde(\bfx)} \subseteq \kstar{\sol{\cQ(\bfx)}}$}.
    Suppose that $\bfv$ is a solution of $\cQtilde(\bfx)$
    by assigning $\bfv_1$  to $\bfy_1$ and assigning $\bfv_2$ to $\bfy_2$.
    \begin{itemize}
        \item
            If $\bfv_1 = \bfzero$, then $\bfv_2 = \bfzero$,
            which implies that
            $\bfv = \projp{n}{\bfv_1} + \bfv_2 = \bfzero$.
            By definition of Kleene star,
            $\bfzero \in \kstar{\sol{\cQ(\bfx)}}$.
        \item
            Otherwise if $\bfv_1 \neq \bfzero$,
            by Lemma~\ref{lemma:kleene_minimal},
            there exists $\bfu_1, \bfu_2, \ldots, \bfu_k \in \sol{\cQ(\bfx)}$ such that
            $\projp{n}{\bfv_1} = \sum_{i \in \intsinterval{k}} \bfu_i$.
            Note that because $\bfA\bfv_2 = \bfzero$, we have
            \begin{equation*}
                \bfA (\bfu_1 + \bfv_2)
                \ =\ \bfA\bfu_1 + \bfA\bfv_2
                \ =\ \bfc + \bfzero
                \ =\ \bfc,
            \end{equation*}
            which implies that $\bfu_1 + \bfv_2$ is a solution of $\cQ(\bfx)$.
            Therefore, we infer
            \begin{equation*}
                \bfv_1 + \bfv_2
                \ =\ (\bfu_1 + \bfv_2) + \sum_{i \in \intinterval{2}{k}} \bfu_i
                \ \in\ \kstar{\sol{\cQ(\bfx)}}.
            \end{equation*}
            The membership in $\kstar{\sol{\cQ(\bfx)}}$
            follows from the fact that $\bfu_1 + \bfv_2,\bfu_2,\ldots,\bfu_k$ are all solutions of $\cQ$.
    \end{itemize}
\end{proof}

\section{From normal form sentences to general sentences} \label{app:normal}

In the main body all our results are proven for sentences in normal form.
In this appendix we will show that they also hold for general sentences.
In particular we will sketch the algorithms that transform an arbitrary ($\GPtwo$/$\Ctwo$) sentence into another sentence in the normal form and recover the properties of the original input sentence from the constructed normal form sentence.
We emphasize that the transformations we provide are a routine adaptation of the textbook transformation to Scott normal form: one can find similar results 
in~\cite{twovarpres}, for the $\GPtwo$ case,
and in~\cite{ctwocomplexity} for the $\Ctwo$ case.

We start with $\GPtwo$ sentences, showing that we can find an equisatisfiable formula in normal form.
Note that in the normal form we allow double universal quantifiers of the form:
$$
\forall x \forall y\ \left(R(x,y) \land x \neq y\right) \to \alpha(x, y)
$$
which can be considered as a shorthand for the  sentence:
$$
\forall x 
\left(
\presby{R(x, y) \land x \neq y\land \neg \alpha(x,y)}
\ =\ 0 
\right)
$$

Recall that the normal form of a $\GPtwo$ sentence has the shape:
\begin{equation*}
    \varphi\ :=\ 
    \forall x\ \gamma(x)\ \land\ 
    \bigwedge_{i \in \intsinterval{m}} \forall x \forall y\ \left(R_i(x,y) \land x \neq y\right) \to \alpha_i(x, y)\ \land\ 
    \bigwedge_{i \in \intsinterval{n}} \forall x\ U_i(x) \to P_i(x),
\end{equation*}
where
\begin{compactitem}
    \item $\gamma(x)$ is a quantifier-free formula,
    \item each $\alpha_i(x, y)$ is a quantifier-free formula,
    \item each $P_i(x)$ is a Presburger quantified formula of the form: 
    \begin{equation*}
        P_i(x)\ :=\ \left(
        \sum_{t \in \intsinterval{m}} \lambda_{i, t} \cdot \presby{R_t(x, y) \land x \neq y}
        \ \circledast_i\ \delta_i \right).
    \end{equation*}
\end{compactitem}

The main idea is as follows.
Let $\psi$ be an arbitrary $\GPtwo$ sentence.
By pushing every negation inward, we may assume that it is in negation normal form,
i.e., every negation is applied only on atomic formulas.
We then repeatedly replace every subformula $\nu(x)$ of one free variable $x$ with a fresh unary predicate $U(x)$, and conjoin the original formula with the ``definition'' of $U(x)$:
$$
\forall x \ U(x) \to \nu(x)
$$
Intuitively what this means is that
the subformula $\nu(x)$ is renamed as a unary predicate $U(x)$.
The algorithm starts bottom-up starting from the subformula $\nu(x)$
with the lowest quantifier rank.
It can be shown that the number of renaming required to obtain the normal form is linear in the length of the original sentence $\psi$~\cite[Lemma 3.3]{twovarpres} and the whole transformation runs in linear time.

Note that unguarded unary quantification is allowed in $\GPtwo$.
For universally-quantified formulas, $\forall x\ U(x)$ is already captured by the normal form.
For existential-quantified $\psi := \exists x\ U(x)$,
we introduce fresh unary predicates $U'$ and binary predicate $R_U$, and replace $\psi$ by 
\begin{equation*}
    \left(
        \forall x\ U(x) \lor U'(x)
    \right)
    \ \land\ 
    \left(
        \forall x\ U'(x) \to \presby{R_U(x, y) \land U(y)} = 1
    \right).
\end{equation*}

Note also that we can recover the model of the original sentence from the model of the constructed normal form sentence.
We simply project out the fresh unary predicates introduced when constructing the normal form sentence. 

Now, we turn to $\Ctwo$ sentences, and recall from the body of the paper the normal form we utilized:
\begin{equation*}
    \varphi\ :=\ 
    \forall x\ \gamma(x)\ \land\ 
    \forall x\ \forall y\ \left(x \neq y \to \alpha(x, y)\right)\ \land\ 
    \bigwedge_{i \in \intsinterval{m'}} \forall x\ \exists^{=k_i} y\ \left(R_i(x,y) \land x \neq y\right),
\end{equation*}
where $\gamma(x)$ and $\alpha(x, y)$ are quantifier-free formulas,
$R_i$ are binary predicates,
and $k_i \in \bbN$.
 
The main idea of the transformation is similar to the $\GPtwo$ case:
Rename every subformula $\nu(x)$ of one free variable $x$ with a fresh unary predicate $U(x)$ and conjunct the original sentence with $\forall x\  U(x)\to \nu(x)$.
It was shown in~\cite{ctwocomplexity,ianrevisited} that there is a linear time algorithm that transforms
an arbitrary $\Ctwo$ sentence $\varphi$ over vocabulary $\vocab$ to a sentence $\varphi^*$ in normal form (with additional unary predicates $V_1,\ldots,V_p$) such that:
\begin{compactitem}
\item 
For every model $\cA\models \varphi$ with domain size at least $1+\max_{i\in[m']}k_i$,
there is a model $\cA^*\models \varphi^*$, where $\cA^*$ is an extension of $\cA$ to the vocabulary $\vocab^*:=\vocab\cup\{V_1,\ldots,V_p\}$ with
$A=A^*$ and $S^{\cA}=S^{\cA^*}$, for every relation symbol $S\in\vocab$.
\item
Conversely,
for every model $\cA^*\models \varphi^*$ with domain size at least $1+\max_{i\in[m']}k_i$,
there is a model $\cA\models \varphi$, where $\cA$ is obtained by projecting out the predicates $V_1,\ldots,V_p$ in $\cA^*$.
\end{compactitem}
Using these two properties, we can generalise Theorem~\ref{thm:spectrasemilinear} to arbitrary $\Ctwo$ sentences.
First, all the sizes of the models up to 
$1+\max_{i\in [m']} k_i$ can be encoded inside the corresponding Presburger formula $\Psi^\varphi(\bfx)$
mentioned in Lemma~\ref{thm:c2_spec}.
Second, we can recover $\pispec$ from $\Pi^*\textsf{-SPEC}$,
where $\Pi$ and $\Pi^*$ are the set of all 1-types w.r.t. $\vocab$ and $\vocab^*$,
via the following identity, for every 1-type $\pi$ over $\vocab$:
$$
x_{\pi} \ = \
\sum_{\pi^*\supseteq \pi \ \text{and}\ \pi^*\ \text{is 1-type over}\ \vocab^*} x_{\pi^*}
$$

\section*{Proof of the small core lemma: lemma \ref{lem:small_core}}
Recall the statement of Lemma \ref{lem:small_core}:

\medskip

    For every $\vocab$-structure $\cG$, for every $\ell \in \bbN$,
    $\cG$ has a $\ell$-core $\cH$ with size at most $(2^{2n+4m+2}) \cdot \ell$.

\medskip

\begin{proof}
    Recall from the body that the intuition behind the construction is to repeatedly select a 1-type or tuple that violates the $\ell$-core condition
    and add the relevant vertices to $\cH$.
    Since there are only finitely many such choices, the process eventually stabilizes,
    and the size of the constructed core remains bounded.

    We formalize this intuition with the following $2^{2n+4m+1}$ round procedure.
    Let $\cH_0$ be an empty $\vocab$-structure.
    For the $i^{th}$ round,
    we select either:
    \begin{compactitem}
        \item
            a 1-type $\pi$ that is realized in $\cG \setminus \cH_{i-1}$ by at most $\ell$ vertices, or

        \item
            a tuple $\tuple{\pi_1, \eta, \pi_2} \in \onetypes \times \twotypes \times \onetypes$ that is realized in $\cG \setminus \cH_{i-1}$ by at most $\ell$ pairs of vertices.
    \end{compactitem}
    We define $\cH_{i}$ as follows:
    \begin{compactitem}
        \item
           If a 1-type $\pi$ is selected, then $\cH_i$ is obtained by extending $\cH_{i-1}$
           with all vertices from $\cG \setminus \cH$ with type $\pi$. 
            Note that there are at most $\ell$ vertices added.
        \item
            If a tuple $\tuple{\pi_1, \eta, \pi_2}$ is selected,
            then $\cH_i$ is obtained by extending  $\cH_{i-1}$ with all vertices from $\cG \setminus \cH$ involved in a pair realizing $\tuple{\pi_1, \eta, \pi_2}$.
            Note that there are at most $2\ell$ vertices added.
        \item Otherwise if there is no such 1-type of tuple, $\cH_i := \cH_{i-1}$.
    \end{compactitem}
    Finally, $\cH := \cH_{2^{2n+4m+1}}$.

    Note that in the $i^{th}$ round, we add at most $2\ell$ vertices to $\cH_i$.
    Thus the size of $\cH$ is bounded by $2^{2n+4m+1} \cdot 2\ell = 2^{2n+4m+2} \cdot \ell$.
    We claim that $\cH$ is an $\ell$-core of $\cG$.
    In the $i^{th}$ round, if a 1-type $\pi$ is selected,
    then for every $j \ge i$,
    no vertex realized $\pi$ remains in $\cG \setminus \cH_j$.
    A similar property holds if a tuple is selected.
    Since there are $2^{n+m}$ possible 1-types and $2^{2n+4m}$ possible tuples,
    after $2^{2n+4m+1} \ge 2^{n+m} + 2^{2n+4m}$ rounds,
    no valid selection remains.
    Hence, $\cH$ is an $\ell$-core of $\cG$.
\end{proof}
\section*{Proof of correctness of the main construction for $\Ctwo$: Lemma \ref{lem:c2_spec_core}} \label{app:ctwoproof}

We recall details of
the construction of a Presburger formula that is supposed to capture the spectrum of a $\Ctwo$ formula.

Recall the definitions of various collections of $2$-types:
$\cKvpnull_{\pi_1, \pi_2} \subseteq \twotypes^\varphi_{\pi_1, \pi_2}$ as the set of 2-types that are both forward- and backward-silent,
$\cKvpf_{\pi_1, \pi_2}$ for 2-types that are not forward-silent but are backward-silent.
In the appendix we also utilize the class
$\cKvpb_{\pi_1, \pi_2}$ for 2-types that are forward-silent but not backward-silent. We let
$\cKvpfb_{\pi_1, \pi_2}$ denote 2-types that are neither forward-nor backward-silent.

Recall that an extended behavior, with respect to a finite set of vertices and a $1$-type, consists of two components, where the first assigns elements in the set to a $2$-type, and the second is a behavior vector. 
Recall also the construction of the formula:

\begin{compactitem}
    \item The variables are $x_\pi$ for $\pi \in \onetypes^\varphi$,
    and $y_{\pi, \tuple{g, \bff}}$ for $\pi \in \onetypes^\varphi$ and $\tuple{g, \bff} \in \cB^\varphi_{\pi, \cH}$.
    Intuitively, $\bfx$ is the 1-type spectrum of the finite model $\cG$ and $y_{\pi, \tuple{g, \bff}}$ is the number of vertices in $\cG \setminus \cH$ with 1-type $\pi$ and extended behavior $\tuple{g, \bff}$.
    \item
    The formula $\summ_{\cH}(\bfx, \bfy)$ asserts that the 1-type spectrum of $\cG$ is the sum of the 1-type spectrum of $\cH$ and 
    the $1$-type spectrum of $\cG \setminus \cH$.
    Let $s_\pi$ be the number of vertices in $\cH$ with 1-type $\pi$. \\
    \begin{notsotiny}
    $\displaystyle\summ_{\cH}(\bfx, \bfy)
         := 
     \bigwedge_{\pi \in \onetypes^\varphi} \left(x_\pi = s_\pi + \sum_{\tuple{g, \bff} \in \cB^{\varphi}_{\pi, \cH}} y_{\pi, \tuple{g, \bff}}\right)$.
    \end{notsotiny}
    \item
    The formula $\comp_{\cH}(\bfy)$ guarantees that the vertices in $\cH$ satisfies the Counting condition of Proposition \ref{prop:ctwo_graph}.
    Let $V_c$ be the set of vertices in $\cH$. \\
    \begin{notsotiny}
    $\displaystyle\comp_{\cH}(\bfy)
         := 
        \bigwedge_{v \in V_c} \left(
            \sum_{u \in V_c \setminus \set{v}} \twotypeof^\trif(v, u) +
            \sum_{\pi \in \onetypes^\varphi} \sum_{\tuple{g, \bff} \in \cB^{\varphi}_{\pi, \cH}} g^\trib(v) \cdot y_{\pi, \tuple{g, \bff}}
        = \bfk^\varphi \right)$.
        \end{notsotiny}
    \item
    The formula $\silent(\bfy)$ assert that 1-types realized in $\cG \setminus \cH$ are silent compatible. \\
    \begin{notsotiny}
    $\displaystyle\silent_{\cH}(\bfy)
         := 
        \bigwedge_{\substack{\pi_1, \pi_2 \in \onetypes^\varphi \\ \text{are not silent compatible}}}
        \left(\sum_{\tuple{g, \bff} \in \cB^{\varphi}_{\pi_1, \cH}} y_{\pi_1, \tuple{g, \bff}} = 0\right) \lor
        \left(\sum_{\tuple{g, \bff} \in \cB^{\varphi}_{\pi_2, \cH}} y_{\pi_2, \tuple{g, \bff}} = 0\right)$.
        \end{notsotiny}
    \item
    The formula $\bige_{\cH}(\bfy)$ guarantees that 1-types and tuples realized in $\cG \setminus \cH$ satisfy the requirements of a core. \\
    \begin{notsotiny}
    $
    \begin{aligned}
        \bige_{\cH}(\bfy)
        \ :=\ &
        \bigwedge_{\pi \in \onetypes^\varphi}
            \exists z\ 
            \left(
                z = \sum_{\tuple{g, \bff} \in \cB^\varphi_{\pi, \cH}} y_{\pi, \tuple{g, \bff}}
            \right) \land
            \left(
                z = 0  \lor
                z \ge (2K^\varphi+1)^2 
            \right)
        \land \\
        &\bigwedge_{\substack{
            \pi_1, \pi_2 \in \onetypes^\varphi \\ 
            \eta \in \cKvpfb_{\pi_1, \pi_2}}}
        \exists z\ 
        \left(z = \sum_{\tuple{g, \bff} \in \cB^\varphi_{\pi_1, \cH}} \bff(\eta, \pi_2) \cdot y_{\pi_1, \tuple{g, \bff}}\right) \land
        \left(
            z = 0 \lor
            z \ge (2K^\varphi+1)^2
            \right).
        \end{aligned}
    $
    \end{notsotiny}
    \item
        The formula $\match_{1, \cH}(\bfy)$ encodes the edge matching condition for audible 2-types and vertices in $\cG \setminus \cH$. \\
        \begin{notsotiny}
        $\displaystyle
        \match_{1, \cH}(\bfy)
         := 
        \bigwedge_{\substack{
            \pi_1, \pi_2 \in \onetypes^\varphi \\
            \eta \in \cKvpfb_{\pi_1, \pi_2}}}
        \left(
            \sum_{\tuple{g, \bff} \in \cB^{\varphi}_{\pi_1, \cH}} \bff(\eta, \pi_2) \cdot y_{\pi_1, \tuple{g, \bff}}
            =
            \sum_{\tuple{g, \bff} \in \cB^{\varphi}_{\pi_2, \cH}} \bff(\dual{\eta}, \pi_1) \cdot y_{\pi_2, \tuple{g, \bff}}
        \right)
        $.
        \end{notsotiny}

    \item
        The formula $\match_{2, \cH}(\bfy)$ encodes the edge matching condition for backward-silent but not forward-silent 2-types and vertices in $\cG \setminus \cH$. \\
        \begin{notsotiny}
        $\displaystyle
        \match_{2, \cH}(\bfy)
         := 
        \bigwedge_{\substack{
            \pi_1, \pi_2 \in \onetypes^\varphi \\
            \eta \in \cKvpf_{\pi_1, \pi_2}}}
        \left(
            \left(
                \sum_{\tuple{g, \bff} \in \cB^{\varphi}_{\pi_1, \cH}} \bff(\eta, \pi_2) \cdot y_{\pi_1, \tuple{g, \bff}} > 0
            \right)
            \to
            \left(
                \sum_{\tuple{g, \bff} \in \cB^{\varphi}_{\pi_2, \cH}} y_{\pi_2, \tuple{g, \bff}} > 0
            \right)
        \right)
        $.
        \end{notsotiny}
        
        Keep in mind that we will always be considering solutions in the natural numbers. So this implication could be rewritten without summation: if for one of the extended behaviors for this triple, $\bff(\eta, \pi_2) \cdot y_{\pi_1, \tuple{g, \bff}} > 0$, then  one of the numbers $y_{\pi_2, \tuple{g, \bff}} > 0$.
\end{compactitem}
    Finally, $\displaystyle
        \Psi^\varphi_{\cH}(\bfx)
             :=
             \exists \bfy\ 
            \summ_{\cH}(\bfx, \bfy) \land
            \comp_{\cH}(\bfy) \land
            \silent_{\cH}(\bfy) \land
            \bige_{\cH}(\bfy) \land
            \bigwedge_{i=1, 2}
            \match_{i, \cH}(\bfy)$.
            
The correctness of this construction was stated in Lemma \ref{lem:c2_spec_core}:

\medskip

    For every $\Ctwo$ sentence $\varphi$ and partial model $\cH$ of $\varphi$,
    a vector $\bfv$ is a solution of $\Psi^\varphi_\cH(\bfx)$
    if and only if
    there exists a finite model $\cG$ of $\varphi$ such that
    the 1-type cardinality vector of $\cG$ is $\bfv$,
    and $\cH$ is a $(2K^\varphi+1)^2$-core of $\cG$.

\medskip

We now provide the proof.

\begin{proof}
    \textbf{\uline{There exists a finite model $\cG$ of $\varphi$ such that the 1-type cardinality vector of $\cG$ is $\bfv$ and $\cH$ is a $(2K^\varphi + 1)^2$-core of $\cG$}}.
    Let $a_{\pi}$ be the number of vertices in $\cG$ with 1-type $\pi$, and
    $b_{\pi, \tuple{g, \bff}}$ be the number of vertices in $\cG \setminus \cH$ with 1-type $\pi$ and extended behavior $\tuple{g, \bff}$.
    We claim that $a_\pi$ is a solution of $\Psi^\varphi_{\cH}(\bfx)$ with $b_{\pi, \tuple{g, \bff}}$ as the corresponding assignment to $\bfy$.
    Note that $a_\pi$ is the $\pi^{th}$ component the of 1-type cardinality vector of $\cG$,
    and
    $\sum_{\tuple{g, \bff} \in \cB^\varphi_{\pi, \cH}} b_{\pi, \tuple{g, \bff}}$ vertices in $\cG \setminus \cH$ with 1-type $\pi$.
    Let $V$ be the set of vertices in $\cG$ and $V_c \subseteq V$ be the set of vertices in $\cH$.
    \begin{itemize}
        \item ($\summ_{\cH}(\bfx, \bfy)$)
            For every $\pi$, the number of vertices with 1-type $\pi$ in $\cG$
            is the sum of the number of vertices with 1-type $\pi$ in $\cH$ and $\cG \setminus \cH$.
            Recall that $s_\pi$ is the number of vertices with 1-type $\pi$ in $\cH$.
            Thus it holds that
            \begin{equation*}
                a_\pi\ =\ s_\pi + \sum_{\tuple{g, \bff} \in \cB^\varphi_{\pi, \cH}} y_{\pi, \tuple{g, \bff}}.
            \end{equation*}

        \item ($\comp_{\cH}(\bfy)$)
        For every vertex $v$ in $\cH \subseteq \cG$,
        because $\cG$ is a model of $\varphi$, $v$ satisfies the counting condition of Proposition~\ref{prop:ctwo_graph},
        \begin{equation*}
            \sum_{u \in V \setminus \set{v}} \twotypeof^\trif(v, u)
                \ =\ 
            \bfk^\varphi.
        \end{equation*}
        Therefore, we have
        \begin{equation*}
            \begin{aligned}
                &
                \sum_{u \in V_c \setminus \set{v}} \twotypeof^\trif(v, u) +
                \sum_{\pi \in \onetypes^\varphi} \sum_{f \in \cB^\varphi_{\pi, \cH}} g(v)^\trib \cdot b_{\pi, f} \\
                \ =\ &
                \sum_{u \in V_c \setminus \set{v}} \twotypeof^\trif(v, u) +
                \sum_{u \in V \setminus V_c} [\bh{u}]^\trib(v) \\
                \ =\ &
                \sum_{u \in V_c \setminus \set{v}} \twotypeof^\trif(v, u) +
                \sum_{u \in V \setminus V_c} \twotypeof^\trib(u, v) \\
                \ =\ &
                \sum_{u \in V \setminus \set{v}} \twotypeof^\trif(v, u)
                \ =\ 
                \bfk^\varphi.
            \end{aligned}
        \end{equation*}

        \item ($\silent_{\cH}(\bfy)$)
            Because $\cH$ is a $(2K^\varphi+1)^2$-core of $\cG$,
            by Lemma~\ref{lem:mutually_null_compatible},
            each pair of 1-types $\pi_1$ and $\pi_2$ realized in $\cG \setminus \cH$ is silent compatible.
            Thus the constraint holds for $\bfb_{\pi, \tuple{g, \bff}}$.

        \item ($\bige_{\cH}(\bfy)$)
            Because $\cH$ is a $(2K^\varphi+1)^2$-core of $\cG$,
            by the definition of a core,
            each 1-type $\pi$ realized in $\cG \setminus \cH$
            is realized by at least $(2K^\varphi+1)^2$ vertices in $\cG \setminus \cH$.
            The analogous condition holds for a tuple of 1- and 2-types.
            Thus the constraint holds for $\bfb_{\pi, \tuple{g, \bff}}$.

        \item ($\match_{1, \cH}(\bfy)$)
            For every tuple $\tuple{\pi_1, \eta, \pi_2} \in \onetypes^\varphi \times \cKvpfb_{\pi_1, \pi_2} \times \onetypes^\varphi$,
            for every pair of vertices $v$ and $u$ in $V \setminus V_c$
            with $\onetypeof(v) = \pi_1$ and $\onetypeof(u) = \pi_2$,
            if $\twotypeof(v, u) = \eta$,
            then $\twotypeof(u, v) = \dual{\eta}$.
            Thus we have
            \begin{equation*}
                \sum_{\tuple{g, \bff} \in \cB^\varphi_{\pi_1, \cH}} \bff(\eta, \pi_2) \cdot b_{\pi_1, \tuple{g, \bff}}
                =
                \sum_{\tuple{g, \bff} \in \cB^\varphi_{\pi_2, \cH}} \bff(\dual{\eta}, \pi_1) \cdot b_{\pi_2, \tuple{g, \bff}}.
            \end{equation*}

        \item ($\match_{2, \cH}(\bfy)$)
            For every tuple $\tuple{\pi_1, \eta, \pi_2} \in \onetypes^\varphi \times \cKvpf_{\pi_1, \pi_2} \times \onetypes^\varphi$,
            if there exists a pair of vertices $v$ and $u$ in $V \setminus V_c$
            with $\onetypeof(v) = \pi_1$, $\onetypeof(u) = \pi_2$,
            and $\twotypeof(v, u) = \eta$,
            then
            \begin{equation*}
                \begin{aligned}
                    \sum_{\tuple{g, \bff} \in \cB^\varphi_{\pi_1, \cH}} \bff(\eta, \pi_2) \cdot b_{\pi_1, \tuple{g, \bff}}\ >\ &0 \\
                    \sum_{\tuple{g, \bff} \in \cB^\varphi_{\pi_2, \cH}} b_{\pi_2, \tuple{g, \bff}}\ >\ &0.
                \end{aligned}
            \end{equation*}
            Thus the condition holds.
            Otherwise, if there is no such pair, then
            \begin{equation*}
                 \sum_{\tuple{g, \bff} \in \cB^\varphi_{\pi_1, \cH}} \bff(\eta, \pi_2) \cdot b_{\pi_1, \tuple{g, \bff}}\ =\ 0.   
            \end{equation*}
            The condition holds trivially.
    \end{itemize}

    \begin{figure}[!ht]
        \begin{center}
            \begin{tikzpicture}
                \draw (0,0) ellipse (4cm and 1cm);
                \node at (-3, 1) {\large $\cH$}; 

                \node[circle,fill=blue,inner sep=0pt,minimum size=3pt,label=below:{\scriptsize $u_1$}] (v1) at (-3, 0) {};
                \node[circle,fill=blue,inner sep=0pt,minimum size=3pt,label=below:{\scriptsize $u_2$}] (v2) at (-1, 0) {};
                \node[circle,fill=blue,inner sep=0pt,minimum size=3pt,label=below:{\scriptsize $u_3$}] (v3) at ( 1, 0) {};
                \node at ( 2, 0) {$\ldots$}; 

                \draw (-3.5, -2) ellipse (3cm and 0.8cm);
                \node at (-6.5,-2.5) {\large $V_{\pi_1}$}; 

                \node at (-5.7,-2) {\scriptsize $V_{\pi_1, \tuple{g_1, \bff_1}}$};
                \node at (-4.2,-2) {\scriptsize $V_{\pi_1, \tuple{g_2, \bff_2}}$};
                \node at (-2.7,-2) {\scriptsize $V_{\pi_1, \tuple{g_3, \bff_3}}$};
                \node at (-1.5,-2) {$\ldots$}; 
                \draw[red] (-5,  -1.3) -- (-5,  -2.7);
                \draw[red] (-3.5,-1.2) -- (-3.5,-2.8);
                \draw[red] (-2,  -1.3) -- (-2,  -2.7);

                \node[circle,fill=blue,inner sep=0pt,minimum size=3pt,label=below:{\scriptsize $v$}] (u) at (-4.3, -1.5) {};

                \draw[gray!40] (v1) --  node[color=black,above=-2,sloped] {\scriptsize $g_2(u_1)$} (u);
                \draw[gray!40] (v2) --  node[color=black,above=-2,sloped] {\scriptsize $g_2(u_2)$} (u);
                \draw[gray!40] (v3) --  node[color=black,above=-2,sloped] {\scriptsize $g_2(u_3)$} (u);

                \draw (2, -2) ellipse (2cm and 0.8cm);
                \node at (0,-2.5) {\large $V_{\pi_2}$}; 

                \node at (5,-2) {\large $\ldots$}; 
            \end{tikzpicture}
        \end{center}
        \caption{An illustration of the model construction extending the partial model $\cH$  for the proof of Theorem~\ref{lem:c2_spec_core}.}
        \label{fig:construct}
    \end{figure}
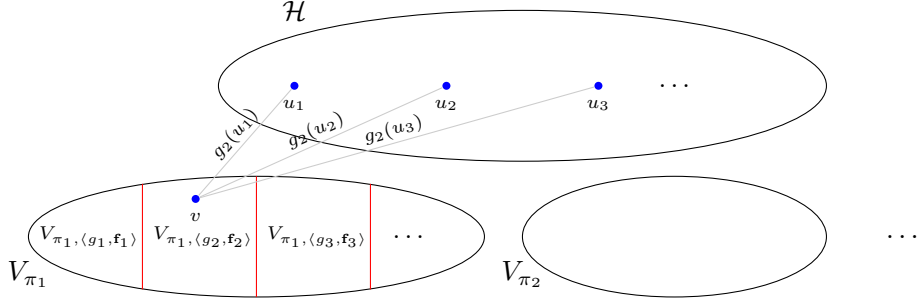

    \textbf{\uline{Suppose that $\bfv \in \sol{\Psi^\varphi_\cH(\bfx)}$}}.
    For every $\pi \in \onetypes^\varphi$ and
    $\tuple{g, \bff} \in \cB^\varphi_{\pi, \cH}$,
    let $a_\pi$ be the $\pi^{th}$ component of $\bfv$ and
    $b_{\pi, \tuple{g, \bff}}$ be the corresponding assignment to $\bfy$.
    Our goal is to construct a finite model $\cG$ of $\varphi$ by extending the partial model $\cH$.
    The construction is illustrated in Figure~\ref{fig:construct}.
    We extend the partial model $\cH$ by introducing sets of fresh vertices $V_{\pi, \tuple{g, \bff}}$,
    where the cardinality of each set is $b_{\pi, \tuple{g, \bff}}$.
    All vertices in $V_{\pi, \tuple{g, \bff}}$ are assigned the 1-type $\pi$.
    Let $V_{\pi}$ be the union of $V_{\pi, \tuple{g, \bff}}$ for every $\tuple{g, \bff}$.

    We handle edge coloring using a four-step procedure which we now outline.
    Firstly, we
    color edges between the core $\cH$ and the relative complement of the core $\cG \setminus \cH$.
    In steps two and three,
    we iterate over all pairs of 1-types $\pi_1$ and $\pi_2$ realized in the relative complement of the core $\cG \setminus \cH$ and color edges between $V_{\pi_1}$ and $V_{\pi_2}$ as follows:
    \begin{itemize}
        \item In step 2, we color edges with 2-types that are neither forward-silent nor backward-silent.
        \item In step 3, we color edges with 2-types that are not forward-silent,  but which are backward-silent.
    \end{itemize}
    We mention that the following invariant will be maintained during the iteration described above:

    \medskip

    For every $\pi \in \onetypes^\varphi$ and $\tuple{g, \bff} \in \cB^\varphi_{\pi, \cH}$,
    for every vertex $v$ in $V_{\pi, \tuple{g, \bff}}$,
    for every $\pi' \in \onetypes^\varphi$ and $\eta \in \cKvpf_{\pi, \pi'} \cup \cKvpfb_{\pi, \pi'}$,
    the number of colored edges from $v$ to $V_{\pi'}$ with 2-type $\eta$ is either 0 or $\bff(\eta, \pi')$.

    \medskip

    After the iteration, in step 4 we handle the remaining edges with 2-types that are both forward- and backward-silent.
    
    We now provide the details of the procedure.
    \begin{enumerate}
        \item 
            For every $v$ in $\cG \setminus \cH$  and $u$ in $\cH$,
            we add an edge between $v$ and $u$ with 2-type $g(u)$.
        \item
            For every $\pi_1, \pi_2 \in \onetypes^\varphi$,
            we assign audible 2-types in $\cKvpfb_{\pi_1, \pi_2}$ and $\cKvpfb_{\pi_2, \pi_1}$
            between $V_{\pi_1}$ and $V_{\pi_2}$.
            We first assume that parallel edges are allowed,
            Because $b_{\pi, \tuple{g, \bff}}$ is a solution of $\match_{1, \cH}(\bfy)$,
            it is straightforward to assign (parallel) edges between $V_{\pi_1}$ and $V_{\pi_2}$ 
            such that
            \begin{itemize}
            \item
                for every $v \in V_{\pi_1, \tuple{g, \bff}}$, for every $\eta \in \cKvpfb_{\pi_1, \pi_2}$,
                the number of edges from $v$ to $V_{\pi_2}$ with 2-type $\eta$ is exactly $\bff(\eta, \pi_2)$.
            \item
                for every $v \in V_{\pi_2, \tuple{g, \bff}}$, for every $\eta \in \cKvpfb_{\pi_2, \pi_1}$,
                the number of edges from $v$ to $V_{\pi_1}$ with 2-type $\eta$ is exactly $\bff(\eta, \pi_1)$.
            \end{itemize}

            To eliminate parallel edges in the above construction,
            we again apply a swapping procedure, as in \cite{tonyeryk,ultimatelyperiodic}.
            Note that each vertex $v \in V_{\pi_1} \cup V_{\pi_2}$ is adjacent to at most $K^\varphi$ edges.
            Suppose that there are parallel edges $e_1$ and $e_2$
            between $v_1 \in V_{\pi_1}$ and $v_2 \in V_{\pi_2}$ with the 2-type of $e_1$ being $\eta_1$ and
            the $2$-type of $e_2$ being $\eta_2$.
            Let $u_1 \in V_{\pi_1}$ and $u_2 \in V_{\pi_2}$ be vertices such that
            there is an edge $e_3$ between them with 2-type $\eta_2$,
            $u_1$ is not adjacent to $v_2$, and
            $u_2$ is not adjacent to $v_1$.
            Because $b_{\pi, \tuple{g, \bff}}$ is a solution of $\bige_{\cH}(\bfy)$,
            there are at least $(2K^\varphi+1)^2$ edges from $V_{\pi_1}$ to $V_{\pi_2}$ with 2-type $\eta_2$.
            Since each of $v_1$ and $v_2$ has at most $K^\varphi$ adjacent edges,
            it is always possible to find such $u_1$ and $u_2$.
            We then remove the edges $e_2$ and $e_3$,
            and replace them with fresh edges $(v_1, u_2)$ and $(v_2, u_1)$ with 2-type $\eta_2$.
            This swapping procedure preserves the number of edges connected to vertices with specific 2-types
            but reduces the number of parallel edges by 1,
            We can repeat this until all parallel edges are eliminated.

        \item
            For every $\pi_1, \pi_2 \in \onetypes^\varphi$,
            we assign backward-silent but not forward-silent 2-types in $\cKvpf_{\pi_1, \pi_2}$ and $\cKvpf_{\pi_2, \pi_1}$
            between $V_{\pi_1}$ and $V_{\pi_2}$.
            Again, we assume first that parallel edges are allowed,
            Because $b_{\pi, \tuple{g, \bff}}$ is a solution of $\match_{2, \cH}(\bfy)$,
            if $\sum_{\tuple{g, \bff} \in \cB^\varphi_{\pi_1, \cH}} \bff(\eta, \pi_2) \cdot b_{\pi_1, \tuple{g, \bff}} > 0$,
            then $\sum_{\tuple{g, \bff} \in \cB^\varphi_{\pi_2, \cH}} b_{\pi_2, \tuple{g, \bff}} > 0$,
            which implies that $V_{\pi_2}$ is not empty.
            Thus we can assign (parallel) edges from $V_{\pi_1}$ and $V_{\pi_2}$ satisfying that
            \begin{itemize}
            \item
                for every $v \in V_{\pi_1, \tuple{g, \bff}}$, for every $\eta \in \cKvpf_{\pi_1, \pi_2}$,
                the number of edges from $v$ to $V_{\pi_2}$ with 2-type $\eta$ is exactly $\bff(\eta, \pi_2)$.
            \end{itemize}
            A similar argument applies to edges from $V_{\pi_2}$ to $V_{\pi_1}$.
            \begin{itemize}
            \item
                for every $v \in V_{\pi_2, \tuple{g, \bff}}$, for every $\eta \in \cKvpf_{\pi_2, \pi_1}$,
                the number of edges from $v$ to $V_{\pi_1}$ with 2-type $\eta$ is exactly $\bff(\eta, \pi_1)$.
            \end{itemize}

            We apply the swapping procedure to eliminate parallel edges as well.
            Suppose that there are parallel edges $e_1$ and $e_2$
            between $v_1 \in V_{\pi_1}$ and $v_2 \in V_{\pi_2}$ with 2-types $\eta_1 \in \cKvpf_{\pi_1, \pi_2}$ for $e_1$ and $\eta_2$ for $e_2$.
            We consider two cases:
            \begin{itemize}
            \item 
                If there is a vertex $u_2 \in V_{\pi_2}$ such that there is no edges between $v_1$ and $u_2$,
                then we remove $e_1$ and add an edge between $v_1$ and $u_2$ with 2-types $\eta_1$.
            \item 
                Otherwise, suppose that there is no such $u_2$,
                which implies that for every vertex $u \in V_{\pi_2}$,
                there exists an edge $e_3$ between $v_1$ and $u$.
                Because there are at most $2K^\varphi(2K^\varphi+1)^2$ edges
                between $V_{\pi_1}$ and $V_{\pi_2}$,
                we can find $u_1 \in V_{\pi_1}$ and $u_2 \in V_{\pi_2}$
                such that no edges exists between them,
                and there exists an edge $e_3$ between $v_1$ and $u_2$ with type $\eta_3 \in \cKvpb_{\pi_1, \pi_2}$. 
                We then remove the edges $e_1$ and $e_3$
                and fresh edges $(v_1, u_2)$ with 2-type $\eta_1$ and $(v_2, u_2)$ with 2-type $\eta_3$.
            \end{itemize}
            For both cases, the swapping procedure preserves the number of edges connected to vertices with specific 2-types
            but reduces the number of parallel edges by 1,
            We can repeat it until all parallel edges are eliminated.

        \item
            Finally, for every pair of vertices $v_1 \in V_{\pi_1}$ and $v_2 \in V_{\pi_2}$ 
            with no edge connected,
            we add an edge between them with a 2-type  $\eta \in \cKvpnull_{\pi_1, \pi_2}$.
    \end{enumerate}

    We first claim that the 1-type cardinality vector of $\cG$ is exactly $\bfv$.
    For every $\pi \in \onetypes^\varphi$,
    recall that the number of vertices in $\cH$ with 1-type $\pi$ is $s_\pi$.
    The number of vertices in $\cG \setminus \cH$ with 1-type $\pi$ is
    $\abs{V_\pi} = \sum_{\tuple{g, \bff} \in \cB^\varphi_{\pi, \cH}} \abs{V_{\pi, \tuple{g, \bff}}} = \sum_{\tuple{g, \bff} \in \cB^\varphi_{\pi, \cH}} b_{\pi, \tuple{g, \bff}}$.
    Since $a_\pi$ and $b_{\pi, \tuple{g, \bff}}$ is a solution of $\summ_{\cH}(\bfx, \bfy)$,
    we have $a_\pi = s_\pi + \sum_{f \in \cB^\varphi_{\pi, \cH}} b_{\pi, \tuple{g, \bff}}$.
    Recall that $a_\pi$ is the $i^{th}$ component of $\bfv$, where $i$ is the index of $\pi$ in the  ordering of $\onetypes^\varphi$.
    Thus the 1-type cardinality vector of $\cG$ is exactly $\bfv$.

    We next claim that $\cG$ is a model of $\varphi$, by checking the conditions in Proposition~\ref{prop:ctwo_graph}.
    It is routine to check that all vertices and edges in $\cG$ satisfy the 1- and 2-type conditions.
    Thus it is sufficient to check that the counting condition holds for every vertex $v$ in $\cG$.
    \begin{itemize}
        \item
        For every $v$ in $\cH$, note that
        \begin{equation*}
            \sum_{u \in V_{\pi, \tuple{g, \bff}}} \twotypeof^\trif(v, u) 
            \ =\ 
            \sum_{u \in V_{\pi, \tuple{g, \bff}}} g^\trib(v)
            \ =\ 
            g^\trib(v) \cdot \abs{V_{\pi, \tuple{g, \bff}}}
            \ =\ 
            g^\trib(v) \cdot b_{\pi, \tuple{g, \bff}}.
        \end{equation*}
        Because $b_{\pi, \tuple{g, \bff}}$ is a solution of $\comp_{\cH}(\bfy)$, we have
        \begin{equation*}
            \begin{aligned}
                \sum_{u \in V \setminus \set{v}} \twotypeof^\trif(v, u)
                \ =\ &
                \sum_{u \in V_c \setminus \set{v}} \twotypeof^\trif(v, u) +
                \sum_{\pi \in \onetypes^\varphi}\sum_{\tuple{g, \bff} \in \cB^\varphi_{\pi, \cH}}\sum_{u \in V_{\pi, \tuple{g, \bff}}} \twotypeof^\trif(v, u) \\
                \ =\ &
                \sum_{u \in V_c \setminus \set{v}} \twotypeof^\trif(v, u) +
                \sum_{\pi \in \onetypes^\varphi}\sum_{\tuple{g, \bff} \in \cB^\varphi_{\pi, \cH}} g^\trib(v)\cdot b_{\pi, \tuple{g, \bff}} \\
                \ =\ &
                \bfk^\varphi
            \end{aligned}
        \end{equation*}

        \item
        For every $v \in V_{\pi, \tuple{g, \bff}}$,
        note that the extended behavior of $v$ is $\tuple{g, \bff}$.
        Since $\tuple{g, \bff} \in \cB^\varphi_{\pi, \cH}$,
        by the compatibility condition in Definition~\ref{def:compatextended}, it holds that
        \begin{equation*}
            \begin{aligned}
                \sum_{u \in V \setminus \set{v}} \twotypeof^\trif(v, u)
                \ =\ &
                \sum_{u \in V_c} \twotypeof^\trif(v, u) +
                \sum_{u \in V \setminus (V_c \cup \set{v})} \twotypeof^\trif(v, u) \\
                \ =\ &
                \sum_{u \in V_c} g^\trif(u) + 
                \sum_{\pi' \in \onetypes^\varphi}
                \left(
                    \sum_{\eta \in \cKvpf_{\pi, \pi'}} \bff(\eta, \pi') \cdot \eta^\trif +
                    \sum_{\eta \in \cKvpfb_{\pi, \pi'}} \bff(\eta, \pi') \cdot \eta^\trif
                \right) \\
                \ =\ &\bfk^\varphi
            \end{aligned}
        \end{equation*}
    \end{itemize}
\end{proof}

\section{Extensions to modulus constraints}

Recall that in the body of the paper, cardinality constraints, either within $\GPtwo$ or the global cardinality constraints added to $\Ctwo$, did not constrain the modulus of a count. So, for example, in $\GPtwo$ we could not state that the number of neighbors of an element with a given property is even. And similarly, we could not add to $\Ctwo$ a  constraint that the total number of elements in a model satisfying a formula is an even number. We stated in the body that we can add such modulus constraints and obtain the same results.

In this appendix we will first indicate how to extend the satisfiability result for $\GPtwo$ to the case when existential quantifiers are allowed in the Presburger quantifiers, i.e., when the formula $P(x)$ is of the form:
\begin{equation*}
P_i(x)\ :=\ 
\exists x_1 \cdots \exists x_n\ \sum_{t \in \intsinterval{m}} \lambda_{t} \cdot \presby{R_t(x, y) \land x \neq y}
\ + \ \sum_{t\in \intsinterval{n}} \lambda_t'\cdot x_t
\ \circledast_i\ \delta.
\end{equation*}
Note that the existential variables captures a modulus equation, as $a \equiv b \bmod p$ can be rewritten as:
$$
\exists x_1\exists x_2\
a+x_1p = b+x_2p
$$

We claim that Theorem~\ref{thm:gptwoexp} still holds under such a generalization.
The proof is basically the same.
The only difference is that in the definition of characteristic system
$\cC^\varphi_\pi(\bfx)$
defined in Definition~\ref{def:characteristic-system} the variables $\bfx$ include all the existential variables.
Similarly for the system $\cQ^\varphi(\bfx)$
defined in Definition~\ref{def:linear-system-Q}.
Then, we observe that
Lemmas~\ref{lem:equiv1},~\ref{lem:equiv2}, and~\ref{lem:equiv3} still hold for such systems
and the satisfiability problem of $\GPtwo$ stays in $\exptime$.

For the extension of $\Ctwo$ with global cardinality and modulus constraints, the situation is more obvious. We have already shown how to get a (modulus-constraint free) semilinear representation for the type spectrum vectors. We can compose this with modulus constraints in addition to linear arithmetic constraints and solve using standard techniques for Presburger arithmetic. 
Rewriting modulus constraint into a single linear equation does not change the number of constraints in each disjunct in the formula $\Psi^{\varphi}(\bfx)$.
We emphasize again that the argument here is completely generic, and the same comment holds to show that for \emph{any} decidability extension of Presburger arithmetic (e.g. B\"uchi Arithemetic or Semenov Arithmetic \cite{semenovpresexp}), $\Ctwo$ with global cardinality constraints of that form is decidable.

\section*{Expressiveness separation of the considered languages}
We mentioned that the languages we consider in this paper -- $\GPtwo$ and $\Ctwo$ with global cardinality constraints -- are incomparable in expressiveness.

$\GPtwo$  contains sentences not in $\Ctwo$ with constraints, because (as we showed in the paper) the spectrum of GP2 may not be semilinear, while for $\Ctwo$ with constraints the specctrum is always semilinear.

On the other hand, even ordinary $\Ctwo$ contains sentences not in $\GPtwo$: just consider the sentence restricting the cardinality of the model to some fixed number, like $2$. In $\GPtwo$ every sentence that is consistent has models of unbounded finite cardinality, since we can always clone a connected component without disturbing the truth value of any sentence.

\end{document}